\newtheorem{thm}{Theorem}[section]
\newtheorem{cor}[thm]{Corollary}
\newtheorem{lemma}[thm]{Lemma}
\newtheorem{prop}[thm]{Proposition}
\theoremstyle{definition}
\newtheorem{defn}[thm]{Definition}
\theoremstyle{remark}
\newtheorem{example}[thm]{Example}
\renewcommand{\b}{\beta}
\newcommand{\e}{\epsilon}
\renewcommand{\l}{\lambda}
\newcommand{\s}{\sigma}
\newcommand{\fr}[2]{\left[\frac{#1}{#2}\right]}
\newcommand{\PB}{\left\{\cdot\,,\cdot\right\}}
\newcommand{\Pb}[1]{\left\{\cdot\,,#1\right\}}
\newcommand{\pb}[1]{\left\{#1\right\}}
\renewcommand{\(}{\left(}
\renewcommand{\)}{\right)}
\renewcommand{\[}{\left[}
\renewcommand{\]}{\right]}
\newcommand{\set}[1]{\left\{#1\right\}}
\newcommand{\cD}{\mathcal D}
\newcommand{\D}[1]{\cD_{#1}}
\newcommand{\Dp}[1]{\cD_{(#1)}}
\newcommand{\X}{\mathcal X}
\newcommand{\cR}{\mathcal R}
\newcommand{\cS}{\mathcal S}
\newcommand{\bbC}{\mathbb C}
\newcommand{\bbN}{\mathbb N}
\newcommand{\bbR}{\mathbb R}
\newcommand{\bbZ}{\mathbb Z}
\newcommand{\F}[1]{F^{(#1)}}
\newcommand{\B}[1]{B^{(#1)}}
\newcommand{\nk}{{(n,k)}}
\newcommand{\nkb}{{(n,k),b}}
\newcommand{\um}{{\underline m}}
\newcommand{\LV}[2]{\hbox{LV}(#1,#2)}
\newcommand{\LVB}[2]{\hbox{LV}_b(#1,#2)}
\newcommand{\ds}{\displaystyle}
\newcommand{\leqs}{\leqslant}
\newcommand{\geqs}{\geqslant}
\newcommand{\pp}[2]{\frac{\partial#1}{\partial#2}}
\newcommand{\p}{\partial}
\newcommand{\diff}{{\rm d }}
\newcommand{\Id}{\mathop{\rm Id}}
\renewcommand{\geq}{\geqs}
\renewcommand{\leq}{\leqs}
\newif\ifprivate
 \numberwithin{equation}{section}
\def\???{\ifprivate {\bf {???}} \marginpar{{\Huge {\bf ?}}}\else \fi}
\numberwithin{equation}{section}
\begin{document}

\nocite{*}

\title[Integrable reductions of the dressing chain]{Integrable reductions of the dressing chain}

\author[C. A. Evripidou]{C. A. Evripidou} \thanks{The research of the first author was supported by the project
  ``International mobilities for research activities of the University of Hradec Kr\'alov\'e'',
  CZ.02.2.69/0.0/0.0/16\_027/0008487} \address{Charalampos Evripidou, Department of Mathematics, Faculty of
  Science, University of Hradec Kralove, Czech Republic} \email{charalambos.evripidou@uhk.cz}

\author[P. Kassotakis]{P. Kassotakis}
\address{Pavlos Kassotakis, Department of Mathematics and Statistics, University of Cyprus, P.O.~Box 20537, 1678 Nicosia, Cyprus}
\email{pavlos1978@gmail.com}

\author[P. Vanhaecke]{P. Vanhaecke} \address{Pol Vanhaecke, Universit\'e de Poitiers, Laboratoire de
  Math\'ematiques et Applications, UMR 7348 du CNRS, B\^at.\ H3, Boulevard Marie et Pierre Curie, Site du Futuroscope,
  TSA 61125, 86073 POITIERS Cedex 9}\email{pol.vanhaecke@math.univ-poitiers.fr}

%

\date{\today}
\subjclass[2010]{53D17, 70H06}

\keywords{Integrable systems, deformations, discretizations}

\begin{abstract}
In this paper we construct a family of integrable reductions of the dressing chain, described in its Lotka-Volterra
form. For each $k,n\in\bbN$ with $n\geqs 2k+1$ we obtain a Lotka-Volterra system $\LVB nk$ on~$\bbR^n$ which is a
deformation of the Lotka-Volterra system $\LV nk$, which is itself an integrable reduction of the $2m+1$-dimensional
Bogoyavlenskij-Itoh system $\LV{2m+1}m$, where $m=n-k-1$. We prove that $\LVB nk$ is both Liouville and
non-commutative integrable, with rational first integrals which are deformations of the rational first integrals of
$\LV{n}{k}$. We also construct a family of discretizations of $\LVB n0$, including its Kahan discretization, and we
show that these discretizations are also Liouville and superintegrable.
\end{abstract}

\maketitle

\setcounter{tocdepth}{2}

\tableofcontents

\section{Introduction}
The dressing chain is an integrable Hamiltonian system, which was constructed in \cite{VS} as a fixed point of
compositions of Darboux transformations of the Schr\"odinger operator. It was shown in \cite{RCD} that after a
simple linear transformation it becomes a Lotka-Volterra system which is a deformation of the Bogoyavlenskij-Itoh
system \cite{itoh1,Bog2}. For the integrable reductions of the dressing chain which we will study here, the
latter formulation is the most convenient; also, we will use many results from \cite{KKQTV,KQV,PPPP,RCD},
which are all written in that formulation.

For integers $n$ and $k$, satisfying $n\geqs 2k+1$, the Hamiltonian system $\LV nk$ has as its phase space
$\bbR^n$, which we equip with its natural coordinates $x_1,\dots,x_n$. It has as Hamiltonian $H$ the sum of these
coordinates, $H:=x_1+\cdots+x_n$, and as Poisson structure a quadratic Poisson structure, with brackets
$$
  \pb{x_i,x_j}:=A_{i,j}^\nk x_ix_j\;,\quad\hbox{where}\quad A_{i,j}^\nk
  =\left\{
  \begin{array}{ll}
    +1\hbox{ if $i+n>j+k\;,$}\\
    -1\hbox{ if $i+n\leqs j+k\;,$}
  \end{array}
  \right.
$$
when $1\leq i<j \leq n$. The Hamiltonian vector field $\X_H$ has the form
\begin{equation*}
  \dot x_i=\sum_{j=1}^n A^{(n,k)}_{i,j}x_ix_j\;,\qquad 1\leqs i\leqs n\;.
\end{equation*}%

These systems, for $k=0$, were introduced in \cite{KKQTV} where their Liouville and superintegrability was
established with rational first integrals. For $n=2k+1$ one recovers the Bogoyavlenskij-Itoh
system \cite{Bog2} whose deformation,
which is the dressing chain \cite{VS}, was constructed in \cite{RCD}. We denote this deformation by $\LVB{2k+1}k$.
The observation that $\LV n0$ can be obtained by a reduction from a Bogoyavlenskij-Itoh system
$\LV{2n-1}{n-1}$ led us to study the more general case $\LV nk$ with $k>0$, since these
systems can be obtained by a similar reduction from a Bogoyavlenskij-Itoh system
$\LV{2m+1}m$, with $m:=n-k-1$. The systems $\LV nk$ for $k>0$, were studied in detail in \cite{PPPP},
where their Liouville and non-commutative integrability was proven (see Definition \ref{def:non-com}),
again with rational first integrals. The same reduction can be applied to the deformed systems $\LVB{2m+1}m$,
leading to Hamiltonian systems, which we will denote by $\LVB nk$. The Hamiltonian vector field
$\X_H$ now has the form
\begin{equation*}
  \dot x_i=\sum_{j=1}^n\(A^{(n,k)}_{i,j}x_ix_j+B^{(n,k)}_{i,j}\)\;,\qquad 1\leqs i\leqs n\;,
\end{equation*}%
where all entries $b_{i,j}$ of the skew-symmetric matrix $B^\nk$, satisfying $\vert i-j\vert\notin\set{m,m+1}$, are
zero and the other entries are arbitrary parameters.  Setting in these systems all deformation parameters equal to
zero, one recovers $\LV nk$. A natural question, studied here, is the integrability of $\LVB nk$ for all $n$ and
$k$ with $n\geq 2k+1$. For $\LVB{2k+1}k$ the answer is known \cite{VS,RCD}: $\LVB {2k+1}k$ is Liouville
integrable with polynomial first integrals which are deformations of the first integrals of $\LV {2k+1}k$.


The main result of this paper is that $\LVB nk$ is on the one hand  Liouville integrable, with rational first integrals
which are deformations of the first integrals of $\LV nk$, and is on the other hand non-commutative integrable, with such
first integrals. See Theorem \ref{thm:rat_int} for the case of $(n,0)$ and Theorem \ref{thm:main} for the case of
$(n,k)$ with $k>0$. In order to establish these results, we need to construct the deformed first integrals and show that
they have the desired involutivity properties; independence is in fact quite automatic and is proven by a simple
deformation argument.

Surprizingly, the construction of the deformed first integrals from the undeformed ones is very simple, and is the
same for all first integrals of $\LV nk$ that were constructed in \cite{PPPP}: from such a first integral $F$ of $\LV nk$
we obtain a first integral $F^b$ of $\LVB nk$ by setting $F^b:=e^{\D b}F=F+\D bF+\frac{\D b^2}2F+\cdots$, where
\begin{equation}
  \D{b}=\sum_{1\leqs i\leqs k+1}b_{i,i+m}\frac{\p^2}{\p x_i\p x_{i+m}}-
         \sum_{1\leqs i\leqs k}b_{i,i+m+1}\frac{\p^2}{\p x_i\p x_{i+m+1}}\;,
\end{equation}%
where we recall that $m=n-k-1$. Notice that $H^b=H$ because
$e^{\D b}$ acts on linear polynomials as identity.

The proof that we get in this way first integrals and that they are in involution when the corresponding
undeformed first integrals are
in involution needs however extra work, as it does not follow directly from their definition.
In the case of $\LVB n0$, studied in Section \ref{sec:rational}, there is only one deformation
parameter $\beta:=b_{1,n}$ and the above action of $e^{\D b}$ on the rational first integrals
of $\LV n0$ which were constructed in \cite{KKQTV} can be equivalently described as the pullback
of a birational map, which we introduce. Moreover, we show that this map is a Poisson map between
the deformed and undeformed systems (Proposition~\ref{prp:birat_Poisson}). This yields the
integrability results for $\LVB n0$, since apart from the Hamiltonian, all constructed first
integrals are rational; the fact that these rational first integrals are in involution with the
Hamiltonian, i.e.\ are first integrals, can in this case be shown by direct computation (Proposition \ref{prp:rat_first_integrals}).

When $k>0$ the above idea can also be used, but some care has to be taken because there are now
$2k+1$ deformation parameters, and they can be added one by one, upon decomposing
$\D b=\sum_{p=1}^{2k+1}\Dp p$, but in order to be able to view at each step the action of
$e^{\Dp p}$ on the rational first integrals as the pullback by some Poisson map, one has to add
the parameters in a very specific order. The reason for this is that in this process the form of
the rational first integrals at each step is very important. With this, one gets that the deformed
rational first integrals of $\LVB nk$ are
in involution (second part of Theorem \ref{thm:defo_rat_invol}). This system has $k+1$ independent
polynomial first integrals, which are by construction in involution, because they are restrictions
to a Poisson submanifold of the involutive first integrals of $\LVB{2m+1}m$, but they also have to
be shown to be in involution with the rational first integrals. This is again done by using the above
Poisson maps, but since these maps do not produce the deformed polynomial first integrals, some extra
arguments which are again very much dependent on the particular structure of the first integrals, are
needed. In the end, this proves Theorem \ref{thm:main} which says that the deformed systems $\LVB nk$
are both Liouville and non-commutatively integrable.

Several, a priori different, discretizations of the dressing chain $\LVB {2m+1}m$ have
been constructed and studied in the literature \cite{Adler1993,
NoumiYamada1998, FordyHone2014, PPC}.  We will construct in the final
section of this paper a class of discretizations of the deformed
Lotka-Volterra systems $\LVB{n}k$, with $k=0$. In order to construct them,
we start from the compatibility conditions of a linear problem associated
with the Lax operator of the dressing chain. Upon reducing these conditions
to $\LVB{n}0$, as in the continuous case, we can easily solve the
compatibility conditions, and hence construct the discrete maps explicitly.
We prove that these discrete maps preserve the Poisson structure, the
Hamiltonian and all rational first integrals of $\LVB{n}0$. These
discretizations are therefore both Liouville and
superintegrable. We show that the Kahan discretization of $\LVB{n}0$ is a particular
instance of the discretizations that we construct, thereby showing that the
Kahan map of $\LVB{n}0$ arises as the compatibility condition of a linear
system. For $k>0$ the reduction can also be performed, leading to
an integrable discretization of $\LVB{n}k$, but the proof is rather long
and complicated, so it will not be given here. It is worthwhile pointing out
that when $k>0$ the Kahan discretization of $\LVB{n}k$ is not a particular
case of this discretization.

The structure of the paper is as follows. We construct in Section \ref{sec:def_systems} the systems $\LVB nk$ as
(Poisson) reductions of the systems $\LVB{2m+1}m$, where $m:=n-k-1$, and we show that the inherited Poisson
structure~$\Pi^\nk_b$ is a deformation of the Poisson structure $\Pi^\nk$ of $\LV nk$. In Section
\ref{sec:rational} we construct rational first integrals of $\LVB n0$ as deformations of the first integrals of
$\LV n0$, which were constructed in \cite{KKQTV}. We show by using a Poisson map, which we also construct, that
half of these first integrals are in involution, establishing both the Liouville and superintegrability of $\LVB
n0$. We also give explicit solutions for this system. In Section \ref{sec:k>0} we treat the more complicated case
of $k>0$, where we prove again Liouville integrability, and also non-commutative integrability. In this case we
use $2k+1$ Poisson maps, which are composed
in a very specific order to obtain the results. In Section \ref{sec:discretizations} we construct a family of
discrete maps for $\LVB{n}{0}$ as compatibility conditions for a linear system, associated to the Lax operator of
$\LVB n0$, and show their Liouville and superintegrability. We show that the Kahan discretization of $\LVB n0$ is a
particular case and deduce from this the Liouville and superintegrability of the Kahan discretization.

\section{The Hamiltonian systems $\LVB nk$}\label{sec:def_systems}
In this section, we construct the polynomial Hamiltonian systems $\LVB nk$. Recall that $n$ and $k$ stand for two
arbitrary integers satisfying $n\geqs 2k+1$. We construct them as reductions of the deformed Bogoyavlenskij-Itoh
systems, which we introduced in \cite{RCD}; in the notation of the present paper, the latter systems are the
systems $\LVB{2m+1}m$, where $m:=n-k-1$.

\subsection{The deformed Bogoyavlenskij-Itoh systems}\label{par:BI_deformed}

We first recall the Bogoyavlenskij-Itoh systems $\LV{2m+1}m$, which have first been introduced by O. Bogoyavlenskij
~\cite{Bog1,Bog2} and Y. Itoh \cite{itoh1}, and their deformations $\LVB{2m+1}m$, which we constructed in
\cite{RCD}. In both cases, the phase space of the system is~$\bbR^{2m+1}$, which is equipped with its natural
coordinates $x_1,\dots,x_{2m+1}$. Since many formulas are invariant under a cyclic permutation of these
coordinates, we view the index of $x$ as being taken modulo $2m+1$, i.e., we set $x_{2m+\ell+1}=x_\ell$ for all
$\ell\in\bbZ$. The Poisson structure $\Pi^m$ of $\LV{2m+1}m$ is constructed from the skew-symmetric Toeplitz
matrix\footnote{Later on, the matrix $A^m$, and similarly the matrix $B^m$ and the Poisson structure $\Pi^m$, will
  have two superscripts; in that notation, $A^{m}$ is written as $A^{(2m+1,m)}$, and similarly for~$B^m$ and
  $\Pi^m$.}  $A^{m}$ whose first row is given by
\begin{equation*}
  (0,\underbrace{1,1,\dots,1}_m,\underbrace{-1,-1,\dots,-1}_m)\;.
\end{equation*}%
It leads to a quadratic Poisson structure $\Pi^m$ on $\bbR^{2m+1}$, upon defining the Poisson brackets
\begin{equation*}
  \pb{x_i,x_j}^m:=A^{m}_{i,j}x_ix_j\;,\qquad 1\leqs i,j\leqs 2m+1\;.
\end{equation*}%
As Hamiltonian we take $H:=x_1+x_2+\cdots+x_{2m+1}$, the sum of all coordinates. The corresponding Hamiltonian
system is given by
\begin{equation}\label{eq:BI}
  \dot x_i=x_i\sum_{j=1}^m(x_{i+j}-x_{i-j})\;,\qquad 1\leqs i\leqs 2m+1\;.
\end{equation}%
It is called the \emph{Bogoyavlenskij-Itoh system}, and is denoted by~$\LV{2m+1}m$. Given any real
skew-symmetric matrix $B^m$ of size $2m+1$, define
\begin{equation}\label{eq:deformed_poisson_BI}
  \pb{x_i,x_j}^m_b:=A^{m}_{i,j}x_ix_j+B^m_{i,j}\;,\qquad 1\leqs i,j\leqs 2m+1\;.
\end{equation}%
These brackets define a Poisson structure, denoted $\Pi^m_b$, if and only if all entries
$b_{i,j}:=B^m_{i,j}$ of $B^m$, with $\left|j-i\right|\notin\set{m,m+1}$ are zero~(see \cite[Prop.\ 3]{RCD}). Under this condition on
$B^m$, we can consider the Hamiltonian system on $\bbR^{2m+1}$ with the same Hamiltonian $H$ and Poisson structure
$\Pi^m_b$. It is given by
\begin{equation}\label{eq:BI_deformed}
  \dot x_i=x_i\sum_{j=1}^m(x_{i+j}-x_{i-j})+b_{i,i+m}-b_{i-m,i}\;,\qquad 1\leqs i\leqs 2m+1\;.
\end{equation}%
It is called the \emph{deformed Bogoyavlenskij-Itoh system}, and is denoted by $\LVB{2m+1}m$. It is
clear that setting all parameters $b_{i,j}$ equal to zero, one recovers $\LV{2m+1}m$. A Lax equation (with spectral
parameter) for (\ref{eq:BI_deformed}) is given by
\begin{equation}\label{eq:bogo_deformed_lax}
  (X+\lambda^{-1}\Delta+\lambda M)^\cdot=[X+\lambda^{-1}\Delta+\lambda M,D-\lambda M^{m+1}]\;,
\end{equation}
where for $1\leqs i,j\leqs 2m+1$ the $(i,j)$-th entry of the matrices $X$ and $M$, and of the diagonal matrices
$\Delta$ and $D$, is given by
\begin{gather*}
  X_{i,j}:=x_i\delta_{i,j+m}\;,\quad \Delta_{i,j}:=b_{i+m,j}\delta_{i,j}\;,\quad M_{i,j}:=\delta_{i+1,j}\;,\\
  D_{i,j}:=-\delta_{i,j}(x_i+x_{i+1}+\cdots+x_{i+m})\;,
\end{gather*}%
and the indices of $b, x$ and $\delta$ are considered modulo $2m+1$.
It generalizes Bogoyavlenskij's Lax equation, which can be recovered from it by putting all $b_{i,j}$ equal to zero,
i.e., by setting $\Delta=0$.

\subsection{The reduced systems}\label{par:BI_defo_red_def}
The systems $\LVB nk$, with $n>2k+1$, are obtained by reduction from $\LVB{2m+1}m$, where $m:=n-k-1>~k$.  Consider
the submanifold $N_n$ of $\bbR^{2m+1}$, defined by $x_{n+1}=x_{n+2}=\cdots= x_{2m+1}=0$. It is a linear space of
dimension $n$ which we identify with $\bbR^n$ and on which we take the restrictions of $x_1,x_2,\dots,x_n$ as
coordinates (without changing the notation).

\begin{prop}\label{prp:poisson_sbmfd}
  The submanifold $N_n$ of $\bbR^{2m+1}$ is a Poisson submanifold of~$\(\bbR^{2m+1},\Pi^m_b\right)$
   if and only if the entries of the skew-symmetric matrix $B^m$ satisfy $b_{i,j}=0$ whenever $n+1\leqs i\leqs 2m+1$.
\end{prop}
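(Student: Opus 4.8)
The plan is to characterize when $N_n$ is a Poisson submanifold directly from the defining property: $N_n$ is a Poisson submanifold of $\(\bbR^{2m+1},\Pi^m_b\right)$ if and only if the ideal $\cI$ of functions vanishing on $N_n$ is a Poisson ideal, i.e.\ $\pb{\cI,f}^m_b\subseteq\cI$ for every smooth $f$. Since $N_n$ is the linear subspace cut out by the equations $x_\ell=0$ for $n+1\leqs\ell\leqs 2m+1$, the ideal $\cI$ is generated by these coordinate functions, so it suffices to check that $\pb{x_\ell,x_j}^m_b$ vanishes on $N_n$ for all such $\ell$ and all $j$.

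First I would write out the bracket explicitly using \eqref{eq:deformed_poisson_BI}, namely $\pb{x_\ell,x_j}^m_b=A^m_{\ell,j}x_\ell x_j+B^m_{\ell,j}$, and restrict it to $N_n$. On $N_n$ the quadratic term $A^m_{\ell,j}x_\ell x_j$ contains the factor $x_\ell$, which vanishes identically on $N_n$ for every $\ell$ in the range $n+1\leqs\ell\leqs 2m+1$; hence the quadratic part automatically lies in $\cI$ and imposes no condition. The only obstruction to $\pb{x_\ell,x_j}^m_b$ lying in $\cI$ is therefore the constant term $B^m_{\ell,j}=b_{\ell,j}$: this constant must be zero, since a nonzero constant does not vanish on $N_n$. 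Thus the Poisson-submanifold condition is precisely that $b_{\ell,j}=0$ for all $j$ and all $\ell$ with $n+1\leqs\ell\leqs 2m+1$, which by skew-symmetry is the stated condition $b_{i,j}=0$ whenever $n+1\leqs i\leqs 2m+1$.

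One subtlety I would address carefully is the index convention: the indices of $b$ and $x$ are taken modulo $2m+1$, so I must verify that ``$n+1\leqs i\leqs 2m+1$'' captures exactly the rows (and, by skew-symmetry, columns) indexed by the vanishing coordinates, and that no additional constraints are hidden in the cyclic identifications. Combined with the earlier restriction from \cite[Prop.\ 3]{RCD} — that $b_{i,j}=0$ already holds unless $\left|j-i\right|\in\set{m,m+1}$ — the surviving free parameters are exactly those $b_{i,j}$ with $1\leqs i\leqs n$ and $\left|j-i\right|\in\set{m,m+1}$, recovering the description of $B^\nk$ given in the introduction.

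I do not anticipate a serious obstacle here: the statement is an iff and both directions reduce to the elementary observation that a polynomial $A^m_{\ell,j}x_\ell x_j+b_{\ell,j}$ restricts to the zero function on the linear space $N_n$ precisely when its constant term $b_{\ell,j}$ vanishes. The only genuine care needed is the bookkeeping of which $(i,j)$ pairs are forced to zero once the modular index convention is unwound.
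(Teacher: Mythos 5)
Your proposal is correct and follows essentially the same route as the paper: both reduce the Poisson-submanifold condition to the vanishing of $\pb{x_i,\cdot}^m_b$ on $N_n$ for $n+1\leqs i\leqs 2m+1$, observe that the quadratic part $A^m_{i,j}x_ix_j$ vanishes there because $x_i=0$, and conclude that the constants $b_{i,j}$ must all be zero. The paper phrases the criterion via tangency of Hamiltonian vector fields and the derivation property rather than via the Poisson-ideal condition on generators, but the computation is identical.
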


\begin{proof}
The submanifold $N_n$ is a Poisson submanifold of~$\(\bbR^{2m+1},\Pi^m_b\right)$  if and only if
all Hamiltonian vector fields $\X_F:=\Pb{F}^m_b$, where $F$ is an arbitrary function on
$\bbR^{2m+1}$, are tangent to $N_n$ at all points of $N_n$. This is equivalent to the vanishing of
$\X_F[x_i]=\pb{x_i,F}^m_b$ at all points of $N_n$, for any $i$ with $n+1\leqs i\leqs 2m+1$.
For such $i$, by the derivation property of the Poisson bracket, we have
\begin{equation*}
  \X_F[x_i]=\pb{x_i,F}^m_b=\sum_{j=1}^{2m+1}\pp F{x_j}\(A_{i,j}^{m}x_ix_j+b_{i,j}\)\;,
\end{equation*}%
which equals $\sum_{j=1}^{2m+1}\pp F{x_j}b_{i,j}$ on $N_n$.
This clearly vanishes, for all functions $F$ on $\bbR^{2m+1}$, if and only if
$b_{i,j}=0$ for all $j$.
\end{proof}

Assuming that $B^m$ verifies the assumptions of Proposition \ref{prp:poisson_sbmfd},
$N_n\simeq\bbR^n$ is a Poisson submanifold of $(\bbR^{2m+1},\Pi^m_b)$, and we can restrict $\Pi_b^m$ (as given by
(\ref{eq:deformed_poisson_BI})) to $\bbR^n$, giving a Poisson structure $\Pi_b^{(n,k)}$, with associated Poisson
bracket
\begin{equation}\label{eq:deformed_reduced_poisson}
  \pb{x_i,x_j}^{(n,k)}_b:=A^{(n,k)}_{i,j}x_ix_j+B^{(n,k)}_{i,j}\;,\qquad 1\leqs i,j\leqs n\;,
\end{equation}%
where $A^{(n,k)}$ and $B^{(n,k)}$ denote the $n\times n$ matrices obtained from $A^{m}$ and $B^{m}$ by removing its
last $2m+1-n$ rows and columns. Said differently, $A^{(n,k)}$ denotes the skew-symmetric $n\times n$ Toeplitz matrix
whose first row is given by
\begin{equation*}
  (0,\underbrace{1,1,\dots,1}_{m=n-k-1},\underbrace{-1,-1,\dots,-1}_k)\;.
\end{equation*}%
Thus, all uppertriangular entries of the skew-symmetric matrix $A^\nk$ are $\pm1$ and $A^\nk_{i,j}=1$ if and only
if $n+i>k+j$. Also, $B^{(n,k)}$ is the skew-symmetric $n\times n$ matrix whose uppertriangular entries
$b_{i,j}:=B^{(n,k)}_{i,j}$ with $j-i\notin\set{m,m+1}$ are zero. So, when $k>0$, the first line of $B^{(n,k)}$ is
given by
\begin{equation*}
  (\underbrace{0,0,\dots,0}_{m=n-k-1},b_{1,m+1},b_{1,m+2},\underbrace{0,0,\dots,0}_{k-1})\;,
\end{equation*}%
while for $k=0$ it has the form $(0,0,\dots,0,b_{1,n}).$ We define $\LVB nk$ to be the Hamiltonian system with
$\Pi^{(n,k)}_b$ as Poisson structure and $H=x_1+x_2+\cdots+x_n$ as Hamiltonian.
Explicitly, the Hamiltonian vector field $\X_H$ of $\LVB nk$ is given by
\begin{equation}\label{eq:BI_reduced_deformed}
  \dot x_i=\sum_{j=1}^n\(A^{(n,k)}_{i,j}x_ix_j+B^{(n,k)}_{i,j}\)\;,\qquad 1\leqs i\leqs n\;.
\end{equation}%
Setting $B^{(n,k)}=0$, one recovers the Hamiltonian system $\LV nk$, in particular its Poisson structure
$\Pi^{(n,k)}$, which was constructed and studied in~\cite{PPPP}. Therefore, the system $\LVB nk$ is a deformation
of the system $\LV nk$.

We show in the following proposition that the above matrices $B^{(n,k)}$ are the only ones for which the brackets
given by \eqref{eq:deformed_reduced_poisson} define a Poisson bracket (on $\bbR^n$).
\begin{prop}\label{prp:poisson_deformation}
  Suppose that $B=(b_{i,j})$ is a skew-symmetric $n\times n$ matrix. Then the brackets, given by
  \begin{equation*}
    \pb{x_i,x_j}_b:=A^{(n,k)}_{i,j}x_ix_j+b_{i,j}\;,\qquad 1\leqs i,j\leqs n\;,
  \end{equation*}%
  define a Poisson structure $\Pi_b$ on $\bbR^n$ if and only if all uppertriangular entries $b_{i,j}$ of $B$, with
  $j-i\notin\set{m,m+1}$ are zero. The rank of $\Pi_b$ is $2\[\frac n2\]$.
\end{prop}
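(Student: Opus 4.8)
The plan is to verify the Jacobi identity directly for the bracket $\pb{x_i,x_j}_b=A^{(n,k)}_{i,j}x_ix_j+b_{i,j}$ and to characterize exactly when it holds. Since the bracket is skew-symmetric and bilinear by construction, and since a Poisson bracket on $\bbR^n$ is determined by its values on the coordinate functions, the only condition to check is the Jacobi identity on triples $(x_i,x_j,x_\ell)$. Writing $J(x_i,x_j,x_\ell):=\pb{\pb{x_i,x_j}_b,x_\ell}_b+\cycl$, I would expand this using the derivation property of the bracket. The quadratic-quadratic part of this expression is precisely the Jacobi identity for the undeformed structure $\Pi^{(n,k)}$, which is already known to be Poisson (it is the reduction of the Bogoyavlenskij-Itoh structure, cf.\ the discussion preceding Proposition~\ref{prp:poisson_sbmfd}). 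So the genuinely new content is the vanishing of the cross terms, which are linear in the $x$'s and linear in the entries $b_{i,j}$.

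The key computation is therefore to collect the terms in $J(x_i,x_j,x_\ell)$ that are linear in the coordinates. Differentiating a constant $b_{i,j}$ kills it, so the surviving cross terms come from brackets of the form $\set{b_{i,j},x_\ell}_b=A^{(n,k)}_{\cdot}\cdots$ paired against a remaining coordinate factor; explicitly one finds contributions of the shape $b_{i,j}\,A^{(n,k)}_{j,\ell}x_\ell$ (and cyclic), so that the coefficient of each monomial $x_\ell$ in $J$ is a signed sum of products $b_{r,s}A^{(n,k)}_{s,\ell}$. Setting each such coefficient to zero gives a system of linear conditions on the $b_{i,j}$. The approach is to show that, given the very rigid structure of $A^{(n,k)}$ (all upper-triangular entries $\pm1$, with sign determined by $n+i\gtrless k+j$), these conditions are satisfied for all choices of the two ``allowed'' parameters on band $m$ and $m+1$, and force $b_{i,j}=0$ for every $j-i\notin\set{m,m+1}$. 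A clean shortcut here is to invoke Proposition~\ref{prp:poisson_sbmfd} together with the reduction construction: the bracket \eqref{eq:deformed_reduced_poisson} arising from a Poisson submanifold of $(\bbR^{2m+1},\Pi^m_b)$ is automatically Poisson, which gives the ``if'' direction for all matrices $B^{(n,k)}$ of the stated form without a bare-hands Jacobi check.

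For the converse (the ``only if''), I would argue that any extra nonzero entry $b_{i,j}$ with $j-i\notin\set{m,m+1}$ produces a nonvanishing linear term in some $J(x_i,x_j,x_\ell)$. The cleanest way is to exhibit, for a putative offending $b_{i,j}$, an index $\ell$ for which the coefficient of $x_\ell$ in the Jacobiator is exactly $\pm b_{i,j}$ (with no cancelling partner), using that the sign pattern of $A^{(n,k)}$ never allows two forbidden-band contributions to cancel. I expect this bookkeeping — tracking which pairs $(b_{r,s},A^{(n,k)}_{s,\ell})$ land on the same monomial and checking that no spurious cancellation rescues a forbidden entry — to be the main obstacle, as it requires a careful case analysis driven by the inequalities defining $A^{(n,k)}$.

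Finally, for the rank statement, I would note that $\Pi_b$ is a deformation of $\Pi^{(n,k)}$, whose rank is already known to be $2\bigl[\frac n2\bigr]$; since the rank is lower semicontinuous and the deformation can only increase it, while the phase space has dimension $n$, the rank is at least $2\bigl[\frac n2\bigr]$ and at most $n$. When $n$ is even these coincide. When $n$ is odd I would check that the Pfaffian-type obstruction persists, i.e.\ that the generic rank cannot jump to $n$ (which is impossible for odd $n$ anyway, since skew-symmetric structures have even rank), so the rank remains $n-1=2\bigl[\frac n2\bigr]$. Thus in both parities the rank equals $2\bigl[\frac n2\bigr]$, as claimed.
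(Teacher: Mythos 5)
Your overall strategy coincides with the paper's: the ``if'' direction is obtained for free from Proposition~\ref{prp:poisson_sbmfd} (the bracket is the restriction of $\Pi^m_b$ to a Poisson submanifold), the ``only if'' direction is reduced to showing that a forbidden entry $b_{i,j}$ forces a nonzero linear term in some Jacobiator, and the rank is pinned between the rank of $A^{(n,k)}$ and $n$ by parity. However, the decisive step of the ``only if'' direction is exactly the part you defer (``I expect this bookkeeping \dots to be the main obstacle''), so as written the proof is incomplete precisely where the content lies. The paper resolves it with a short, explicit witness: for a forbidden $b_{i,j}$ with $i<j$ and $0<j-i<m$, $i\neq 1$, take the triple $(x_{i-1},x_i,x_j)$; the coefficient of $x_{i-1}$ in the mixed (linear) part of the Jacobiator is $\(A^{(n,k)}_{i-1,i}+A^{(n,k)}_{i-1,j}\)b_{i,j}=2b_{i,j}$, because both $A$-entries equal $+1$ by the defining inequality $n+i>k+j$. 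The two edge cases ($i=1$, and $j-i>m+1$) are handled by replacing the auxiliary index $i-1$ with $j+1$ or $i+1$. Note that the mechanism is not ``no cancelling partner'' as you predicted, but two contributions that reinforce; and only three cases are needed, not a full sweep over all triples.

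Two smaller points. First, your description of where the cross terms come from is off: $\pb{b_{i,j},x_\ell}_b=0$ since $b_{i,j}$ is a constant; the linear terms arise from the \emph{outer} bracket applied to the quadratic inner bracket, i.e.\ $\pb{A^{(n,k)}_{i,j}x_ix_j,x_\ell}_B=A^{(n,k)}_{i,j}\(x_ib_{j,\ell}+x_jb_{i,\ell}\)$, so each monomial $x_i$ carries a coefficient of the form $A^{(n,k)}_{\cdot,\cdot}\,b_{\cdot,\ell}$ with the $A$-factor coming from the inner bracket. This does not change the outcome but signals that the computation was not actually carried out. Second, for the rank, ``the deformation can only increase it'' needs a one-line justification: the rank of $\Pi_b$ equals that of $\Pi_{b/\lambda^2}$ after the rescaling $x\mapsto\lambda x$, so by lower semicontinuity at $\lambda\to\infty$ it is at least the rank of $A^{(n,k)}$, namely $2\left[\frac n2\right]$; being even and at most $n$, it equals $2\left[\frac n2\right]$. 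This is the same argument the paper uses, stated slightly more carefully.
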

\begin{proof}
Let us denote by $\Pi$ the Poisson structure defined by $A^\nk$, and let us denote the derived Poisson bracket by
$\PB$. The constant Poisson bracket defined by $B$ is denoted by $\PB_B$.
We know already from Proposition \ref{prp:poisson_sbmfd} that if all uppertriangular entries $b_{i,j}$ of $B$, with
$j-i\notin\set{m,m+1}$ are zero, then $\Pi_b$ is the restriction of a Poisson structure to a Poisson submanifold,
hence it is a Poisson structure. We therefore only need to show that if one of these entries $b_{i,j}$ with $i<j$
is non-zero, then $\Pi_b$ is not a Poisson structure. Suppose first that $0<j-i<m$ and $i\neq1$. Then
\begin{gather*}
  \pp{}{x_{i-1}}\[\pb{\pb{x_{i-1},x_i},x_j}_B+\pb{\pb{x_i,x_j},x_{i-1}}_B+\pb{\pb{x_j,x_{i-1}},x_i}_B\]\\
  =A^\nk_{i-1,i}b_{i,j}+A^\nk_{i-1,j}b_{i,j}=2b_{i,j}\neq0\;,
\end{gather*}%
%
so that $\Pi_b$ does not satisfy the Jacobi identity.  When $0<j-i<m$ and $i=1$ it suffices to replace in the above
computation $i-1$ by $j+1$ to arrive at the same conclusion. Finally, when $j-i>m+1$ one replaces in the above
computation $i-1$ by $i+1$ to arrive again at the same conclusion.

The rank of $\Pi$ is the rank of $A^\nk$, which is equal to $n$ when $n$ is even and $n-1$ when $n$ is odd. Since
$\Pi_b$ is obtained by adding constants to the quadratic structure $\Pi$, its rank is at least the rank of
$A^\nk$. However, the rank of $\Pi_b$ is even and bounded by $n$, so $\Pi_b$ and $\Pi$ have the same rank, which
is~$2\[\frac n2\]$.
\end{proof}
The proposition implies that from the above reduction process we get all possible deformations of $\LV nk$ obtained
by adding to $\Pi^\nk$ a constant Poisson structure.

\section{The Liouville and superintegrability of $\LVB n0$}\label{sec:rational}

We construct in this section enough independent first integrals for the Hamiltonian system $\LVB n0$ to prove its
superintegrability and then select from them enough first integrals in involution to prove its Liouville
integrability. Notice that since the phase space of $\LVB n0$ is $\bbR^n$ and since the Poisson structure on it has
rank $n$ or $n-1$, depending on whether $n$ is even or odd, we need to provide $n-1$ independent first integrals to
prove superintegrability and $\[\frac{n+1}2\]$ independent first integrals (including the Hamiltonian) in
involution to prove Liouville integrability. Throughout the section, $n$ is fixed, and $k=0$ also, so we will drop
from  the notations the label  $(n,0)$, except in the statements of the propositions and the theorem.

\subsection{First integrals}\label{par:first_integrals_LVn0}
We first write down the equations for the vector field~$\X_H$ where we recall that $H=x_1+x_2+\cdots+x_n$ and that
the Poisson structure $\Pi_b=\Pi_b^{(n,0)}$ is defined by (\ref{eq:deformed_reduced_poisson}); the matrix
$B=B^{(n,0)}$ has all entries equal to zero, except for $b_{1,n}=-b_{n,1}$, which we will denote in this section
by~$\b$. Also, the skew-symmetric matrix $A=A^{(n,0)}$ has all its uppertriangular entries equal to (plus!)
1. Therefore, $\X_H$ is given by
\begin{eqnarray}
  \dot x_1&=&x_1(x_2+x_3+\cdots+x_n)+\beta\;,\nonumber\\
  \dot x_i&=&x_i(-x_1-\dots -x_{i-1}+x_{i+1}+\cdots+x_n)\;,\quad 1<i<n\;,\nonumber\\
  \dot x_n&=&x_n(-x_1-x_2-\cdots-x_{n-1})-\beta\;.\label{eqn:vf_k=0}
\end{eqnarray}
We construct the first integrals of this system as deformations of the first integrals of $\LV n0$, which were
constructed in \cite[Prop.\ 3.1]{KKQTV}. We first recall the formulas for these first integrals. For $1\leqs
\ell\leqs\left[\frac{n+1}2\right]$, the following functions $F_\ell=F_\ell^{(n,0)}$ are first integrals of $\LV
n0$:
\begin{equation}\label{eq:integrals}
  F_\ell:= \left\{
  \begin{array}{ll}
    \left(x_1+x_2+\cdots+x_{2\ell-1}\right)\ds\frac{x_{2\ell+1}x_{2\ell+3}\ldots x_{n}}{x_{2\ell}x_{2\ell+2}\ldots
      x_{n-1}}\;,& \mbox{ if}\ n\mbox{ is
      odd},\\ \\
    \left(x_1+x_2+\cdots+x_{2\ell}\right)\ds\frac{x_{2\ell+2}x_{2\ell+4}\ldots x_n}{x_{2\ell+1}
      x_{2\ell+3}\ldots x_{n-1}}\;,& \mbox{ if} \ n\mbox{ is even}.
  \end{array} \right.
\end{equation}
More first integrals were constructed by using the anti-Poisson involution $\imath$ on $\bbR^n$, defined by
\begin{equation}
\label{eq:anti_poisson}
  \imath(a_1,a_2,\dots,a_n):=(a_n,a_{n-1},\dots,a_1)\;,
\end{equation}%
which leaves $H$ is invariant, $\imath^*H:=H\circ\imath=H$, so that the rational functions $G_\ell:=\imath^*F_\ell$
($\ell=1,\dots,\left[\frac{n+1}2\right]$) are also first integrals of $\LV n0$. This yields exactly $n-1$ different
first integrals, because when $n$ is even, all $F_\ell$ and $G_\ell$ are different, except for $F_{n/2}=H=G_{n/2}$,
and when $n$ is odd, all $F_\ell$ and $G_\ell$ are different, except for $F_{(n+1)/2}=H=G_{(n+1)/2}$ and
$F_1=G_1$. We recall also that the functions $F_\ell$ are pairwise in involution, just like the functions~$G_\ell$,
and that all these functions are independent, which accounts for the Liouville and superintegrability of $\LV n0$.

In order to construct from these first integrals of $\LV n0$ first integrals of $\LVB n0$ we use the constant
coefficient differential operator $\D{b}$, which we define
by
\begin{equation*}
  \D{b}=\beta\frac{\p^2}{\p x_1\p x_n}\;.
\end{equation*}%
Notice that $\imath^*$ and $\D{b}$ commute, $\imath^*\circ \D{b}=\D b\circ \imath^*$.  As said, we use the
operator~$\D b$ to define some first integrals of $\LVB n0$: we define for $1\leqs
\ell\leqs\left[\frac{n+1}2\right]$ the functions $F_\ell^b=F^{(n,0),b}_\ell$ and $G_\ell^b=G^{(n,0),b}_\ell$ by
\begin{equation}\label{equ:Fn9b}
  F^b_\ell:=e^{\D{b}}F_\ell=F_\ell+\beta\frac{\p^2 F_\ell}{\p x_1\p x_n}\;,\qquad
  G^b_\ell:=\imath^*F^b_\ell=e^{\D{b}}G_\ell\;.
\end{equation}%
We have used that when the operator $\D{b}$ is applied twice to $F_\ell$, the result is zero. This follows from the
fact that the variables $x_1$ and $x_n$ appear linearly in $F_\ell$ (and hence are absent in $\D{b}F_\ell$), as is
clear from (\ref{eq:integrals}).  Explicit formulas for the rational functions $F_\ell^b$, with $1\leqs\ell\leqs
\left[\frac{n-1}2\right]$ are given by
\begin{equation}\label{E:integrals}
  F^b_\ell= \left\{
  \begin{array}{ll}
    \left(x_1+x_2+\cdots+x_{2\ell-1}+\frac{\beta}{x_n}\right)
    \ds\frac{x_{2\ell+1}x_{2\ell+3}\ldots x_{n}}{x_{2\ell}x_{2\ell+2}\ldots x_{n-1}}\;,&
    \mbox{ if}\  n\mbox{ is odd},\\
    \\
    \left(x_1+x_2+\cdots+x_{2\ell}+\frac{\beta}{x_n}\right)
    \ds\frac{x_{2\ell+2}x_{2\ell+4}\ldots x_n}{x_{2\ell+1} x_{2\ell+3}\ldots x_{n-1}}\;,&
   \mbox{ if} \  n\mbox{ is even},
  \end{array} \right.
\end{equation}
and similarly for $G_\ell^b$. Also, for $\ell=\[\frac{n+1}2\]$ the above definitions (\ref{equ:Fn9b}) amount to
$F^b_\ell=G^b_\ell=H$.

\goodbreak
\begin{prop}\label{prp:rat_first_integrals}
  For $\ell=1,\dots,\left[\frac{n+1}2\right]$, the rational functions $ F^{(n,0),b}_\ell$ and $G^{(n,0),b}_\ell$ are
  first integrals of $\LVB n0$.
\end{prop}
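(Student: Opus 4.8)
The plan is to verify directly that each deformed integral Poisson-commutes with the Hamiltonian, i.e.\ that $\{F^b_\ell,H\}_b=0$, and then to deduce the statement for $G^b_\ell$ by a symmetry argument. Throughout write $\p_i:=\p/\p x_i$. The starting point is a convenient splitting of the deformed bracket with $H$: since $\p_j H=1$ for every $j$ and since the only nonzero entries of $B^{(n,0)}$ are $b_{1,n}=-b_{n,1}=\beta$, an arbitrary function $F$ satisfies
\begin{equation*}
  \{F,H\}_b=\{F,H\}+\beta\left(\p_1 F-\p_n F\right),
\end{equation*}
where $\{\cdot,\cdot\}$ is the undeformed bracket of $A^{(n,0)}$. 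Here $\{F,H\}=\X_H[F]$ with $\X_H=\sum_p V_p\,\p_p$ and $V_p=x_p\sum_q A^{(n,0)}_{p,q}x_q$, the undeformed Hamiltonian vector field of (\ref{eqn:vf_k=0}) with $\beta=0$.

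Next I would substitute $F=F^b_\ell=F_\ell+\beta P_\ell$, with $P_\ell:=\p_1\p_n F_\ell$ (as in (\ref{equ:Fn9b})), into this splitting and sort by powers of $\beta$. The $\beta^0$ term is $\{F_\ell,H\}$, which vanishes since $F_\ell$ is a first integral of $\LV n0$. Because $F_\ell$ is linear in both $x_1$ and $x_n$ — visible from (\ref{eq:integrals}) — the function $P_\ell$ is independent of $x_1$ and $x_n$; hence $\p_1 F^b_\ell=\p_1 F_\ell$ and $\p_n F^b_\ell=\p_n F_\ell$, so the would-be $\beta^2$ term drops out. Everything therefore collapses to the coefficient of $\beta^1$, and the whole proposition reduces to the single identity
\begin{equation*}
  \{P_\ell,H\}=\p_n F_\ell-\p_1 F_\ell.
\end{equation*}

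This identity is the heart of the matter, and I would prove it via commutators of the coordinate derivations with $\X_H$. Using $A^{(n,0)}_{1,q}=+1$ and $A^{(n,0)}_{p,1}=-1$ (and the analogous values in the last row and column) one computes, with the Euler operator $E:=\sum_p x_p\,\p_p$,
\begin{equation*}
  [\p_1,\X_H]=H\,\p_1-E,\qquad [\p_n,\X_H]=E-H\,\p_n.
\end{equation*}
Combining these by the Leibniz rule $[\p_1\p_n,\X_H]=\p_1[\p_n,\X_H]+[\p_1,\X_H]\p_n$ and using $\p_1 H=\p_n H=1$, all the $H\p_1\p_n$ terms cancel and one obtains the operator identity $[\p_1\p_n,\X_H]=(E+1)(\p_1-\p_n)$. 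Applying it to $F_\ell$ and using $\X_H[F_\ell]=0$ gives $\X_H[P_\ell]=(E+1)(\p_n F_\ell-\p_1 F_\ell)$. Finally, each $F_\ell$ is homogeneous of degree $1$ (a degree-one sum times a ratio of equal numbers of variables), so $\p_n F_\ell-\p_1 F_\ell$ is homogeneous of degree $0$ and is annihilated by $E$; thus $(E+1)$ acts as the identity on it and the required identity follows. Feeding it back into the splitting yields $\{F^b_\ell,H\}_b=\beta\big(\{P_\ell,H\}+\p_1 F_\ell-\p_n F_\ell\big)=0$.

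For $G^b_\ell=\imath^*F^b_\ell$ I would observe that the involution $\imath$ of (\ref{eq:anti_poisson}) is anti-Poisson not only for $\Pi^{(n,0)}$ but also for the deformed structure $\Pi^{(n,0)}_b$: one checks on generators that $A^{(n,0)}_{n+1-i,n+1-j}=-A^{(n,0)}_{i,j}$ and $b_{n+1-i,n+1-j}=-b_{i,j}$, so $\imath^*\{f,g\}_b=-\{\imath^*f,\imath^*g\}_b$. Since $\imath^*H=H$, applying $\imath^*$ to $\{F^b_\ell,H\}_b=0$ gives $\{G^b_\ell,H\}_b=0$ at once. The main obstacle is clearly the key identity $\{P_\ell,H\}=\p_n F_\ell-\p_1 F_\ell$; the commutator/homogeneity route above is the cleanest way to it, the alternative being a direct substitution of the explicit formulas (\ref{E:integrals}), which works but is considerably longer.
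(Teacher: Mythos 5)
Your proof is correct, but it follows a genuinely different route from the paper's. The paper exploits the multiplicative structure of $F^b_\ell$: it writes $F^b_\ell$ as the product of $\bigl(x_1+\cdots+x_{2\ell-1}+\beta/x_n\bigr)$ with an alternating ratio of coordinates, computes the logarithmic derivative of each factor along the deformed flow \eqref{eqn:vf_k=0} (they come out as $\pm\bigl(x_{2\ell}+\cdots+x_n+\beta/x_n\bigr)$), and observes that they cancel; the case of $G^b_\ell$ is then dispatched, exactly as in your argument, via the anti-Poisson involution $\imath$. You instead isolate the structural ingredients: the splitting $\pb{F,H}_b=\pb{F,H}+\beta(\p_1F-\p_nF)$, the operator identity $[\p_1\p_n,\X_H]=(E+1)(\p_1-\p_n)$, and the facts that $F_\ell$ has degree at most one in each of $x_1,x_n$ and is homogeneous of degree one. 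I checked the commutators $[\p_1,\X_H]=H\p_1-E$ and $[\p_n,\X_H]=E-H\p_n$ against $A^{(n,0)}$ and they are right, as is the vanishing of the $\beta^2$ term. What your version buys is generality and uniformity: it shows that $e^{\D b}$ sends \emph{any} first integral of $\LV n0$ that is homogeneous of degree one and multilinear in $x_1,x_n$ to a first integral of $\LVB n0$, without reference to the explicit product form of \eqref{E:integrals}, and it treats odd and even $n$ at once (the paper only writes out the odd case). What the paper's computation buys is brevity and concreteness --- two one-line logarithmic derivatives. A small bonus of your write-up is that you actually verify that $\imath$ is anti-Poisson for the \emph{deformed} bracket $\Pi^{(n,0)}_b$ (checking $b_{n+1-i,n+1-j}=-b_{i,j}$), a fact the paper uses but only asserts.
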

\begin{proof}
Since $\imath$ is an anti-Poisson map which leaves the Hamiltonian $H$ invariant, it suffices to show that the
rational functions $ F^b_\ell$ are first integrals of $\LVB n0$. We do this for odd $n$, the case of even $n$ being
completely analogous. Let $1\leqs \ell\leqs\left[\frac{n-1}2\right]$. To prove that $F^b_\ell$ is a first integral
of (\ref{eqn:vf_k=0}) we show that its logarithmic derivative $(\log(F^b_\ell))^\cdot=\dot F_\ell^b/F_\ell^b$ is
zero. Thanks to the particular form of the vector field~(\ref{eqn:vf_k=0}), one easily obtains the following two
formulas:
\begin{eqnarray*}
  \(\log\(x_1+x_2+\cdots+x_{2\ell-1}+\frac{\beta}{x_n}\)\)^\cdot&=&x_{2\ell}+x_{2\ell+1}+\cdots+x_n+
  \frac{\beta}{x_n}\;,\\
  \(\log\(\frac{x_{2\ell+1}}{x_{2\ell}}\cdots\frac{x_n}{x_{n-1}}\)\)^\cdot&=&-x_{2\ell}-x_{2\ell+1}-
  \cdots-x_n-\frac{\beta}{x_n}\;.
\end{eqnarray*}
Summing them up, we find $(\log(F^b_\ell))^\cdot=0$, and hence that $\dot F^b_\ell=0$.
\end{proof}

\subsection{Involutivity}
We now show that the first integrals $F_\ell^b$ of $\LVB n0$ are in involution. For doing this, observe by
comparing (\ref{eq:integrals}) and (\ref{E:integrals}) that formally $F_\ell^b$ can be obtained from $F_\ell$ by
replacing $x_1$ with $x_1+\beta/x_n$. Said differently, if we denote by $\s:\bbR^n\to\bbR^n$ the birational map
defined for $(a_1,\dots,a_n)$ with $a_n\neq0$ by
\begin{equation}\label{eq:birat}
  \s(a_1,\dots,a_n):=\(a_1+\frac{\beta}{a_n},a_2,\dots,a_n\)\;,
\end{equation}
then $\s^*F_\ell=F_\ell^b$.
\begin{prop}\label{prp:birat_Poisson}
  The birational map $\s:\(\bbR^n,\PB_b^{(n,0)}\)\to\(\bbR^n,\PB^{(n,0)}\)$, defined by (\ref{eq:birat}), is a Poisson
  map.
\end{prop}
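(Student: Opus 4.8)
The plan is to check the defining identity $\pb{\s^*f,\s^*g}_b=\s^*\pb{f,g}$ of a Poisson map directly on coordinate functions. Since both brackets are determined on all functions by their values on $x_1,\dots,x_n$ through the Leibniz rule, it suffices to verify $\pb{\s^*x_i,\s^*x_j}_b=\s^*\pb{x_i,x_j}$ for $1\leqs i<j\leqs n$. The relevant pullbacks are $\s^*x_1=x_1+\beta/x_n$ and $\s^*x_i=x_i$ for $2\leqs i\leqs n$, and all the identities to be checked are rational, valid on the open dense set $x_n\neq0$ on which $\s$ is defined.

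I would then distinguish three cases. When $2\leqs i<j\leqs n$, both coordinates are fixed by $\s^*$ and, since the deformation matrix $B^{(n,0)}$ is supported only on the entry $(1,n)$, the deformed and undeformed brackets agree on such a pair, so the identity is immediate. When $i=1$ and $2\leqs j\leqs n-1$, I would expand $\pb{x_1+\beta/x_n,x_j}_b$ by the derivation property, using $\pb{1/x_n,x_j}_b=-x_n^{-2}\pb{x_n,x_j}_b$. Because $A^{(n,0)}_{1,j}=1$ and $A^{(n,0)}_{n,j}=-1$ (as $1<j<n$), while the corresponding entries of $B^{(n,0)}$ vanish, the two contributions combine to give exactly $(x_1+\beta/x_n)x_j=\s^*(x_1x_j)$. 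This is routine, the only care being the signs of $A^{(n,0)}$.

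The crucial case is the pair $(1,n)$, where the deformed bracket alone carries the constant: $\pb{x_1,x_n}_b=x_1x_n+\beta$. Since $\pb{1/x_n,x_n}_b=0$, the shear term drops out on the left, so $\pb{\s^*x_1,\s^*x_n}_b=x_1x_n+\beta$, while on the right $\s^*\pb{x_1,x_n}=\s^*(x_1x_n)=(x_1+\beta/x_n)x_n=x_1x_n+\beta$. The two sides match. I expect this to be the only substantive point: it is precisely here that the form of $\s$ is forced, the constant $\beta$ of the deformed quadratic-plus-constant bracket being reconstructed by applying the undeformed quadratic bracket to the shifted coordinate $x_1+\beta/x_n$. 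Everything else is bookkeeping.

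As a conceptual alternative, one may note that $\s$ is the time-$\beta$ flow of the rational vector field $V=x_n^{-1}\,\p/\p x_1$. A short computation gives $\L_V\Pi^{(n,0)}=\frac{\p}{\p x_1}\we\frac{\p}{\p x_n}$, which is exactly $\p_\beta$ of the deformed bivector $\Pi^{(n,0)}_b$; differentiating $(\s_\beta)_*\,\Pi^{(n,0)}_b$ in $\beta$ then yields $(\s_\beta)_*\bigl(-\L_V\Pi^{(n,0)}_b+\p_\beta\Pi^{(n,0)}_b\bigr)=0$, and since the claim holds trivially at $\beta=0$ it holds for all $\beta$. I would nevertheless present the elementary coordinate computation above, as it is entirely self-contained.
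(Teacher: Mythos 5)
Your proposal is correct and follows essentially the same route as the paper: reduce to the coordinate identities $\pb{\s^*x_i,\s^*x_j}_b=\s^*\pb{x_i,x_j}$ and check the same three cases, with the only nontrivial sign being $\pb{x_n,x_j}_b=-x_jx_n$ for $1<j<n$ and the pair $(1,n)$ absorbing the constant $\beta$. The flow-of-$V=x_n^{-1}\p/\p x_1$ argument you sketch as an alternative is also valid (one checks $\L_V\Pi^{(n,0)}=\p_{x_1}\we\p_{x_n}$ and $\L_V(\p_{x_1}\we\p_{x_n})=0$), but it is not needed beyond the elementary computation.
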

\begin{proof}
It suffices to show that $\pb{\s^* x_i,\s^*x_j}_b=\s^*\pb{x_i,x_j}$ for $1\leqs i<j\leqs n$. Since $\s^*x_i=x_i$
for $i>1$, this is obvious when $1<i<j$. We therefore only need to verify the formula for $i=1$ and $j>1$. If $1<j<n$ then
$$
  \pb{\s^*x_1,\s^*x_j}_b=\pb{x_1+\frac{\beta}{x_n},x_j}_b=\(x_1+\frac{\beta}{x_n}\)x_j
  =\s^*(x_1x_j)
  =\s^*\pb{x_1,x_j}\;,
$$
where we have used, in the second equality, that $\pb{x_n,x_j}_b=-\pb{x_j,x_n}=-x_jx_n$,
with a minus sign because $j<n$. If $j=n$ then
\begin{gather*}
  \pb{\s^*x_1,\s^*x_n}_b=\pb{x_1+\frac{\beta}{x_n},x_n}_b=\pb{x_1,x_n}_b\\
  =x_1x_n+{\beta}=\s^*(x_1x_n) =\s^*\pb{x_1,x_n}\;.
\end{gather*}
\end{proof}
\begin{cor}\label{cor:rat_invol}
  The rational functions $F_\ell^{(n,0),b}$ defined in (\ref{E:integrals}) are in involution with respect to the
  Poisson bracket $\PB^{(n,0)}_b$. Similarly, the rational functions $G_\ell^{(n,0),b}$ are in involution.
\end{cor}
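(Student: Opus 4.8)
The plan is to read off the corollary from Proposition~\ref{prp:birat_Poisson} together with the already known involutivity of the undeformed first integrals. Recall that the $F_\ell=F^{(n,0)}_\ell$ are pairwise in involution for $\PB^{(n,0)}$, and that $\s^*F_\ell=F^b_\ell$ for the birational map $\s$ of (\ref{eq:birat}). Since $\s\colon\(\bbR^n,\PB^{(n,0)}_b\)\to\(\bbR^n,\PB^{(n,0)}\)$ is a Poisson map, its pullback is a morphism of Poisson algebras, so that $\pb{\s^*f,\s^*g}_b=\s^*\pb{f,g}$ for all rational $f,g$ (on the dense open set $x_n\neq0$, hence identically). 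Taking $f=F_\ell$ and $g=F_{\ell'}$ gives
\begin{equation*}
  \pb{F^b_\ell,F^b_{\ell'}}_b=\pb{\s^*F_\ell,\s^*F_{\ell'}}_b=\s^*\pb{F_\ell,F_{\ell'}}=0\;,
\end{equation*}
which is the first assertion.

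For the functions $G^b_\ell=\imath^*F^b_\ell$ I would reduce to the case just treated by exploiting the reflection $\imath$. The cleanest route is to observe that $\imath$ is anti-Poisson not only for $\PB^{(n,0)}$ but also for the deformed bracket $\PB^{(n,0)}_b$. This reduces to the two symmetries $A^{(n,0)}_{n+1-i,\,n+1-j}=-A^{(n,0)}_{i,j}$ (which holds since all upper-triangular entries of $A^{(n,0)}$ equal $+1$) and $B^{(n,0)}_{n+1-i,\,n+1-j}=-B^{(n,0)}_{i,j}$; the latter is immediate because the only nonzero entries of $B^{(n,0)}$ are $b_{1,n}=\beta$ and $b_{n,1}=-\beta$, which $i\mapsto n+1-i$ interchanges. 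Granting that $\imath^*$ reverses the deformed bracket, the involutivity of the $F^b_\ell$ yields
\begin{equation*}
  \pb{G^b_\ell,G^b_{\ell'}}_b=\pb{\imath^*F^b_\ell,\imath^*F^b_{\ell'}}_b=-\imath^*\pb{F^b_\ell,F^b_{\ell'}}_b=0\;.
\end{equation*}
Alternatively one may avoid this symmetry check by introducing the birational map $\s'$ which replaces $x_n$ by $x_n+\beta/x_1$ (it equals $\imath\circ\s\circ\imath$); one verifies, by exactly the computation of Proposition~\ref{prp:birat_Poisson}, that $\s'$ is a Poisson map $\(\bbR^n,\PB^{(n,0)}_b\)\to\(\bbR^n,\PB^{(n,0)}\)$ and that $\s'^*G_\ell=G^b_\ell$, and then repeats the first computation verbatim with $\s$ and $F_\ell$ replaced by $\s'$ and $G_\ell$.

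I do not expect a genuine obstacle here: once Proposition~\ref{prp:birat_Poisson} is available, the statement is a formal consequence of the functoriality of the Poisson bracket under Poisson and anti-Poisson maps. The only points demanding attention are using the correct direction of the identity $\pb{\s^*f,\s^*g}_b=\s^*\pb{f,g}$, and checking that the deformation does not spoil the anti-Poisson property of $\imath$, which is exactly what the symmetry $B^{(n,0)}_{n+1-i,\,n+1-j}=-B^{(n,0)}_{i,j}$ guarantees.
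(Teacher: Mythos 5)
Your proposal is correct and follows essentially the same route as the paper: pull back the known involutivity of the $F_\ell$ through the Poisson map $\s$ of Proposition~\ref{prp:birat_Poisson}, and handle the $G_\ell^b$ via the anti-Poisson map $\imath$. Your explicit verification that $\imath$ remains anti-Poisson for the deformed bracket (via the symmetries of $A^{(n,0)}$ and $B^{(n,0)}$ under $i\mapsto n+1-i$) is a detail the paper merely asserts, but it is the same argument.
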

\begin{proof}
Let $1\leqs \ell,\ell'\leqs\left[\frac{n-1}2\right]$. Then, according to Proposition \ref{prp:birat_Poisson},
\begin{equation*}
  \pb{F_\ell^b,F_{\ell'}^b}_b=\pb{\s^*F_\ell,\s^*F_{\ell'}}_b= \s^*\pb{F_\ell,F_{\ell'}}=0\;,
\end{equation*}%
where we have used in the last step that the functions $F_\ell$ of $\LV n0$ are in involution
\cite[Prop.\ 3.2]{KKQTV}. The fact that the functions $G_\ell^{b}$ are also in involution follows from the
fact that $\imath$ is an anti-Poisson map of $\(\bbR^n,\Pi_b\)$.
\end{proof}
Notice that although $\sigma$ is a birational Poisson isomorphism, it is \emph{not} an isomorphism between the
Hamiltonian systems $\LV n0$ and $\LVB n0$ because $\s^*H\neq H$. In particular, (\ref{cor:rat_invol}) does not
imply that the rational functions $F_\ell^b$ are in involution with the Hamiltonian $H$, i.e., that they are first
integrals of $\LVB n0$; this requires a separate proof, which has been given in Proposition
\ref{prp:rat_first_integrals} above.

\subsection{Integrability}\label{par:rat_integrability}
We now prove the Liouville and superintegrability of $\LVB n0$. As we will see, the main result that remains to be
proven is that the $n-1$ constructed first integrals, to wit the Hamiltonian, the rational functions $F^b_\ell$
(with $\ell=1,\dots,\left[\frac{n-1}2\right]$) and the rational functions $G^b_\ell$ (with
$\ell=1,\dots,\left[\frac{n-1}2\right]$ when $n$ is even and $\ell=2,\dots,\left[\frac{n-1}2\right]$ when $n$ is
odd) are independent, i.e. have independent differentials on an open dense subset of $\bbR^n$. Since these
functions are rational, it suffices to show that their differentials are independent in at least one point of
$\bbR^n$.

To see this, we use the fact that the undeformed functions $H,\ F_\ell$ and $G_\ell$ are independent at some point
$P$ (see \cite{KKQTV}). Since the deformed functions depend polynomially on the deformation parameter $\beta$,
they will still be
independent at $P$ for $\beta$ in a small interval, centered at zero. Notice that if we rescale all variables by a
factor $\lambda\neq0$ and rescale $\beta$ by a factor $\lambda^2$ all these functions also get multiplied by a
non-zero factor. It follows that the differentials of the deformed functions are independent at $P$ for all values
of $\beta$.

\begin{thm}\label{thm:rat_int}
  For any $n$, the Hamiltonian system $\LVB n0$ is superintegrable, with first integrals the rational functions
  $F^{(n,0),b}_\ell$ and $G^{(n,0),b}_\ell$. Moreover, it is Liouville integrable with rational functions
  $F^{(n,0),b}_\ell$, where $\ell=1,\dots,\left[\frac{n+1}2\right]$; also, it is Liouville integrable with rational
  functions $G^{(n,0),b}_\ell$, where $\ell=1,\dots,\left[\frac{n+1}2\right]$.
\end{thm}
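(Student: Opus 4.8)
The plan is to assemble the three integrability claims---superintegrability and the two Liouville integrability statements---directly from the three ingredients that have already been established in this section: (i) that the $F^b_\ell$ and $G^b_\ell$ are first integrals of $\LVB n0$ (Proposition \ref{prp:rat_first_integrals}), (ii) that the $F^b_\ell$ are pairwise in involution and the $G^b_\ell$ are pairwise in involution (Corollary \ref{cor:rat_invol}), and (iii) that the full collection is functionally independent, which is exactly the content of the rescaling argument given in the paragraph immediately preceding the theorem. Essentially all of the work has been done; the proof is a bookkeeping argument that counts the functions and invokes the definitions of Liouville and superintegrability recalled at the start of the section.

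For superintegrability, I would first recall from the discussion at the beginning of Section \ref{sec:rational} that the Poisson structure $\Pi^{(n,0)}_b$ has rank $2\bigl[\tfrac n2\bigr]$ (by Proposition \ref{prp:poisson_deformation}), so that $n-1$ independent first integrals are required. Then I would point out that the list $H, F^b_1,\dots,F^b_{[(n-1)/2]}, G^b_1,\dots,G^b_{[(n-1)/2]}$ (with the understood removal of the single coincidence $F^b_1=G^b_1$ when $n$ is odd) consists of exactly $n-1$ functions: this count is inherited from the undeformed case, where the corresponding undeformed functions are already known to number $n-1$ and to be independent \cite{KKQTV}. Since $e^{\D b}$ reduces to the identity when $\beta=0$, the deformed list has the same cardinality and, by the rescaling argument, the same independence. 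Each is a first integral by Proposition \ref{prp:rat_first_integrals}, so superintegrability follows.

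For the two Liouville claims, I would note that the Poisson structure has rank $2\bigl[\tfrac n2\bigr]$, so Liouville integrability requires $\bigl[\tfrac{n+1}2\bigr]$ independent first integrals in involution. The functions $F^b_\ell$ with $\ell=1,\dots,\bigl[\tfrac{n+1}2\bigr]$ are exactly this many (recalling that $F^b_{[(n+1)/2]}=H$), they are first integrals by Proposition \ref{prp:rat_first_integrals}, they are pairwise in involution by Corollary \ref{cor:rat_invol} together with the fact---established in Proposition \ref{prp:rat_first_integrals}---that each $F^b_\ell$ is in involution with $H$, and they are independent by the rescaling argument. The identical reasoning applied to the $G^b_\ell$, using that $\imath$ is an anti-Poisson map preserving $H$, gives the second Liouville statement.

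I do not expect any genuine obstacle, since every component has been isolated beforehand; the one point requiring care is the two-step origin of involutivity. Corollary \ref{cor:rat_invol} only yields $\pb{F^b_\ell,F^b_{\ell'}}_b=0$ for $\ell,\ell'\le\bigl[\tfrac{n-1}2\bigr]$, i.e.\ among the genuinely rational integrals, and says nothing about the Hamiltonian; as the remark after Corollary \ref{cor:rat_invol} stresses, involution with $H=F^b_{[(n+1)/2]}$ is \emph{not} a consequence of $\sigma$ being Poisson (because $\sigma^*H\neq H$) and must instead be read off from Proposition \ref{prp:rat_first_integrals}, which asserts precisely that each $F^b_\ell$ is a first integral, i.e.\ Poisson-commutes with $H$. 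I would therefore take care in the write-up to cite both results for involutivity, and otherwise present the theorem as a direct consequence of the preceding three results.
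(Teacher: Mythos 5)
Your proposal is correct and follows essentially the same route as the paper: count the $n-1$ independent first integrals for superintegrability, and for each Liouville claim take the $\left[\frac{n+1}2\right]$ functions $F^b_\ell$ (resp.\ $G^b_\ell$), invoking Proposition \ref{prp:rat_first_integrals}, Corollary \ref{cor:rat_invol}, and the rescaling/deformation argument for independence. Your explicit remark that involutivity with $H$ must come from Proposition \ref{prp:rat_first_integrals} rather than from the Poisson map $\sigma$ is a point the paper itself emphasizes just before the theorem, so you are in full agreement with its proof.
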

\begin{proof}
Recall that a superintegrable system on an $n$-dimensional manifold is a vector field (Hamiltonian or not), which
has $n-1$ independent first integrals. As we have constructed precisely this number of independent first integrals
for $\LVB n0$, we have proven its superintegrability. For Liouville integrability of a Hamiltonian vector field on
an $n$-dimensional Poisson manifold of rank $2r$ we need $n-r$ independent first integrals which are in
involution. Here, the rank of the Poisson structure $\Pi_b$ is $2\[\frac n2\]$ (see
Proposition~\ref{prp:poisson_deformation}) so that we need $n-\[\frac n2\]=\[\frac{n+1}2\]$ such first integrals, which
is exactly the number of independent first integrals $F^b_\ell$ (or $G^b_\ell$) that we have, and they are in
involution by Corollary \ref{cor:rat_invol}. Notice that each of these sets of first integrals contains the
Hamiltonian $H$. Notice also that when $n$ is odd, $F_1^b$ is a Casimir function of $\Pi_b$.
\end{proof}

\begin{example}
For $n=4$ and $k=0$ the matrices $A^{(4,0)}$ and $B^{(4,0)}$ are given by
\begin{equation*}
  A^{(4,0)}=
  \begin{pmatrix}
    0&1&1&1\\
    -1&0&1&1\\
    -1&-1&0&1\\
    -1&-1&-1&0
  \end{pmatrix}\;,\quad
  B^{(4,0)}=
  \begin{pmatrix}
    0&0&0&\b\\
    0&0&0&0\\
    0&0&0&0\\
    -\b&0&0&0
  \end{pmatrix}\;.
\end{equation*}%
The corresponding system $\LVB40$ is given by the formulas
\begin{eqnarray*}
  \dot x_1&=&x_1(x_2+x_3+x_4)+\b\;,\\
  \dot x_2&=&x_2(-x_1+x_3+x_4)\;,\\
  \dot x_3&=&x_3(-x_1-x_2+x_4)\;,\\
  \dot x_4&=&x_4(-x_1-x_2-x_3)-\b\;,
\end{eqnarray*}
and besides the Hamiltonian $H=x_1+x_2+x_3+x_4$ it has two more independent rational
first integrals $F^b$ and $G^b$, namely
$$
F^b=\frac{(x_1+x_2)x_4+\b}{x_3} \quad \text{and} \quad
G^b=\imath^*F^b=\frac{(x_4+x_3)x_1+\b}{x_2}\;,
$$ where $\imath$ is the anti-Poisson map defined in \eqref{eq:anti_poisson}.  The above three functions give the
superintegrability of the system $\LVB 40$. The rank of the Poisson structure is $4$ and each one of the pairs
$(H,F^b)$ and $(H,G^b)$ provide the Liouville integrability of $\LVB 40$.
\end{example}

\subsection{Explicit solutions}\label{par:solutions}
The Hamiltonian vector field $\X_H$ of $\LV n0$ can be explicitly integrated in terms of elementary functions, as
was first shown in \cite{KQV}. We show that such an integration can also be done for (\ref{eqn:vf_k=0}), the
Hamiltonian vector $\X_H$ of $\LVB n0$. This is most easily done by introducing some
linear coordinates on $\bbR^n$: for $i=0, 1,\dots,n$, let $u_i:=x_1+x_2+\cdots+x_i$ and notice that $u_0=0$ and
$u_n=H$. It is clear that everything can be
easily expressed in terms of the coordinates $u_i$ by substituting $u_{i}-u_{i-1}$ for $x_i$ ($i=1,\dots,n$). As we
will see, this simplifies some of the formulas (for $\X_H$, for example) and makes others more complex (the rational
integrals, for example).
For the proposition which follows, the formulas are the simplest when expressed
in the $u_i$ coordinates.

First, we need to  express $\LVB n0$ in terms of the coordinates $u_1,\dots,u_n$. The simplest way to do this is to
first compute $\Pi_b$ in terms of these coordinates. Since for $i<j$, $\pb{x_i,x_j}_b=x_ix_j$, except that
$\pb{x_1,x_n}=x_1x_n+\b$, we get
\begin{eqnarray}\label{eq:pb_in_u}
  \pb{u_i,u_j}_b&=&u_i(u_j-u_i)\;,\qquad \qquad\hbox{if $1\leqs i<j<n$}\;,\nonumber\\
  \pb{u_i,u_n}_b&=&u_i(u_n-u_i)+\beta\;, \qquad\hbox{if $1\leqs i<n$}\;.
\end{eqnarray}%
Since $H=u_n$, we can compute $\X_H$ as $\Pb{u_n}_b$, which takes in view of the above formulas the following
simple, decoupled form:
\begin{eqnarray}\label{eq:LV0_u}
  \dot{u}_i&=&u_i(H-u_i)+\beta, \qquad i=1, 2, \ldots, n-1\;,\nonumber\\
  \dot{u_n}&=&0\;,
\end{eqnarray}
which can easily be integrated, for any initial condition.  We describe the integration in a geometrical language,
which will be useful when we use it in Section \ref{sec:discretizations}. For any point $P\in\bbR^n$, we can
consider the integral curve of $\X_H$, starting from $P$, which we will denote by $\gamma_P$. Usually, the domain
of an integral curve is taken to be an interval, but in the present case we will take it to be all of $\bbR$ minus
a discrete subset. On the one hand, it is natural to do this because in the case of $\X_H$ the solutions are
precisely defined on such a set. On the other hand, the systems $\LVB nk$ can equally be defined on a complex phase
space $\bbC^n$ and then the integral curves, with complex time, are defined for all of $\bbC$, minus a discrete
subset; the domain of the real integral curves which we consider is just the real part of this complex
subset. Since it is convenient to express the integral curves in terms of coordinates (here the $u_i$ coordinates)
we will write, once $P$ has been fixed, $u_i(t)$ for $u_i(\gamma_P(t))$. The $u$-coordinates of $P$ will be denoted
$(P_1,\dots,P_n)$, so $P_i=u_i(P)$ for $i=1,\dots,n-1$ and $P_n=u_n(P)=H(P)$.
\begin{prop}\label{prp:solutions}
  Let $P$ be any point of $\bbR^n$ and let $\gamma_P$ denote the integral curve of (\ref{eq:LV0_u}), which is $\LVB
  n0$, expressed in the $u_i$ variables, starting from $P$.  Denote by $h$ the value of the Hamiltonian at $P$,
  i.e., $h=H(P)=u_n(P)$. Let $\Delta_0$ be a square root of ${h^2+4\beta}$, which may be real or imaginary. Then,
  for $i=1,2, \ldots, n-1$,
  \begin{equation}\label{eq:u_solution}
    u_i(t)=\left\{
      \begin{array}{ll}
        P_i\;, & \text{if }  P_i^2-P_ih-\beta=0\;,\\ \\
        \frac{h}{2}+\frac{2P_i-h}{2+t(2P_i-h)}, \quad &\text{if } \Delta_0^2=h^2+4\b=0\;,\\ \\
        \frac{(h+\Delta_0)(h-\Delta_0-2P_i)-(h+\Delta_0-2P_i)(h-\Delta_0)e^{-t\Delta_0}}
        {2(h-\Delta_0-2P_i)-2(h+\Delta_0-2P_i)e^{-t\Delta_0}}\;, \ &\text{otherwise}.
      \end{array}
      \right.
  \end{equation}
  Obviously, $u_n(t)=H$ is constant.
\end{prop}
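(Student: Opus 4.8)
The plan is to exploit the fact that the system \eqref{eq:LV0_u} is already decoupled. Along the integral curve $\gamma_P$ the last equation gives $u_n(t)\equiv u_n(P)=h$, so the constant $H$ appearing in the remaining equations equals $h$; each of the functions $u_1,\dots,u_{n-1}$ then satisfies the \emph{same} scalar autonomous Riccati equation
\begin{equation*}
  \dot u=u(h-u)+\beta=-\left(u^2-hu-\beta\right)\;,
\end{equation*}
subject to its own initial condition $u_i(0)=P_i$. Thus the whole problem reduces to integrating a single first-order autonomous ODE, and the three branches in \eqref{eq:u_solution} will correspond to the three qualitatively different behaviours of this equation.

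To integrate it, I would factor the right-hand side as $-(u-u_+)(u-u_-)$, where $u_\pm=\tfrac12(h\pm\Delta_0)$ are the roots of $u^2-hu-\beta$ and $\Delta_0$ is the chosen square root of $h^2+4\beta$. The condition $P_i^2-P_ih-\beta=0$ says precisely that $P_i$ is one of these roots, i.e.\ an equilibrium of the equation; in that case $u_i(t)\equiv P_i$ is the unique solution, giving the first branch. When $\Delta_0=0$ the two roots coincide at $h/2$, and the substitution $v:=u-h/2$ turns the equation into $\dot v=-v^2$, whose solution $v(t)=v_0/(1+v_0t)$ with $v_0=P_i-h/2=(2P_i-h)/2$ rearranges at once into the second branch. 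In the remaining (generic) case the roots are distinct and $P_i$ is not an equilibrium, so separating variables and using the partial-fraction identity $\frac{1}{(u-u_+)(u-u_-)}=\frac1{\Delta_0}\bigl(\frac1{u-u_+}-\frac1{u-u_-}\bigr)$ yields
\begin{equation*}
  \frac{u-u_+}{u-u_-}=K\,e^{-\Delta_0 t}\;,\qquad K=\frac{P_i-u_+}{P_i-u_-}\;,
\end{equation*}
and solving this linearly for $u$ produces the third branch after substituting $u_\pm=\tfrac12(h\pm\Delta_0)$.

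There is essentially no conceptual obstacle here: the only work is the algebraic simplification in the last case, where one checks that $P_i-u_-=-\tfrac12(h-\Delta_0-2P_i)$ and $P_i-u_+=-\tfrac12(h+\Delta_0-2P_i)$, so that $K=(h+\Delta_0-2P_i)/(h-\Delta_0-2P_i)$, and then clears denominators to recover exactly the stated formula. I would also remark that the three cases are exhaustive and that the $e^{-\Delta_0 t}$ factor is meaningful whether $\Delta_0$ is real or imaginary, the apparent poles of the rational expressions in $t$ being exactly the discrete set on which $\gamma_P$ is not defined. Alternatively, and perhaps more cleanly for the write-up, one can simply differentiate each proposed expression in \eqref{eq:u_solution}, verify that it satisfies $\dot u=-u^2+hu+\beta$ together with $u_i(0)=P_i$, and invoke uniqueness of solutions of the scalar ODE to conclude.
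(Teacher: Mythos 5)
Your proof is correct and follows exactly the route the paper intends: the proposition is stated there without proof, the system having already been reduced to the decoupled scalar Riccati equations (\ref{eq:LV0_u}), and your three-way case analysis (equilibrium initial data, double root via $v=u-h/2$, and separation of variables with the partial-fraction identity in the generic case) together with the final algebraic simplification reproduces (\ref{eq:u_solution}) exactly, with the correct initial condition $u_i(0)=P_i$. Nothing further is needed.
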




\section{The Liouville and non-commutative integrability of $\LVB nk$}\label{sec:k>0}
In this section, we generalize the results of Section \ref{sec:rational} on the integrability of $\LVB n0$ to the
case of $\LVB nk$, where $k\in\bbN$ satisfies $2k+1<n$, but is otherwise arbitrary. We do not treat here the case
of $n=2k+1$ because we have already established the Liouville integrability of $\LVB{2k+1}k$ in \cite{RCD}. We show
in this section that if $1<2k+1<n$ then $\LVB nk$ is on the one hand Liouville integrable, and on the other hand is
non-commutatively integrable of rank $k+1$. We start by recalling the definition of non-commutative integrability
(see \cite{Mich_Fom,PLV}), which we specialize to $\bbR^n$.
\begin{defn}\label{def:non-com}
Let $\Pi$ be a Poisson structure on $\bbR^n$, with associated Poisson bracket $\PB$. Let $\mathbf
F=(f_1,\dots,f_s)$ be an $s$-tuple of functions on $\bbR^n$, where $2s\geqs n$ and set $r:=n-s$. Suppose the
following:
\begin{enumerate}
  \item[(1)] The functions $f_1,\dots,f_r$ are in involution with the functions $f_1,\dots,f_s$:
  $$ \{f_i,f_j\}=0,\qquad 1\leqs i\leqs r \hbox{ and } 1\leqs j \leqs s\;;$$
  \item[(2)] For $P$ in a dense open subset of $\bbR^n$:
  $$ \diff f_1(P)\wedge \dots \wedge \diff f_s(P)\ne 0 \quad \hbox{and} \quad \X_{f_1}|_P\wedge \dots \wedge
    \X_{f_r}|_P\ne 0\;.  $$
\end{enumerate}
Then the triplet $(\bbR^n,\Pi,\mathbf F)$ is called a \emph{non-commutative integrable system} of \emph{rank} $r$.
\end{defn}
The classical case of a \emph{Liouville integrable system} corresponds to the particular case where $r$ is half the
(maximal) rank of $\Pi$; this implies that \emph{all} the functions $f_1,\dots,f_s$ are pairwise in involution. The
case of a superintegrable system corresponds to $r=1$; in this case, setting $H=f_1$, condition (1) just means
that $\X_H$ has $n-1$
first integrals, while the second condition in (2) is trivially satisfied: superintegrability means, as recalled in
the previous section, that $\X_H$ has $n-1$ independent first integrals.

In order to establish Liouville and non-commutative integrability in Section \ref{par:gen_defo_integ} below, we
first construct a set of polynomial first integrals for $\LVB nk$, which are pairwise in involution, and then we
construct a set of rational first integrals for $\LVB nk$, which are also pairwise in involution. This will be done
in the two subsections which follow. Throughout the section, we suppose that $1< 2k+1< n$ and that
$B=B^{(n,k)}$ is a skew-symmetric $n\times n$ matrix such that (\ref{eq:deformed_reduced_poisson}) defines a
Poisson structure on~$\bbR^n$. Recall that this means that the uppertriangular entries $b_{i,j}$ of $B$ with
$j-i\notin\set{m,m+1}$ are zero.

\subsection{The polynomial first integrals}\label{par:pol_int}
Recall from Section \ref{par:BI_defo_red_def} that the systems $\LVB nk$ are obtained by reduction from the systems
$\LVB{2m+1}m$, where $m:=n-k-1>~k$, where the last inequality comes from our assumption $n>2k+1$. Recall also that
in order to do this reduction, one supposes that the last $2m+1-n$ rows and columns of the $(2m+1)\times(2m+1)$
matrix $B$ are zero, so that $B$ can be viewed as an $n\times n$ matrix by removing these zero rows and columns
(see Proposition \ref{prp:poisson_sbmfd}).  Since $\LVB{2m+1}m$ is Liouville integrable, with $m+1$ independent
polynomial first integrals, whose formulas are recalled below, we obtain by reduction a set of first integrals of $\LVB
nk$, which are automatically in involution with respect to the reduced Poisson structure, which is by definition
the Poisson structure $\Pi^{(n,k)}_b$ of $\LVB nk$. One has however to be careful with the independence of the
reduced first integrals, for example some of these reduced first integrals are zero!  Moreover, since $n>2k+1$,
more first integrals are needed for integrability, as we will see.

Let us first recall the formulas for the (polynomial) first integrals of $\LVB{2m+1}m$. One method
of constructing them is as coefficients of the characteristic polynomial of the Lax operator
$L(\l):=X+\lambda^{-1}\Delta+\lambda M$, which we recalled in (\ref{eq:bogo_deformed_lax}). It is a
classical fact that the coefficients of the characteristic polynomial of a Lax operator yield first
integrals for any Lax equation in which the operator appears \cite[Sect.\ 12.2.5]{PLV}. For $L(\l)$,
the following expansion of its characteristic polynomial was obtained in \cite[Prop.\ 8]{RCD}:
\begin{equation}\label{eq:char_poly_Lax}
  \det(L(\l)-\mu\Id)=\l^{2m+1}+\frac{1}{\lambda^{2m+1}}\prod_{j=1}^{2m+1}(b_{j+m,j}-\l\mu)+
    \sum_{i=0}^m (\l\mu)^{m-i}K_i^b\;.
\end{equation}%
Thus, the polynomials $K_i^b$ in this expansion are first integrals of $\LVB{2m+1}m$. Setting the deformation
parameters equal to zero, one recovers the first integrals, $K_i$, of $\LV{2m+1}m$ which were first constructed by
Bogoyavlenskij~\cite{Bog2} and Itoh \cite{itoh2}.
We construct $k+1$ first integrals of $\LVB nk$ by setting, for $i=0,\dots,k,$
\begin{equation}\label{eq:pol_int}
  K^{(n,k),b}_i:={K_i^b}_{\big\vert\bbR^n}:={
  K_i^b}_{\big\vert x_{n+1}=x_{n+2}=\cdots=x_{2m+1}=0}\;,
\end{equation}%
where the notation introduced by the latter equality is a convenient shorthand. By construction, these polynomials
are first integrals of $\LVB nk$ and they are in involution. Also, $K_0^b=H$ and $K_i^b$ is of degree $2i+1$ for
$i=0,1,\dots,k$.

We give an alternative description of the first integrals (\ref{eq:pol_int}) as deformations of the polynomial
first integrals of $\LV nk$. On the one hand, this description will be important for showing the independence of
these first integrals, and on the other hand it will provide information about the structure of these first integrals,
which we will use to prove some of their properties (involutivity, for example).

It was shown in \cite[Prop.\ 9]{RCD} that the first integrals $K_i^b$ of $\LVB{2m+1}m$ can be obtained using the
operator $\D b$, defined by
\begin{equation}\label{eq:DB_operator}
  \D{b}:=\sum_{1\leqs i\leqs 2m+1}b_{i,i+m}\frac{\p^2}{\p x_i\p x_{i+m}}\;,
\end{equation}%
by the following formula, valid for $i=0,1,\dots,m$:
$$
  K_i^b=e^{\D b}K_i=K_i+\D bK_i+\frac{1}{2!}\D b^2K_i+\cdots+\frac{1}{i!}\D b^i K_i\;.
$$
We have used in the last step that $\deg K_i=2i+1$ (see Formula \eqref{eq:k_i_itoh_rest} below).
Let us show that $\D b $ commutes with restriction to $\bbR^n$. Let $F$ be a smooth or rational function on
$\bbR^{2m+1}$. In view of the conditions on $B=B^{(n,k)}$ (see Proposition~\ref{prp:poisson_deformation}),
the operator $\D b$ is given by
\begin{equation}\label{eq:DB_operator_bis}
  \D{b}:=\sum_{1\leqs i\leqs k+1}b_{i,i+m}\frac{\p^2}{\p x_i\p x_{i+m}}-
         \sum_{1\leqs i\leqs k}b_{i,i+m+1}\frac{\p^2}{\p x_i\p x_{i+m+1}}\;,
\end{equation}%
and we see that $\D b$ does not involve derivation with respect to any of the variables
$x_{n+1},\ x_{n+2},\dots,\ x_{2m+1}$ (recall that $n=m+k+1$), so $\D b$ commutes with restriction to the subspace
$\bbR^n$ of $\bbR^{2m+1}$, which is defined by $x_{n+1}=x_{n+2}=\cdots=x_{2m+1}=0$. It follows that
\begin{eqnarray}
  K^{(n,k),b}_i&=&{K_i^b}_{\big\vert\bbR^n}=\(e^{\D b}K_i\)_{\big\vert\bbR^n}
  =e^{\D b}{K_i}_{\big\vert\bbR^n}=e^{\D b}K^{(n,k)}_i\qquad\label{eq:K_are_defos}\\
  &=&K^{(n,k)}_i+\D bK^{(n,k)}_i+\frac{1}{2!}\D b^2K^{(n,k)}_i+\cdots+
  \frac{1}{i!}\D b^i K^{(n,k)}_i\;.\nonumber
\end{eqnarray}
This shows that the polynomial first integrals $K_i^{(n,k),b}$ of $\LVB nk$ are deformations of the first integrals
$K_i^{(n,k)}$ of $\LV nk$. Notice also that (\ref{eq:K_are_defos}) implies that $K_i^{(n,k),b}=0$ when $i>k$ since
$K_i^{(n,k)}=0$ when $i>k$ (see the comments after Proposition 3.3 in \cite{PPPP}). That is the reason why we
restricted $i$ in (\ref{eq:pol_int}) to $i=0,1,\dots,k$ rather than $i=0,1,\dots,m$.

For later use, we quickly recall from \cite{PPPP} a combinatorial formula for $K_i^{(n,k)}$. Let
$\um=(m_1,m_2,\dots,m_{2i+1})$ be a $2i+1$-tuple of integers, satisfying $1\leqs m_1<m_2<\cdots<m_{2i+1}\leqs
n$. We view these integers as indices of the rows and columns of $A^{(n,k)}$: we denote by $A^{(n,k)}_{\um}$ the
square submatrix of $A^{(n,k)}$ of size $2i+1$, corresponding to rows and columns $m_1,m_2,\dots,m_{2i+1}$
of~$A^{(n,k)}$, so that
\begin{equation*}
  (A^{(n,k)}_{\um})_{s,t}=(A^{(n,k)})_{m_s,m_t}\;, \hbox{ for } s,t=1,\dots,2i+1\;.
\end{equation*}
Letting
\begin{equation*}\label{eq:S_def}
  \cS^{(n,k)}_i:=\set{\um\mid A^{(n,k)}_{\um}=A^{(2i+1,i)}}\;,
\end{equation*}
the first integral $K_i^{(n,k)}$ is given by
\begin{equation}\label{eq:k_i_itoh_rest}
  K_i^{(n,k)}=\sum_{\um\in \cS^{(n,k)}_{i}} x_{m_1}x_{m_2}\dots x_{m_i}\dots x_{m_{2i+1}}\;.
\end{equation}
One immediate consequence is that every variable $x_j$ has degree at most one in $K_i^{(n,k)}$
and also in $K_i^{(n,k),b}$.

\goodbreak

\subsection{The rational first integrals}\label{par:rat_int}

We now construct a set of rational first integrals of $\LVB nk$. In order to follow some of the more technical arguments
in this subsection, the reader is advised to already take a look at Section \ref{par:examples} below, where explicit
formulas for a few examples are given.

Recall that we assume in this section that $k>0$ and that $n>2k+1$. We define the rational first integrals of $\LVB nk$
as deformations of the rational first integrals of $\LV nk$, which were first constructed in \cite{PPPP}. We first recall
the definition of the latter first integrals as pullbacks of the rational first integrals $F_\ell^{(n-2k,0)}$ and
$G_\ell^{(n-2k,0)}$ of $\LV {n-2k}0$, which we recalled in Section \ref{par:first_integrals_LVn0}. Consider the
polynomial map $\phi_k:\bbR^n\to\bbR^{n-2k}$, defined by
\begin{equation*}
  \phi_k(a_1,a_2,\dots,a_n):=a_1a_2\dots a_k(a_{k+1},a_{k+2},\dots,a_{n-k})a_{n-k+1}\dots a_{n-1}a_n\;.
\end{equation*}
If we denote the standard coordinates on $\bbR^n$ by $x_1,\dots,x_n$, and on $\bbR^{n-2k}$ by $y_1,\dots,y_{n-2k}$,
then $\phi^*y_i=x_1x_2\dots x_kx_{k+i}x_{n-k+1}\dots x_n$ for $i=1,\dots,n-2k$. It was shown in \cite{PPPP} that
for $\ell=1,2,\dots,\fr{n+1}2-k$, the rational functions $F_\ell^{(n,k)}:=\phi^*_kF_\ell^{(n-2k,0)}$ and
$G_\ell^{(n,k)}:=\phi^*_kG_\ell^{(n-2k,0)}$ are first integrals of $\LV nk$ and that the first integrals $F_\ell^{(n,k)}$
are pairwise in involution with respect to $\PB^{(n,k)}$, just like the first integrals $G_\ell^{(n,k)}$. Setting
$s:=2\ell-1$ when $n$ is odd and $s:=2\ell$ when $n$ is even, so that $s$ and $n$ have the same parity, one
computes easily from (\ref{eq:integrals}) that
\begin{equation}\label{eq:rat_integrals_kn}
  F_\ell^{(n,k)}= \prod_{i=1}^k(x_ix_{n+1-i})\sum_{j=1}^sx_{j+k}
  \prod_{t=1}^{\frac{n-s}2-k}\frac{x_{s+k+2t}}{x_{s+k+2t-1}}\;.
\end{equation}
When $\ell=\fr{n+1}2-k$, the last product in this expression reduces to 1 and so $F_\ell^{(n,k)}$ is actually a
polynomial function. Since some of the arguments below which depend on the structure of the first integrals fail for the
polynomial first integrals, we will exclude these first integrals in this section, and so we will throughout this section
only consider the rational first integrals $F_\ell^{(n,k)}$, with $\ell=1,2,\dots,\fr{n-1}2-k$.  Notice also that every
variable or its inverse appears precisely once in (\ref{eq:rat_integrals_kn}), with the $k$ variables
$x_1,\dots,x_{k}$ and the $k+1$ variables $x_{n-k},\dots,x_n$ appearing linearly. This property will be important
in what follows.

We construct the rational first integrals of $\LVB nk$ as deformations of the first integrals $F_\ell^\nk$ by using the
operators $\D b$ (see (\ref{eq:DB_operator}) or (\ref{eq:DB_operator_bis})): for $\ell=1,2,\dots,\fr{n-1}2-k$, we set
\begin{equation*}
  F_\ell^\nkb:=e^{\D b} F_\ell^\nk=e^{\D b}\(\phi_k^*F_\ell^{(n-2k,0)}\)\;,
\end{equation*}%
and similarly for $G_\ell^\nkb$. The present subsection is devoted to the proof of the following theorem, which
says that the deformed first integrals $F_\ell^\nkb$ are first integrals of $\LVB nk$ which are pairwise in involution.
\begin{thm}\label{thm:defo_rat_invol}
  For $1\leqs\ell\leqs\ell'\leqs\fr{n-1}2-k$,
  \begin{equation*}
    \pb{F_\ell^\nkb,H}_b^\nk=0\;\ \hbox{ and }\ \pb{F_\ell^\nkb,F_{\ell'}^\nkb}_b^\nk=0  \;.
  \end{equation*}%
  The same result holds for the rational functions $G_\ell^\nkb$.
\end{thm}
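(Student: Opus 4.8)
The plan is to mirror the strategy that worked for $\LVB n0$ in Section \ref{sec:rational}, where involutivity of the deformed first integrals was obtained by realizing the deformation $e^{\D b}$ as the pullback by a Poisson map $\s$. For $k>0$ the obstruction is that there are $2k+1$ deformation parameters rather than one, so I would first decompose the operator as $\D b=\sum_{p=1}^{2k+1}\Dp p$, where each $\Dp p$ involves a single parameter $b_{i,i+m}$ or $b_{i,i+m+1}$, and attempt to realize each factor $e^{\Dp p}$ as the pullback by a birational Poisson map $\s_p$, analogous to (\ref{eq:birat}). The key structural input from (\ref{eq:rat_integrals_kn}) is that the variables $x_1,\dots,x_k$ and $x_{n-k},\dots,x_n$ appear \emph{linearly} in $F_\ell^\nk$, which is precisely what guaranteed in the $k=0$ case that $\D b$ applied twice vanishes and that $e^{\D b}$ amounts to a substitution $x_1\mapsto x_1+\beta/x_n$. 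I would aim to show that, for a suitably chosen parameter, $e^{\Dp p}F_\ell^\nk=\s_p^*F_\ell^\nk$ for a map $\s_p$ of the form $x_i\mapsto x_i+(\hbox{const})/x_j$, acting only on a variable appearing linearly.

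The crucial subtlety, flagged in the introduction, is the \emph{order} in which the parameters are added: after applying one $e^{\Dp p}$, the intermediate rational function no longer has all the original variables appearing linearly, so the next map $\s_{p+1}$ can only be realized as a pullback if the variable it modifies still appears linearly at that stage. Thus the first thing I would do carefully is determine an ordering of the parameters $p=1,\dots,2k+1$ such that at each step the relevant variable is still linear, and verify that each intermediate $\s_p$ is a Poisson map between the appropriately partially-deformed Poisson brackets, in analogy with Proposition \ref{prp:birat_Poisson}. With such an ordering in hand, writing $\s:=\s_1\circ\cdots\circ\s_{2k+1}$, I would get $F_\ell^\nkb=\s^*F_\ell^\nk$ and $\s$ is a Poisson map from $(\bbR^n,\PB_b^\nk)$ to $(\bbR^n,\PB^\nk)$; then involutivity follows exactly as in Corollary \ref{cor:rat_invol}:
\begin{equation*}
  \pb{F_\ell^\nkb,F_{\ell'}^\nkb}_b^\nk=\pb{\s^*F_\ell^\nk,\s^*F_{\ell'}^\nk}_b^\nk
  =\s^*\pb{F_\ell^\nk,F_{\ell'}^\nk}^\nk=0\;,
\end{equation*}
using that the undeformed $F_\ell^\nk$ are in involution (\cite{PPPP}). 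The statement for the $G_\ell^\nkb$ then follows from the anti-Poisson involution $\imath$, exactly as before.

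For the first assertion, $\pb{F_\ell^\nkb,H}_b^\nk=0$, i.e.\ that $F_\ell^\nkb$ is genuinely a first integral, I would proceed by direct computation as in Proposition \ref{prp:rat_first_integrals} rather than through $\s$, since $\s$ does not preserve $H$ (indeed $\s^*H\ne H$, as noted after Corollary \ref{cor:rat_invol}). Concretely, I would show $(\log F_\ell^\nkb)^\cdot=0$ by differentiating the logarithm of the explicit deformed expression along the vector field (\ref{eq:BI_reduced_deformed}); the linear appearance of the boundary variables should again make the logarithmic-derivative telescoping work, with the deformation terms $b_{i,j}$ precisely canceling the extra contributions, just as $\beta/x_n$ did in the $k=0$ computation.

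The main obstacle I anticipate is the combinatorial bookkeeping of finding the correct ordering of the $2k+1$ parameters and proving that each partial map $\s_p$ is Poisson and acts as $e^{\Dp p}$ on the intermediate function; this is exactly the point where the paper warns that ``some care has to be taken'' and where the structure of the first integrals ``at each step is very important.'' Establishing that the relevant variable remains linear after each substitution — so that $\Dp{p+1}$ applied twice still vanishes and the pullback interpretation survives — is the heart of the argument, and I expect it to require tracking how each $\s_p$ alters the monomial structure of (\ref{eq:rat_integrals_kn}).
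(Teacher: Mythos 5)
Your treatment of the second assertion (involutivity of the $F_\ell^\nkb$ among themselves) is essentially the paper's proof in outline: the paper labels the $2k+1$ parameters in a specific alternating order $b_p:=(-1)^{p+1}b_{k+1-\fr p2,n-\fr{p-1}2}$, sets $\F p_\ell:=e^{\Dp p}\F{p-1}_\ell$, proves that $\F p_\ell=\s_p^*\F{p-1}_\ell$ (Proposition \ref{prp:pullbacks}, whose content is exactly the linearity-tracking you identify as the heart of the matter, via the decomposition $\F p_\ell=x_{n-\fr p2}x_{k-\fr{p-1}2}E_1+E_2$), proves that each $\s_p:\(\bbR^n,\PB_{(p)}\)\to\(\bbR^n,\PB_{(p-1)}\)$ is Poisson (Proposition \ref{prp:sigma_poisson}), and concludes by induction exactly as in your displayed chain of equalities. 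The reduction of the $G_\ell^\nkb$ case to the $F_\ell^\nkb$ case via $\imath$ is also as in the paper.

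The gap is in your plan for the first assertion, $\pb{F_\ell^\nkb,H}_b^\nk=0$. A direct logarithmic-derivative computation in the style of Proposition \ref{prp:rat_first_integrals} is not available when $k>0$: that computation relied on $F_\ell^{(n,0),b}$ being a \emph{product} of two simple factors whose logarithmic derivatives telescope, whereas for $k>0$ with several deformation parameters the deformed integral no longer factors — its numerator is a sum of monomials carrying different subsets of the parameters, including cross terms such as the $b_1b_3x_4$ term in $F_1^{(7,1),b}$ in Section \ref{par:examples} — so there is no single logarithm to differentiate and no evident telescoping. The paper instead stays inside the Poisson-map framework: it first proves Lemma \ref{lma:F_ders}, namely that $\pb{x_i,F_\ell^{(p)}}_{(p)}=0$ for all $i\notin\set{k+1,\dots,k+s}$, by induction on $p$ using the maps $\s_p$ (the point being that $\s_p$ only perturbs $x_i$ by $b_p/x_j$ with $j$ again outside that window), and then observes that $\s_p^*\(H-\frac{b_p}{x_t}\)=H$ with $t\notin\set{k+1,\dots,k+s}$, so that $\pb{F_\ell^{(p)},H}_{(p)}=\s_p^*\pb{F_\ell^{(p-1)},H-\frac{b_p}{x_t}}_{(p-1)}=0$ by induction together with the lemma. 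You would need to replace your direct computation by an argument of this kind (or find and verify the nontrivial telescoping identities for each pulled-back factor, which the paper does not attempt); note that the auxiliary Lemma \ref{lma:F_ders} is also what later makes the polynomial and rational integrals commute in Proposition \ref{prp:all_in_invol}, so it is worth proving in this form rather than only the special case $\pb{F_\ell^b,H}_b=0$.
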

For the proof of Theorem \ref{thm:defo_rat_invol}, we need some extra notation. Since throughout this subsection
$n$ and $k$ are fixed, we will until the rest of the subsection drop $(n,k)$ from the notation, writing $F_\ell^b$
for $F_\ell^\nkb$, writing $B$ for $\B{n,k}$, and so on.  We will need a specific ordering of the entries of the
matrix $B$. Therefore, we label the parameters of $B$ with single indices as follows (recall that $B$ is
skew-symmetric and that its non-zero uppertriangular entries are at positions $(i,j)$ with $j-i=n-k-1$ or
$j-i=n-k$):
\begin{equation*}
  B=\begin{pmatrix}
  \dots&0&b_{2k+1}&-b_{2k}&0&\dots&0\\
  &\dots&0&b_{2k-1}&-b_{2k-2}&\ddots&\vdots\\
  &&&\ddots&\ddots&\ddots&0\\
  &&&&\ddots&{b_3}&-b_{2}\\ \\
  &&&&\dots&0&b_{1}\\ \\
  &&&&&\dots&0\\
  &&&&&\dots&\vdots\\
  \end{pmatrix}
\end{equation*}%
Expressed in terms of a formula,
\begin{equation*}
  b_p:=(-1)^{p+1} b_{k+1-\fr p2,n-\fr{p-1}2}\;.
\end{equation*}%
For $1\leqs p\leqs 2k+1$, we denote by $\B p$ the matrix obtained from $B$ by setting the parameters
$b_{p+1},b_{p+2},\dots,b_{2k+1}$ equal to zero. We also set $\B 0$ equal to the zero matrix. So $\B1$ contains only
the parameter $b_1$ and $\B{2k+1}=B$. The corresponding Poisson structure, which can be obtained from
$\Pi_b=\Pi^\nk_b$ by setting the parameters $b_{p+1},b_{p+2},\dots,b_{2k+1}$ equal to zero, is denoted by $\Pi_{(p)}$. In
particular, $\Pi_b=\Pi_{(2k+1)}$. The associated Poisson bracket is denoted by~$\PB_{(p)}$. We also associate to
$p$ the following constant coefficient differential operator
\begin{equation}\label{eq:Dp}
  \Dp p:=b_p\frac{\p^2}{\p x_{k+1-\fr p2} \p x_{n-\fr{p-1}2}}\;.
\end{equation}%
It is clear that $\D b=\Dp{2k+1}+\Dp{2k}+\cdots+ \Dp{2}+ \Dp{1}$ and, since these operators
commute, that
$$e^{\D b}=e^{\Dp{2k+1}}\circ e^{\Dp{2k}}\circ\cdots\circ e^{\Dp{2}}\circ e^{\Dp{1}}\;.$$
We could of course consider any alternative order of the operators, but we will use the above one for some reason
which will become clear later. Finally, for any smooth or rational function $F$ on $\bbR^n$ we set $\F p:=e^{\Dp
  p}\F{p-1}$ for $1\leqs p\leqs 2k+1$, and $\F0:=F.$ With this notation, $F_\ell^b=F_\ell^{(2k+1)}$.

A crucial fact which we will use is that the partially deformed integral $F_\ell^{(p)}$ is the pullback of
$F_\ell^{(p-1)}$ by the birational Poisson map $\s_p$, defined for $1\leqs i\leqs n$ by
\begin{equation}\label{eq:sigma_p}
  \s_p^*(x_i):=\left\{
  \begin{array}{ll}
    x_i+\frac{b_p}{x_{i+n-k-1}}\delta_{i,k+1-\frac{p-1}2}&\hbox{ if $p$ is odd,}\\ \\
    x_i+\frac{b_p}{x_{i+k-n}}\delta_{i,n+1-\frac{p}2}&\hbox{ if $p$ is even.}\\
  \end{array}
  \right.
\end{equation}
Notice that
\begin{equation}\label{eq:subs}
  \s_p^*\(x_{n-\fr{p-1}2}x_{k+1-\fr p2}\)=x_{n-\fr{p-1}2}x_{k+1-\fr p2}+b_p\;,
\end{equation}
independently of whether $p$ is even or odd.

%
%
%
%
%
%
%
\begin{prop}\label{prp:pullbacks}
  $\F p_\ell=\s^*_p\F{p-1}_\ell$ for $1\leqs p\leqs 2k+1$ and $1\leqs\ell\leqs\fr{n-1}2-k$.
\end{prop}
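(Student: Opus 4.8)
The plan is to prove the identity $\F p_\ell=\s^*_p\F{p-1}_\ell$ by induction on $p$, reducing each individual step to one algebraic lemma that identifies the action of $e^{\Dp p}$ with the pullback by $\s_p$ as soon as the function it is applied to has a suitable normal form. Write $u:=x_{k+1-\fr p2}$ and $v:=x_{n-\fr{p-1}2}$ for the two variables entering $\Dp p$, so that $\Dp p=b_p\,\frac{\p^2}{\p u\,\p v}$; by \eqref{eq:sigma_p}, $\s_p$ modifies only $u$ (replacing it by $u+b_p/v$) when $p$ is odd, and only $v$ (replacing it by $v+b_p/u$) when $p$ is even. The lemma I would establish is the following: if a rational function $F$ can be written as $F=uv\,\widetilde G+H$, where $\widetilde G$ is independent of both $u$ and $v$, and $H$ is independent of whichever of $u,v$ is modified by $\s_p$, then $e^{\Dp p}F=\s_p^*F=F+b_p\widetilde G$. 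Indeed, such an $F$ is affine in the variable modified by $\s_p$, so $\Dp p^2F=0$ and hence $e^{\Dp p}F=F+b_p\,\frac{\p^2F}{\p u\,\p v}=F+b_p\widetilde G$; on the other hand a one-line substitution gives $\s_p^*(uv\,\widetilde G)=(uv+b_p)\widetilde G$ together with $\s_p^*H=H$, whence $\s_p^*F=F+b_p\widetilde G$ as well. This is the only place where the explicit formula for $\s_p$ is used, and the computation is symmetric in the two parities.

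It then remains to check, for each $p$, that $\F{p-1}_\ell$ has this normal form relative to the pair $(u,v)$ attached to $p$. The lemma itself furnishes the recursion that makes an inductive verification feasible, since it yields $\F p_\ell=\F{p-1}_\ell+b_p\widetilde G_{p-1}$, where $\widetilde G_{p-1}$ is the coefficient of $uv$ in $\F{p-1}_\ell$; thus at each step exactly one term is added and the shape of the partially deformed integral stays under control. For the base case $p=1$ one has $(u,v)=(x_{k+1},x_n)$ and $\F0_\ell=F_\ell^\nk$, and formula \eqref{eq:rat_integrals_kn} shows directly that $x_{k+1}$ occurs, linearly, only in the factor $\sum_{j=1}^s x_{j+k}$, while $x_n$ occurs, linearly, only in the prefactor $\prod_{i=1}^k x_{n+1-i}$; consequently $\p F_\ell^\nk/\p x_{k+1}$ is homogeneous of degree one in $x_n$, and the normal form holds.

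The inductive step is the crux, and it is exactly here that the specific ordering of the parameters is indispensable. I would maintain two invariants on $\F{p-1}_\ell$: first, that every variable occurs to degree at most one, which holds for $F_\ell^\nk$ and is preserved because each correction $b_{p'}\widetilde G_{p'-1}$ only deletes the factors $u_{p'}v_{p'}$ and never raises a degree; second, that the companion variable of step $p$ (the one \emph{not} modified by $\s_p$) is a common factor of all monomials of $\F{p-1}_\ell$. Granting these, writing $w$ for the modified and $w'$ for the companion variable of step $p$, one gets $\F{p-1}_\ell=w'(wR+S)=ww'R+w'S$ with $R,S$ free of $w,w'$, which is precisely the normal form $uv\widetilde G+H$ required by the lemma (with $H=w'S$ independent of the modified variable); the lemma then delivers $\F p_\ell=\s_p^*\F{p-1}_\ell$ and the description $\F p_\ell=\F{p-1}_\ell+b_pR$. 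The combinatorial heart, which I expect to be the main obstacle, is the second invariant: that the companion variable of step $p$ is a common factor. This rests on two facts about the staircase $(k{+}1,n),(k,n),(k,n{-}1),(k{-}1,n{-}1),\dots$ fixed by the ordering, namely that the companion of step $p$ never occurs among the variables $u_{p'},v_{p'}$ paired at any earlier step $p'<p$ (so no earlier correction can delete it), and that it is never one of the summands $x_{k+1},\dots,x_{k+s}$ of $\sum_j x_{j+k}$ (so it already divides $F_\ell^\nk$). Verifying these two membership statements is an elementary but careful index computation, and it is exactly what would fail for a different ordering of the parameters; with it in hand the induction closes and gives $\F p_\ell=\s^*_p\F{p-1}_\ell$ for all $p$.
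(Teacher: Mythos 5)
Your proposal is correct and follows essentially the same route as the paper: the same key lemma identifying the action of $e^{\Dp p}$ with $\s_p^*$ on functions of the normal form $uvE_1+E_2$, and the same staircase/ordering fact guaranteeing that this normal form persists from step to step. The only cosmetic difference is that you propagate the normal form via explicit invariants (degree at most one in the relevant variables; the companion variable is a common factor), whereas the paper obtains it by commuting $e^{\Dp p}$ past the earlier operators and decomposing $F_\ell$ directly from \eqref{eq:rat_integrals_kn} --- both arguments rest on the observation that $\Dp1,\dots,\Dp{p-1}$ do not involve the variables of step~$p$.
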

\begin{proof}
We first show that if $0\leqs p\leqs 2k$ then
\begin{equation}\label{eq:F_ell^p}
  \F p_\ell=x_{n-\fr p2}x_{k-\fr{p-1}2}E_1+E_2\;,
\end{equation}%
where $E_1$ is independent of $x_{n-\fr p2}$ and of $x_{k-\fr{p-1}2}$, while $E_2$ is independent of $x_{n-\fr
  p2}=x_{n-\frac{p-1}2}$ when $p$ is odd and is independent of $x_{k-\fr{p-1}2}=x_{k+1-\frac p2}$ when $p$ is
even. We do this for $p$ odd, i.e., we show that when $p$ is odd,
\begin{equation*}
  \F p_\ell=x_{n-\frac{p-1}2}x_{k-\frac{p-1}2}E_1+E_2\;,
\end{equation*}%
where $E_1$ is independent of $x_{n-\frac{p-1}2}$ and of $x_{k-\frac{p-1}2}$, while $E_2$ is independent of
$x_{n-\frac{p-1}2}$. According to the explicit formula (\ref{eq:rat_integrals_kn}), we can write $F_\ell=\F0_\ell$
as
\begin{equation*}
  F_\ell=x_{n-\frac{p-1}2}x_{k-\frac{p-1}2}F_\ell'+F_\ell''\;,
\end{equation*}%
where $F_\ell'$ and $F_\ell''$ are independent of $x_{n-\frac{p-1}2}$ and of $x_{k-\frac{p-1}2}$. Then,
\begin{equation*}
  \F p_\ell=e^{\Dp p}\F{p-1}_\ell=\(e^{\Dp p}F_\ell\)^{(p-1)}=\(F_\ell+\Dp pF_\ell\)^{(p-1)}\;,
\end{equation*}%
where we have used in the last step that $\Dp p^2F_\ell=0$, which is a consequence of the fact that $F_\ell$
depends linearly on $x_{k+1-\frac{p-1}2}$ and on $x_{n-\frac{p-1}2}$, which are the variables with respect to which
$\Dp p$ differentiates (see (\ref{eq:Dp})). Since, moreover, $F''_\ell$ is independent of $x_{n-\frac{p-1}2}$,
\begin{equation*}
  \F p_\ell=\(x_{n-\frac{p-1}2}x_{k-\frac{p-1}2}F_\ell'+F_\ell''+ b_px_{k-\frac{p-1}2}
  \frac{\p F'_\ell} {\p x_{k+1-\frac{p-1}2}}\)^{(p-1)}\;.
\end{equation*}%
Since $\Dp 1,\dots,\Dp{p-1}$ do not involve the variables $x_{n-\frac{p-1}2}$ and $x_{k-\frac{p-1}2}$,
\begin{eqnarray*}
  \F p_\ell&=&x_{n-\frac{p-1}2}x_{k-\frac{p-1}2}{F_\ell'}^{(p-1)}+\({F_\ell''}+
  b_px_{k-\frac{p-1}2}\frac{\p F'_\ell} {\p x_{k+1-\frac{p-1}2}}\)^{(p-1)}\\
  &=&x_{n-\frac{p-1}2}x_{k-\frac{p-1}2}E_1+E_2\;,
\end{eqnarray*}
where $E_1$ is independent of $x_{n-\frac{p-1}2}$ and of $x_{k-\frac{p-1}2}$, and $E_2$ is independent of
$x_{n-\frac{p-1}2}$. This shows our claim when $p$ is odd. The proof in case $p$ is even is similar. We use the
obtained formula (\ref{eq:F_ell^p}) to show that $\F p_\ell=\s_p^*\F{p-1}_\ell$ for any $p$. According to
(\ref{eq:F_ell^p}), we can write
\begin{equation*}
  \F {p-1}_\ell=x_{n-\fr {p-1}2}x_{k+1-\fr{p}2}E_1+E_2\;,
\end{equation*}%
where $E_1$ is independent of $x_{n-\fr{p-1}2}$ and of $x_{k+1-\fr{p}2}$, while $E_2$ is independent of $x_{n-\fr
{p-1}2}=x_{n+1-\frac{p}2}$ when $p$ is even and is independent of $x_{k+1-\fr{p}2}=x_{k+1-\frac{p-1}2}$ when $p$
is odd.  Therefore, on the one hand,
\begin{eqnarray*}
  \F p_\ell&=&e^{\Dp p}\F{p-1}_\ell=\F{p-1}_\ell+\Dp p\(x_{n-\fr {p-1}2}x_{k+1-\fr{p}2}E_1+E_2\)\\
  &=&\F{p-1}_\ell+b_p\frac{\p^2}{\p x_{k+1-\fr p2} \p x_{n-\fr{p-1}2}}\(x_{n-\fr {p-1}2}x_{k+1-\fr{p}2}E_1+E_2\)\\
  &=&\F{p-1}_\ell+b_pE_1\;,
\end{eqnarray*}
while on the other hand,
\begin{eqnarray*}
  \s_p^*\F{p-1}_\ell&=&\s_p^*\(x_{n-\fr {p-1}2}x_{k+1-\fr{p}2}E_1+E_2\)\\
  &\stackrel{(\star)}{=}&\s_p^*\(x_{n-\fr {p-1}2}x_{k+1-\fr{p}2}\)E_1+E_2\\
  &\stackrel{(\ref{eq:subs})}=&\(x_{n-\fr {p-1}2}x_{k+1-\fr{p}2}+b_p\)E_1+E_2=\F{p-1}_\ell+b_pE_1\;,
\end{eqnarray*}
which shows that $\F p_\ell=\s^*_p\F{p-1}_\ell$. We have used in $(\star)$ that $\s_p^*(E_1)=E_1$ and that
$\s_p^*(E_2)=E_2$, which hold because $E_1$ and $E_2$ are independent of the only variable which is not fixed by
$\s_p^*$.
\end{proof}
We next show that the maps $\s_p$ are Poisson maps with respect to the appropriate Poisson structures on $\bbR^n$.

\begin{prop}\label{prp:sigma_poisson}
  For $1\leqs p\leqs 2k+1$ the birational map
    $$\sigma_p:\(\bbR^n,\PB_{(p)}\)\to \(\bbR^n,\PB_{(p-1)}\)\;,$$
  defined by (\ref{eq:sigma_p}), is a Poisson map.
\end{prop}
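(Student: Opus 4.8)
The plan is to verify directly that $\s_p$ preserves the brackets on the coordinate functions, i.e.\ that $\pb{\s_p^*x_i,\s_p^*x_j}_{(p)}=\s_p^*\pb{x_i,x_j}_{(p-1)}$ for $1\leqs i<j\leqs n$; since both Poisson structures are biderivations and $\s_p^*$ is an algebra homomorphism, checking these relations on the generators $x_1,\dots,x_n$ suffices. Write $x_a$ for the unique coordinate that $\s_p$ modifies and $x_c$ for the coordinate occurring in its denominator, so that $\s_p^*x_a=x_a+b_p/x_c$ and $\s_p^*x_i=x_i$ for $i\neq a$; from (\ref{eq:sigma_p}) one has $c=a+m$ when $p$ is odd and $a=c+m+1$ when $p$ is even, and in both cases a short check gives $A^\nk_{a,c}=1$. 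Two facts will be used throughout. First, $\B p$ and $\B{p-1}$ differ only in the entries $(a,c)$ and $(c,a)$, so that $\pb{x_a,x_c}_{(p)}=x_ax_c+b_p$ while $\pb{x_a,x_c}_{(p-1)}=x_ax_c$. Second, the substitution identity (\ref{eq:subs}), namely $\s_p^*(x_ax_c)=x_ax_c+b_p$.

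When neither $i$ nor $j$ equals $a$, both sides collapse to $\pb{x_i,x_j}_{(p-1)}$: the pullback $\s_p^*$ fixes $x_i$ and $x_j$, and the entries of $\B p$ and $\B{p-1}$ at $(i,j)$ agree, since the only position where they differ involves the index $a$. When $\set{i,j}=\set{a,c}$ we compute, using $\pb{1/x_c,x_c}_{(p)}=0$,
\begin{equation*}
  \pb{\s_p^*x_a,\s_p^*x_c}_{(p)}=\pb{x_a+b_p/x_c,\,x_c}_{(p)}=\pb{x_a,x_c}_{(p)}=x_ax_c+b_p\;,
\end{equation*}
and this equals $\s_p^*\pb{x_a,x_c}_{(p-1)}=\s_p^*(x_ax_c)=x_ax_c+b_p$ by (\ref{eq:subs}).

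The substantial case is $j=a$ with $i\neq a,c$. By the Leibniz rule,
\begin{equation*}
  \pb{x_i,\s_p^*x_a}_{(p)}=\pb{x_i,x_a}_{(p)}-\frac{b_p}{x_c^2}\pb{x_i,x_c}_{(p)}\;,
\end{equation*}
whereas $\s_p^*\pb{x_i,x_a}_{(p-1)}=\pb{x_i,x_a}_{(p-1)}+A^\nk_{i,a}\,b_p\,x_i/x_c$. Since $i\neq a,c$, the $(i,a)$ entries of $\B p$ and $\B{p-1}$ coincide, so $\pb{x_i,x_a}_{(p)}=\pb{x_i,x_a}_{(p-1)}$, and the desired equality reduces to
\begin{equation*}
  \pb{x_i,x_c}_{(p)}=-A^\nk_{i,a}\,x_ix_c\;.
\end{equation*}
Writing $\pb{x_i,x_c}_{(p)}=A^\nk_{i,c}x_ix_c+(\B p)_{i,c}$, this splits into two claims: that $\pb{x_i,x_c}_{(p)}$ carries no constant term, i.e.\ $(\B p)_{i,c}=0$, and that $A^\nk_{i,c}=-A^\nk_{i,a}$.

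These two claims are where the precise shape of $A^\nk$ and the chosen order of the deformation enter, and the first is the main obstacle. The combinatorial identity $A^\nk_{i,c}=-A^\nk_{i,a}$ follows from $A^\nk_{i,j}=1\iff j-i\leqs m$ (for $i<j$) together with the fact that $a$ lies in one of the boundary blocks $\set{1,\dots,k+1}$ or $\set{n-k,\dots,n}$ while $\left|a-c\right|\in\set{m,m+1}$; a case check on the position of $i$ relative to $a$ and $c$ then shows that exactly one of $A^\nk_{i,a}$, $A^\nk_{i,c}$ equals $+1$ for every admissible $i$ (here one uses $k<m$ to bound the relevant distances). The vanishing $(\B p)_{i,c}=0$ is more delicate: the only nonzero positions of $B$ that involve the index $c$ are $(a,c)$, carrying $b_p$ itself, and a single neighbouring position carrying $b_{p+1}$; since $\Pi_{(p)}$ switches on only $b_1,\dots,b_p$, the parameter $b_{p+1}$ is still zero and no constant term survives. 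This is exactly the point that forces the specified ordering of the $b_p$: an order that activated $b_{p+1}$ before using $x_c$ as a denominator would leave a spurious constant term and destroy the Poisson property. The case of even $p$ is entirely parallel, with the roles of the two boundary blocks interchanged.
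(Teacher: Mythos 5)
Your proof is correct and follows essentially the same route as the paper's: a direct verification that $\s_p$ preserves the brackets of the coordinate functions, which the Leibniz rule reduces to the sign identity $A^{(n,k)}_{i,c}=-A^{(n,k)}_{i,a}$ together with the vanishing of the relevant constant entries of $\B{p}$. The only differences are presentational: you treat both parities of $p$ uniformly and make explicit why the chosen ordering of the parameters matters (the other nonzero position of $B$ involving $x_c$ carries $b_{p+1}$, which is still switched off in $\PB_{(p)}$), whereas the paper carries out the even case subcase by subcase and leaves the odd case to the reader.
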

\begin{proof}
We give the proof in case $p$ is even. Then (\ref{eq:sigma_p}) simplifies to
\begin{equation*}
  \s_p^*(x_i)=\left\{
  \begin{array}{ll}
    x_i+\frac{b_p}{x_{i+k-n}}&\hbox{if $i=n+1-\frac p2\;,$}\\ \\
    x_i&\hbox{if $i\neq n+1-\frac p2\;.$}\\
  \end{array}
  \right.
\end{equation*}%
We need to show that $\s_p^*\pb{x_i,x_j}_{(p-1)}=\pb{\s_p^*(x_i),\s_p^*(x_j)}_{(p)}$ for all $1\leqs i<j\leqs
n$. Notice that if $(i,j)\neq\(k+1-\frac p2,n+1-\frac {p}2\)$ then $\pb{x_i,x_j}_{(p)}=\pb{x_i,x_j}_{(p-1)}$ and
otherwise $\pb{x_i,x_j}_{(p)}=\pb{x_i,x_j}_{(p-1)}-b_p$ (recall that $p$ is even). It follows that, if $i,j\neq
n+1-p/2$, then
\begin{equation*}
  \s_p^*\pb{x_i,x_j}_{(p-1)}=\pb{x_i,x_j}_{(p-1)}=\pb{x_i,x_j}_{(p)}=\pb{\s_p^*(x_i),\s_p^*(x_j)}_{(p)}\;,
\end{equation*}%
as was to be shown. Suppose now that $1\leqs i=n+1-\frac p2<j$. Notice that, in this case,
$\pb{x_i,x_j}_{(p)}=\pb{x_i,x_j}_{(p-1)}=+x_ix_j$, with a plus sign. Then
\begin{equation*}%
  \s^*_p\pb{x_i,x_j}_{(p-1)}=\s_p^*(x_ix_j)=\(x_{i}+\frac{b_p} {x_{i+k-n}}\)x_j\;,
\end{equation*}%
while
\begin{equation*}
  \pb{\s^*_p(x_i),\s^*_p(x_j)}_{(p)}=\pb{x_i+\frac{b_p} {x_{i+k-n}},x_j}_{(p)}
  =x_{i}x_j+\frac{b_p} {x_{i+k-n}}x_j\;,
\end{equation*}
where we have used that $\pb{x_{i+k-n},x_j}_{(p)}=-{x_{i+k-n}x_j}$, with a minus sign because $i<j$. Finally,
suppose that $1\leqs i<j=n+1-\frac p2$ and notice that $j+k-n=k+1-\frac p2$. Then,
\begin{eqnarray}\label{eq:poisson_case2}
  \pb{\s^*_p(x_i),\s^*_p(x_j)}_{(p)}&=&\pb{x_i,x_j+\frac{b_p}{x_{j+k-n}} }_{(p)}\nonumber\\
  &=&
  \left\{
  \begin{array}{ll}
    \pb{x_i,x_j}_{(p)}\;,&\hbox{if $i=k+1-\frac p2\;,$}\\ \\
    \pb{x_i,x_j}_{(p)}-\frac{b_p}{x^2_{j+k-n}}\pb{x_i,x_{j+k-n}}\;,&\hbox{if $i\neq k+1-\frac p2\;,$}
  \end{array}
  \right.\nonumber\\
  &=&
  \left\{
  \begin{array}{ll}
    -x_ix_j-b_p\;,&\hbox{if $i=k+1-\frac p2\;,$}\\ \\
    -x_ix_j-\frac{b_px_i}{x_{j+k-n}}\;,&\hbox{if $i<k+1-\frac p2\;,$}\\ \\
    x_ix_j+\frac{b_px_i}{x_{j+k-n}}\;,\qquad&\hbox{if $i>k+1-\frac p2\;,$}
  \end{array}
  \right.\nonumber\\
\end{eqnarray}%
while
\begin{equation*}
  \s^*_p\pb{x_i,x_j}_{(p-1)}=\s^*_p(\pm x_ix_j)=\pm x_i\left(x_j+\frac{b_p}{x_{j+k-n}}\right)\;,
\end{equation*}%
where the $+$ sign corresponds to the case $i>k+1-\frac p2$ and the $-$ sign to the case $i\leqs k+1-\frac
p2$. Clearly, this gives the same result as in (\ref{eq:poisson_case2}). This shows that $\sigma_p$ is a Poisson
map.
\end{proof}
Propositions \ref{prp:pullbacks} and \ref{prp:sigma_poisson} imply, in that order, that for any $p=1,\dots,2k+1$,
and for any $1\leqs\ell\leqs\ell'\leqs\fr{n-1}2-k$,
\begin{equation*}
  \pb{F_\ell^{(p)},F_{\ell'}^{(p)}}_{(p)}=  \pb{\s_p^*F_\ell^{(p-1)},\s_p^*F_{\ell'}^{(p-1)}}_{(p)}=
  \s_p^*\pb{F_\ell^{(p-1)},F_{\ell'}^{(p-1)}}_{(p-1)}
\end{equation*}%
and so, since the undeformed rational first integrals $F_\ell=F_\ell^{(0)}$ are pairwise in involution, an easy induction
shows that their deformations $F^b_\ell=\F{2k+1}_\ell$ are in involution as well. This shows the second part of
Theorem \ref{thm:defo_rat_invol}.

\smallskip

We will next show that that the deformed first integrals $F^b_\ell$ are first integrals of $\LVB nk$. To do this, we
first prove the following lemma.
\begin{lemma}\label{lma:F_ders}
  Let $1\leqs\ell\leqs\fr{n-1}2-k$ and denote $s:=2\ell-1$ when $n$ is odd and $s:=2\ell$ when $n$ is even. If
  $i\notin\set{k+1,k+2,\dots,k+s}$ then $\pb{x_i,F_\ell^b}_b=0$.
\end{lemma}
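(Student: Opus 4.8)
The plan is to expand the bracket by the Leibniz rule,
\begin{equation*}
  \pb{x_i,F_\ell^b}_b=\sum_{j=1}^n\(A^\nk_{i,j}x_ix_j+b_{i,j}\)\pp{F_\ell^b}{x_j}\;,
\end{equation*}
and to show that the right-hand side vanishes whenever $i$ lies outside the block $\set{k+1,\dots,k+s}$. I would organize the computation around the factorization $F_\ell=PSQ$ read off from (\ref{eq:rat_integrals_kn}): the multilinear boundary factor $P=\prod_{i=1}^k x_ix_{n+1-i}$, the middle sum $S=x_{k+1}+\cdots+x_{k+s}$, and the alternating monomial $Q=\prod_t x_{s+k+2t}/x_{s+k+2t-1}$. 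I also use the band structure of $A^\nk$: for $i<j$ one has $A^\nk_{i,j}=+1$ if $j-i\leqs m$ and $A^\nk_{i,j}=-1$ if $j-i\geqs m+1$, extended by skew-symmetry.

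The combinatorial core is the \emph{undeformed} identity $\pb{x_i,F_\ell}_{(0)}=0$ for every $i\notin\set{k+1,\dots,k+s}$. Writing $\pb{x_i,F_\ell}_{(0)}=x_iF_\ell\sum_jA^\nk_{i,j}g_j$ with $g_j:=x_j\pp{\log F_\ell}{x_j}$, one reads off $g_j=1$ on the two boundary ranges (from $P$), $g_j=x_j/S$ on the middle block (from $S$), and $g_j=\pm1$ alternating on the range of $Q$. For $i$ outside the middle block the sum $\sum_jA^\nk_{i,j}g_j$ then vanishes: the entries $A^\nk_{i,j}$ are constant across the middle block (since $s\leqs n-1-2k<m$), so that block collapses to a single $\pm1$ because $\sum_{j=k+1}^{k+s}x_j/S=1$; the alternating $\pm1$ coming from $Q$ telescopes, as $A^\nk_{i,j}$ is constant in $j$ apart from one jump across the edge $j-i=m$; and these combine with the boundary $\pm1$'s to give $0$. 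This is the clean part of the argument.

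I would then promote this to the deformed statement for the interior indices $i\in\set{k+2,\dots,n-k}$. These are exactly the variables fixed by every $\s_p$ (see (\ref{eq:sigma_p})), hence by the composite $\sigma:=\s_1\circ\cdots\circ\s_{2k+1}$, which by Proposition~\ref{prp:pullbacks} satisfies $\sigma^*F_\ell=F_\ell^b$ and by Proposition~\ref{prp:sigma_poisson} is a Poisson map $\(\bbR^n,\PB_b\right)\to\(\bbR^n,\PB_{(0)}\right)$. For such $i$ we have $\sigma^*x_i=x_i$, whence
\begin{equation*}
  \pb{x_i,F_\ell^b}_b=\pb{\sigma^*x_i,\sigma^*F_\ell}_b=\sigma^*\pb{x_i,F_\ell}_{(0)}=0\;.
\end{equation*}

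The remaining case, the boundary indices $i\in\set{1,\dots,k}\cup\set{n-k+1,\dots,n}$, is where the work lies and is the main obstacle: here $\sigma^*x_i\neq x_i$ (and for $i\geqs2$ the correction $b_p/x_{i+m}$ is itself further deformed by a later $\s_{p'}$), so the Poisson-map shortcut fails and a direct computation is unavoidable. The deformation now enters both through the constant terms $b_{i,j}$ of the bracket and through the extra terms of $F_\ell^b=e^{\D b}F_\ell$. The strategy is careful bookkeeping: separate the genuinely quadratic terms, which cancel by exactly the telescoping of the second paragraph, and then check that the remaining $b_{i,j}$-contributions cancel in pairs. A model computation (say $i=1$, for which $\sigma^*x_1=x_1+b_p/x_{1+m}$ with $x_{1+m}$ undeformed) confirms the pattern: after collecting terms the quadratic part reproduces the undeformed cancellation, while the leftover deformation monomials assemble into a single expression proportional to $b_p$ that cancels. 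I expect the general boundary case to follow the same scheme, with the nested corrections in $\sigma^*x_i$ mirroring the prescribed order in which the parameters $b_1,\dots,b_{2k+1}$ were switched on.
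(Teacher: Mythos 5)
Your treatment of the indices $i\in\set{k+s+1,\dots,n-k}$ is correct and is essentially the paper's own mechanism: these variables are fixed by every $\s_p$, so the Poisson-map property of the $\s_p$ together with the undeformed identity $\pb{x_i,F_\ell}=0$ (which the paper simply cites from \cite[Prop.~3.1]{PPPP} rather than re-deriving by your telescoping computation) gives the result at once. The genuine gap is the boundary case $i\in\set{1,\dots,k}\cup\set{n-k+1,\dots,n}$. There you assert that the Poisson-map shortcut fails and that a direct computation is unavoidable, and you then do not carry that computation out: one model case $i=1$ plus the statement that you \emph{expect} the general case to follow the same scheme is not a proof, and the bookkeeping you defer --- tracking the constant terms $b_{i,j}$ of the bracket against the many correction terms of $e^{\D b}F_\ell$ --- is precisely the hard part of the lemma.

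Moreover, the premise is wrong: the shortcut does not fail, it only has to be applied one step at a time. The paper proves $\pb{x_i,\F p_\ell}_{(p)}=0$ by induction on $p=0,1,\dots,2k+1$. At a step where $\s_p^*(x_i)\neq x_i$, one writes $x_i=\s_p^*\(x_i-\frac{b_p}{x_j}\)$, where $j=i+k-n$ (for $p$ even) or $j=i+n-k-1$ (for $p$ odd) is the partner index; this identity is exact because $x_j$ is fixed by $\s_p$, and since $j\leqs k$ or $j\geqs n-k>k+s$, the index $j$ also lies outside $\set{k+1,\dots,k+s}$, so the induction hypothesis applies to $x_j$ as well. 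Hence
\begin{equation*}
  \pb{x_i,\F p_\ell}_{(p)}=\pb{\s_p^*\(x_i-\tfrac{b_p}{x_j}\),\s_p^*\F{p-1}_\ell}_{(p)}
  =\s_p^*\pb{x_i-\tfrac{b_p}{x_j},\F{p-1}_\ell}_{(p-1)}=0\;.
\end{equation*}
This disposes of the boundary indices with no computation at all, and it also explains why the lemma must be proved for all $i\notin\set{k+1,\dots,k+s}$ simultaneously: the partner indices are needed inside the induction. You should replace your deferred direct computation by this inductive argument; the nested corrections in the composite $\sigma^*x_i$ that worry you for $i\geqs 2$ never arise, because only one $\s_p$ is undone at each step.
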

\begin{proof}
We prove by induction on $p$ that $\pb{x_i,F_\ell^{(p)}}_{(p)}=0$, for $p=0,1,\dots,2k+1$.
Since $F_l^b=F_l^{(2k+1)}$ this proves the statement. For $p=0$ this amounts to
showing that $\pb{x_i,F_\ell}_b=0$, which was done in \cite[Prop.\ 3.1]{PPPP}. Let $1\leqs p\leqs 2k+1$ and assume
that the property is true for $p-1$. If $i$ is such that $\s_p^*(x_i)=x_i$, then by Propositions \ref{prp:pullbacks}
and \ref{prp:sigma_poisson},
\begin{equation*}
  \pb{x_i,F_\ell^{(p)}}_{(p)}=\pb{\s^*_p(x_i),\s^*_p F_\ell^{(p-1)}}_{(p)}
  =\s^*_p\pb{x_i, F_\ell^{(p-1)}}_{(p-1)}=0\;,
\end{equation*}%
where we used the induction hypothesis in the last step. Suppose now that  $\s_p^*(x_i)\neq x_i$.
If $p$ is even, this means that $i=n+1-\frac p2$, and so
\begin{eqnarray*}
  \pb{x_i,F_\ell^{(p)}}_{(p)}&=&\pb{\s_p^*\(x_i-\frac{b_p}{x_{i+k-n}}\),\s_p^*F_\ell^{(p-1)}}_{(p)}  \\
                             &=&\s_p^*\pb{x_i-\frac{b_p}{x_{i+k-n}},F_\ell^{(p-1)}}_{(p-1)}=0\;,
\end{eqnarray*}%
where we used again the induction hypothesis (twice): we could do so because $i+k-n\leqs k$ so that $i+k-n\notin
\set{k+1,k+2,\dots,k+s}.$ If $\s_p^*(x_i)\neq x_i$ and $p$ is odd, then $i=k+1-\frac{p-1}2,$ so that
\begin{eqnarray*}
  \pb{x_i,F_\ell^{(p)}}_{(p)}&=&\pb{\s_p^*\(x_i-\frac{b_p}{x_{i+n-k-1}}\),\s_p^*F_\ell^{(p-1)}}_{(p)}  \\
                             &=&\s_p^*\pb{x_i-\frac{b_p}{x_{i+n-k-1}},F_\ell^{(p-1)}}_{(p-1)}=0\;,
\end{eqnarray*}%
as before.
\end{proof}
Using the lemma and the fact that $F_\ell$ is a first integral of $\LV nk$ (see \cite{PPPP}), we show that
$F_\ell^b$ is a first integral of $\LVB nk$. As before, we show by induction that $\pb{F_\ell^{(p)},H}_{(p)}=0$, the
case of $p=0$ already being established. Suppose that $\pb{F_\ell^{(p-1)},H}_{(p-1)}=0$ for some $p\geqs1$. Then

\begin{eqnarray*}
  \pb{F_\ell^{(p)},H}_{(p)}&=&\pb{\s_p^*F_\ell^{(p-1)},\s_p^*\(H-\frac{b_p}{x_t}\)}_{(p)}\\
                           &=&\s_p^*\pb{F_\ell^{(p-1)},H-\frac{b_s}{x_t}}_{(p-1)}=0\;,
\end{eqnarray*}%
since $t=k+1-\frac n2\leqs k$ when $p$ is even and $t=n-\frac{p-1}2\geqs n-k$ when $p$ is odd; in either case,
$t\notin \set{k+1,k+2,\dots,k+s}$, which proves the last equality. We conclude that $\pb{F_\ell^b,H}_b=0$, which is
the first statement of Theorem \ref{thm:defo_rat_invol}.


\subsection{Integrability}\label{par:gen_defo_integ}
We have now most ingredients to state and prove the Liouville and non-commutative integrability of $\LVB nk$, where
we recall that $k>0$ and $2k+1<n$. Since in this subsection $(n,k)$ is fixed, we will again drop $(n,k)$ from the
notation, except in the statement of the propositions and of the theorem. We have constructed in Section
\ref{par:pol_int} a set of polynomial first integrals for $\LVB nk$ and in Section \ref{par:rat_int} a set of rational
first integrals $F_\ell$. We first show that these polynomial first integrals are in involution with these rational
first integrals. To do this, we need a property of the polynomial first integrals, which we first define.

\begin{defn}
  A polynomial function $K$ on $\bbR^n$ is said to be \emph{$(n,k)$-admissible} if
    \begin{enumerate}
    \item[(1)] $K$ is of degree at most one in each of its variables $x_j$;
    \item[(2)] $K$ can be written (uniquely) as $K=LK'+K''$, where $K''$ is independent of $x_{k+1},\dots,x_{n-k}$
      and $L$ is the sum of these variables, $L=x_{k+1}+x_{k+2}+\dots+x_{n-k}$.
    \end{enumerate}
\end{defn}
A key property of the polynomial Hamiltonians is that they are $(n,k)$-admissible:
\begin{prop}\label{prp:K_i_admissible}
  For $i=0,1,\dots,k$, the polynomial first integral $K_i^\nkb$ is $(n,k)$-admissible.
\end{prop}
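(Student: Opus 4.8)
The plan is to prove the two defining properties of $(n,k)$-admissibility for each $K_i^\nkb$ separately, exploiting the combinatorial formula \eqref{eq:k_i_itoh_rest} for the undeformed integral $K_i^\nk$ together with the fact, recorded after \eqref{eq:K_are_defos}, that $K_i^\nkb=e^{\D b}K_i^\nk$ and that $\D b$ differentiates only with respect to variables $x_i, x_{i+m}, x_{i+m+1}$ for $i\leqs k+1$. Property (1), that $K_i^\nkb$ has degree at most one in each variable $x_j$, should be essentially immediate: it was already observed just after \eqref{eq:k_i_itoh_rest} that $K_i^\nk$ has this property, and applying $\D b$ (a sum of mixed second partials $\p^2/\p x_i\p x_{i+m}$ with $i\neq i+m$) to a multilinear polynomial produces another multilinear polynomial, since each such operator lowers the degree in two \emph{distinct} variables by one and introduces no new variables. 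Hence $e^{\D b}K_i^\nk$ remains multilinear, which is property (1).

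The substantive part is property (2). First I would establish it for the undeformed integral $K_i^\nk$, and then show it is preserved by $\D b$. For the undeformed case, I would read off from \eqref{eq:k_i_itoh_rest} the structure of $\cS^{(n,k)}_i$: each monomial $x_{m_1}\cdots x_{m_{2i+1}}$ corresponds to an index tuple $\um$ with $A^\nk_{\um}=A^{(2i+1,i)}$. The decomposition $K=LK'+K''$ amounts to splitting the monomials according to whether they do or do not contain exactly one of the ``middle'' variables $x_{k+1},\dots,x_{n-k}$. The key claim to verify is that every monomial of $K_i^\nk$ contains \emph{at most} one middle variable, and moreover that the monomials containing a middle variable $x_j$ all share the \emph{same} cofactor independent of $j$; this is precisely what makes the middle-variable part factor as $L\cdot K'$. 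This structural fact is an analogue of what happens in the $k=0$ analysis (compare the role of $x_1+\cdots+x_{2\ell-1}$ in \eqref{eq:integrals}), and it is what I expect to require the most care: I would argue it using the explicit form of the submatrix condition $A^\nk_{\um}=A^{(2i+1,i)}$, showing that a valid index tuple can contain at most one index in the middle block $\{k+1,\dots,n-k\}$, and that replacing that one middle index by any other middle index again yields a valid tuple with the rest of the indices unchanged.

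Having established property (2) for $K_i^\nk$, I would show $\D b$ preserves the decomposition. Since $\D b$ from \eqref{eq:DB_operator_bis} differentiates only in the variables $x_i$ (for $i\leqs k+1$) and $x_{i+m}, x_{i+m+1}$ (which, since $m=n-k-1$, lie in the top block $\{n-k,\dots,n\}$), none of these is a \emph{middle} variable $x_{k+1},\dots,x_{n-k}$ when $k>0$, except possibly the boundary index $x_{k+1}=x_{i+m}$ with $i=1$ — here I would check the indices carefully, as $m=n-k-1$ gives $1+m=n-k$, so $x_{1+m}=x_{n-k}$ is in fact a middle variable, requiring attention. Modulo this boundary bookkeeping, $\D b$ commutes with multiplication by the linear form $L$ and maps the $K''$ part (independent of middle variables) to something still independent of middle variables; the upshot is that $\D b(LK'+K'')=L\,\D b K'+\D b K''$ plus correction terms from differentiating $L$ itself, which I would show either vanish or can be reabsorbed into the $K'$-part. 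Iterating via $e^{\D b}=\sum \tfrac1{j!}\D b^j$ (a finite sum by degree considerations) then yields that $K_i^\nkb=e^{\D b}K_i^\nk$ is again of the form $LK'+K''$ with $K''$ independent of the middle variables, giving property (2) and completing the proof.
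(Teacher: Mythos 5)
Your proposal follows essentially the same route as the paper: admissibility of the undeformed $K_i^\nk$ via the combinatorial replacement property of the middle index of the tuples in $\cS^\nk_i$ (which the paper simply cites from the earlier work on the undeformed systems rather than reproving), followed by the observation that the deformation operator preserves the decomposition $LK'+K''$ up to correction terms coming from differentiating $L$ itself. Two bookkeeping points should be fixed: the variables in which $\D b$ differentiates meet the middle block $\set{x_{k+1},\dots,x_{n-k}}$ in \emph{two} indices, not one --- besides $x_{1+m}=x_{n-k}$, which you catch, the term $b_{k+1,n}\,\p^2/\p x_{k+1}\p x_n$ involves the middle variable $x_{k+1}$, so your assertion that none of the $x_i$ with $i\leqs k+1$ is a middle variable is false and this term requires the same special treatment (these are precisely the two exceptional cases $p=1$ and $p=2k+1$ in the paper's step-by-step induction); moreover, the resulting correction terms such as $b_1\,\p K'/\p x_n$ are independent of the middle variables and must be absorbed into the new $K''$, not into the $K'$-part as you write, since absorbing them into $LK'$ would require dividing by $L$.
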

\begin{proof}
As said, we write in the proof $K_i$ for $K_i^\nk$ and $K_i^b$ for $K_i^\nkb$. The fact that $K_i$ is
$(n,k)$-admissible follows from the following observation made in \cite[Cor.\ 3.5(4)]{PPPP}: if we denote for
$\um\in\cS^{(n,k)}_i$ by $\um'$ the vector $\um$ with its middle entry $m_{i+1}$ replaced by $m'_{i+1}$, then
$\um'\in\cS^{(n,k)}_i$ when $k<m'_{i+1}<n-k+1$. According to the formula (\ref{eq:k_i_itoh_rest}) for $K_i$ this
means that when some term of $K_i$ contains a variable $x_j$ with $k+1\leqs j\leqs n-k$, then it contains also a
similar term with $x_j$ replaced by any $x_l$ with $k+1\leqs l\leqs n-k$. Considering the sum of these
substitutions yields a polynomial which is divisible by $L$. Therefore, $K_i$ is $(n,k)$-admissible. Let us show
that if for some $p\geqs1$, $K_i^{(p-1)}$ is $(n,k)$-admissible, then so is $K_i^{(p)}=e^{\Dp p}K_i^{(p-1)}$. We can
write $K_i^{(p-1)}=LK'+K''$, where $K'$ and $K''$ are independent of $x_{k+1},\dots,x_{n-k}$, and $\Dp p$
differentiates with respect to the variables $x_{k+1-\fr p2}$ and $x_{n-\fr{p-1}2}$. When $p\neq1$ and $p\neq2k+1$,
these variables are outside the range $k+1,...,n-k$, hence $\Dp {p}K_i^{(p-1)}=L\Dp pK'+\Dp pK''$, with $\Dp pK'$ and
$\Dp pK''$ independent of $x_{k+1},\dots,x_{n-k}$, so that $K_i^{(p)}$ is $(n,k)$-admissible. For $p=1$,
\begin{eqnarray*}%
  K_i^{(1)}&=&e^{\Dp 1}K_i^{(0)}=K_i^{(0)}+b_1\frac{\p^2}{\p x_{k+1} \p x_{n}}(LK'+K'')\\ &=&
  K_i^{(0)}+L\Dp1K'+\Dp1 K''+b_1\pp{K'}{x_n}\;,
\end{eqnarray*}%
showing that $ K_i^{(1)}$ is also $(n,k)$-admissible. The proof for $p=2k+1$ is very similar, since $\Dp{2k+1}$
differentiates with respect to the variables $x_1$ and $x_{n-k}$.
\end{proof}
We are now ready to show that every polynomial integral is in involution with every rational integral.

\begin{prop}\label{prp:all_in_invol}
  For $\ell=1,2,\dots,\fr{n-1}2-k$ and $i=0,1,\dots,k$,
  $$\pb{F_\ell^{\nkb},K_i^\nkb}^\nk_b=0\;.$$
\end{prop}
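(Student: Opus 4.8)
The plan is to prove, by induction on $p$, the partially-deformed statement
$\pb{\F p_\ell,K_i^{(p)}}_{(p)}=0$ for $p=0,1,\dots,2k+1$; taking $p=2k+1$ then gives the proposition, since $F_\ell^b=\F{2k+1}_\ell$ and $K_i^b=K_i^{(2k+1)}$. Here $K_i^{(p)}:=e^{\Dp p}\circ\cdots\circ e^{\Dp 1}K_i$ denotes the partial deformation, which is $(n,k)$-admissible for \emph{every} $p$ by the induction carried out in the proof of Proposition \ref{prp:K_i_admissible}. The base case $p=0$ is the involutivity of the undeformed integrals, $\pb{F_\ell,K_i}^\nk=0$, established in \cite{PPPP}.

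For the inductive step the new difficulty, absent in the purely rational case of Theorem \ref{thm:defo_rat_invol}, is that $\s_p$ does \emph{not} pull $K_i^{(p-1)}$ back to $K_i^{(p)}$. I would first quantify the discrepancy. Let $u,v$ be the two variables on which $\Dp p=b_p\,\p^2/(\p x_u\p x_v)$ acts, and use that $K_i^{(p-1)}$ has degree at most one in each variable to expand $K_i^{(p-1)}=x_ux_v\,a+x_u\,c+x_v\,d+e$, with $a,c,d,e$ independent of $x_u,x_v$. Then $e^{\Dp p}=1+\Dp p$ gives $K_i^{(p)}=K_i^{(p-1)}+b_pa$, while substituting $\s_p$ (which sends $x_u\mapsto x_u+b_p/x_v$ when $p$ is odd and $x_v\mapsto x_v+b_p/x_u$ when $p$ is even) yields $\s_p^*K_i^{(p-1)}=K_i^{(p)}+b_pR_p$, where $R_p=c/x_v$ for $p$ odd and $R_p=d/x_u$ for $p$ even. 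Consequently, by Propositions \ref{prp:pullbacks} and \ref{prp:sigma_poisson},
$$\pb{\F p_\ell,K_i^{(p)}}_{(p)}=\s_p^*\pb{\F{p-1}_\ell,K_i^{(p-1)}}_{(p-1)}-b_p\,\pb{\F p_\ell,R_p}_{(p)}\;,$$
in which the first term vanishes by the induction hypothesis. Everything therefore reduces to proving $\pb{\F p_\ell,R_p}_{(p)}=0$.

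This last identity is the main obstacle, and it is where $(n,k)$-admissibility and the structure of the rational integrals are needed. The denominator variable of $R_p$ has index outside $\set{k+1,\dots,k+s}$ (with $s$ as in Lemma \ref{lma:F_ders}), so by Lemma \ref{lma:F_ders}—whose proof in fact establishes $\pb{x_j,\F p_\ell}_{(p)}=0$ for every $p$ and every such index $j$—that denominator is inert and, by the Leibniz rule, $\pb{\F p_\ell,R_p}_{(p)}$ is a multiple of $\pb{\F p_\ell,c}_{(p)}$ (resp. $\pb{\F p_\ell,d}_{(p)}$). Writing $K_i^{(p-1)}=LK'+K''$ as in the definition of admissibility, with $L=x_{k+1}+\cdots+x_{n-k}$ and $K',K''$ independent of the middle variables, one reads off that this coefficient has the form $L\,c'+c''$ with $c',c''$ again independent of the middle variables, hence of $\set{x_{k+1},\dots,x_{k+s}}$. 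Thus $c'$ and $c''$ are themselves inert for $\pb{\F p_\ell,\cdot}_{(p)}$, and $\pb{\F p_\ell,Lc'+c''}_{(p)}=c'\,\pb{\F p_\ell,L}_{(p)}$. Finally $\pb{\F p_\ell,L}_{(p)}=0$, because $L=H-(x_1+\cdots+x_k)-(x_{n-k+1}+\cdots+x_n)$ and $\F p_\ell$ commutes with $H$ (shown at the end of Section \ref{par:rat_int}) and with each remaining summand by Lemma \ref{lma:F_ders}. This closes the induction for the generic steps, where $x_u$ and $x_v$ are both outer variables.

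The only steps requiring separate bookkeeping are the boundary cases $p=1$ and $p=2k+1$, where one of $x_u,x_v$ is itself a middle variable. For $p=1$ one has $x_u=x_{k+1}$ and $x_v=x_n$; since $K',K''$ are independent of the middle variables, the coefficient $c$ carries no factor of $L$ at all, and the conclusion is immediate. For $p=2k+1$ one has $x_u=x_1$ and $x_v=x_{n-k}$, and the coefficient $c$ takes the form $(L-x_{n-k})c'+c''$; then $\F{2k+1}_\ell$ commutes with the sub-sum $L-x_{n-k}$ for exactly the same reason it commutes with $L$ (and with the safe variable $x_{n-k}$), so the argument again goes through. The even boundary values need no special treatment, since $x_u,x_v$ remain outer throughout $2\leqs p\leqs 2k$. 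The corresponding statement for the $G_\ell^\nkb$ follows at once by applying the anti-Poisson involution $\imath$.
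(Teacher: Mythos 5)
Your proof is correct, but it takes a genuinely different and more laborious route than the paper's. The paper proves the identity directly at the fully deformed level: by Proposition \ref{prp:K_i_admissible} one writes $K_i^b=LK'+K''$ with $K',K''$ depending only on the outer variables, so Lemma \ref{lma:F_ders} and the Leibniz rule give $\pb{F_\ell^b,K_i^b}_b=K'\pb{F_\ell^b,L}_b$, and then writing $L=H-H''$ with $H''$ a sum of outer variables kills the remaining bracket, using that $F_\ell^b$ is a first integral (Theorem \ref{thm:defo_rat_invol}) and Lemma \ref{lma:F_ders} once more. No induction on $p$ and no use of the maps $\s_p$ is needed for this proposition. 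You instead run the $\s_p$-machinery of Section \ref{par:rat_int} and must therefore confront the fact that $\s_p^*K_i^{(p-1)}\neq K_i^{(p)}$; your computation of the discrepancy $b_pR_p$ (with $R_p=c/x_v$ for $p$ odd and $d/x_u$ for $p$ even) is correct, and your reduction of $\pb{\F p_\ell,R_p}_{(p)}=0$ to $\pb{\F p_\ell,L}_{(p)}=0$ via the admissibility of the partial deformations $K_i^{(p)}$ (which the inductive proof of Proposition \ref{prp:K_i_admissible} does establish for every $p$) and the all-$p$ version of Lemma \ref{lma:F_ders} is sound, including the boundary cases $p=1$ and $p=2k+1$. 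What your approach buys is an explicit measure of how far the Poisson maps $\s_p$ are from intertwining the polynomial integrals, together with the stronger intermediate statement $\pb{\F p_\ell,K_i^{(p)}}_{(p)}=0$ for every $p$; what it costs is that the induction is redundant, since the admissibility-plus-Lemma \ref{lma:F_ders} argument you deploy to kill $R_p$ at step $p$ already proves the proposition outright when applied once at $p=2k+1$. One small caveat: your closing sentence on the $G_\ell^\nkb$ tacitly requires $\imath^*K_i^\nkb=K_i^\nkb$, which you do not justify; this is beyond the stated proposition, which concerns only the $F_\ell^\nkb$.
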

\begin{proof}
In view of Proposition \ref{prp:K_i_admissible}, we can write $K_i^b=LK'+K''$ where $K'$ and $K''$ are independent
of $x_{k+1},\dots,x_{n-k}$. Using Lemma \ref{lma:F_ders} twice,
\begin{equation*}
  \pb{F_\ell^b,K_i^b}_b=\pb{F_\ell^b,LK'+K''}_b=\pb{F_\ell^b,LK'}_b=K'\pb{F_\ell^b,L}_b\;.
\end{equation*}%
The Hamiltonian $H$ is of course also $(n,k)$-admissible, $H=L+H''$, with $H''$ independent of the variables
$x_{k+1},\dots,x_{n-k}$. Using that $F_\ell^b$ is a first integral (Theorem \ref{thm:defo_rat_invol}) and Lemma
\ref{lma:F_ders}, we can conclude that
\begin{equation*}
  \pb{F_\ell^b,K_i^b}_b=K'\pb{F_\ell^b,H-H''}_b=-K'\pb{F_\ell^b,H''}_b=0\;.
\end{equation*}%
\end{proof}
Combining the results obtained in this section, we can state and prove the main theorem on the integrability of the
systems $\LVB nk$, with $n\geqs 2k+1$. We denote, in that order,  by $H^{(n,k),b}_1,H^{(n,k),b}_2,$
$\dots,H^{(n,k),b}_{n-k-2}$ the following first integrals:
\begin{align*}
  F_1^b=G_1^b,F_2^b,\dots,F_{p-1}^b,G_2^b,\dots,G_{p-1}^b,F_p^b=G_{p}^b,\mbox{ when $n-k$ is even,}\\
              F_1^b,\dots,F_{p-1}^b,G_1^b,\dots,G_{p-1}^b,F_p^b=G_{p}^b,\mbox{ when $n-k$ is odd,}
\end{align*}
where $p:=\left[\frac{n-k}2\right]$.

\begin{thm}\label{thm:main}
  Consider the system $\LVB nk$, where $n\geqs 2k+1$.
  \begin{enumerate}
  \item[(1)] When $n>2k+1$, $\LVB nk$ is non-commutative integrable of rank $k+1$, with first integrals
    \begin{equation*}
      \qquad\qquad H=K_0^\nkb,K_1^\nkb\dots,K_k^\nkb,H^\nkb_1,H^\nkb_2,\dots, H^\nkb_{n-2k-2}\;.
    \end{equation*}
    The first $k+1$ functions of this list have independent Hamiltonian vector fields and are in involution with
    every function of the complete list.
  \item[(2)] $\LVB nk$  is Liouville  integrable with first integrals
    $$\qquad\quad H=K_0^\nkb,K_1^\nkb\dots,K_k^\nkb,H^\nkb_1,H^\nkb_2,
    \dots,H^\nkb_{s-1}\;,$$ where $s:=\left[\frac{n+1}2\right]-k.$
  \end{enumerate}
\end{thm}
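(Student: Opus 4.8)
The plan is to check the two conditions of Definition~\ref{def:non-com} for part~(1) and the classical involution-and-independence criterion for part~(2), assembling the results of the preceding subsections. I begin with the bookkeeping for part~(1). The proposed $s$-tuple consists of the $k+1$ polynomial integrals $K_0^\nkb,\dots,K_k^\nkb$ together with the $n-2k-2$ distinct rational integrals $H_1^\nkb,\dots,H_{n-2k-2}^\nkb$, so that $s=(k+1)+(n-2k-2)=n-k-1=m$ and hence $r:=n-s=k+1$. The hypothesis $2s\geqs n$ of Definition~\ref{def:non-com} becomes $2(n-k-1)\geqs n$, that is $n\geqs 2k+2$, which is exactly the assumption $n>2k+1$; this explains why $n=2k+1$ is excluded in~(1).

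Condition~(1) of the definition requires the first $r=k+1$ functions $K_0^\nkb,\dots,K_k^\nkb$ to be in involution with every function on the list. This is immediate from what is already proven: the $K_i^\nkb$ are pairwise in involution by their construction as restrictions to a Poisson submanifold of involutive integrals (Section~\ref{par:pol_int}), each $K_i^\nkb$ is in involution with every $F_\ell^\nkb$ by Proposition~\ref{prp:all_in_invol}, and with every $G_\ell^\nkb$ by the analogous statement, whose proof is the one of Proposition~\ref{prp:all_in_invol} applied through the anti-Poisson involution $\imath$ (using that the $(n,k)$-admissible decomposition $K=LK'+K''$ is $\imath$-invariant, since $L=x_{k+1}+\dots+x_{n-k}$ is). The remaining functions $H_j^\nkb$ need not be in involution among themselves, which is consistent with the non-commutative setting.

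It remains to verify Condition~(2), for which I would use a deformation argument in the spirit of Section~\ref{par:rat_integrability}. By the results of~\cite{PPPP}, the undeformed functions $K_i^\nk, F_\ell^\nk, G_\ell^\nk$ constitute a non-commutative integrable system of rank $k+1$ for $\LV nk$; hence there is a point $P$ at which the differentials of all $s=m$ undeformed functions are independent and at which $\X_{K_0^\nk}\we\cdots\we\X_{K_k^\nk}$ is non-zero. Since the deformed integrals depend polynomially (for the $K_i^\nkb$) and rationally (for the rational ones) on the parameters $b$ and reduce to the undeformed integrals at $b=0$, both non-degeneracy conditions persist for $b$ near $0$. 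To propagate independence to all values of $b$, I would invoke the scaling $x_i\mapsto\lambda x_i$, $b_{i,j}\mapsto\lambda^2 b_{i,j}$, which is a Poisson isomorphism $(\bbR^n,\Pi^\nk_{\lambda^2 b})\to(\bbR^n,\Pi^\nk_b)$ under which every integral and every Hamiltonian vector field transforms homogeneously; for any fixed $b_0$ one chooses $\lambda$ large enough that $\lambda^{-2}b_0$ lies in the small neighbourhood, and pulls back the non-degeneracy at $P$. I expect the non-vanishing of $\X_{K_0^\nkb}\we\cdots\we\X_{K_k^\nkb}$ to be the substantive point here, since it is precisely the feature distinguishing non-commutative from Liouville integrability and it rests essentially on the corresponding undeformed statement.

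For part~(2), Proposition~\ref{prp:poisson_deformation} gives that $\Pi^\nk_b$ has rank $2\fr n2$, so that Liouville integrability requires $n-\fr n2=\fr{n+1}2$ independent first integrals in involution. The proposed list $K_0^\nkb,\dots,K_k^\nkb,H_1^\nkb,\dots,H_{s-1}^\nkb$ with $s=\fr{n+1}2-k$ contains exactly $(k+1)+(s-1)=\fr{n+1}2$ functions. The decisive point is the selection of the $H_j^\nkb$: since $k\geqs 1$, the initial segment $H_1^\nkb,\dots,H_{s-1}^\nkb$ of the list consists entirely of functions of the form $F_\ell^\nkb$, the $G_\ell^\nkb$ entering only later, so that these are pairwise in involution by Theorem~\ref{thm:defo_rat_invol}; mixing the $F$'s and $G$'s would destroy involutivity. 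Together with the involution of the $K_i^\nkb$ among themselves and with the $F_\ell^\nkb$ (Proposition~\ref{prp:all_in_invol}), all $\fr{n+1}2$ functions are then pairwise in involution, and their independence follows from the same deformation-and-rescaling argument as in part~(1), now starting from the independence of the corresponding undeformed Liouville integrals of $\LV nk$ from~\cite{PPPP}.
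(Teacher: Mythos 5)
Your proof is correct and follows essentially the same route as the paper: condition (1) of Definition \ref{def:non-com} is assembled from Section \ref{par:pol_int} and Proposition \ref{prp:all_in_invol} (with the $G_\ell^\nkb$ handled via the reversal $\imath$), condition (2) and the independence in part (2) are obtained by the same deformation-plus-rescaling argument transferring the known non-degeneracy of the undeformed system from \cite{PPPP}, and part (2) reduces to counting and to the observation that the selected rational integrals are all of $F$-type. The extra bookkeeping you supply (the arithmetic $s=n-k-1$, $r=k+1$, and the check that $H_1^\nkb,\dots,H_{s-1}^\nkb$ are exactly the $F_\ell^\nkb$ covered by Theorem \ref{thm:defo_rat_invol}) is consistent with, and slightly more explicit than, the paper's own proof.
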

\begin{proof}
We first consider (1). We have already checked the first item of Definition \ref{def:non-com}, namely that the
$k+1$ polynomials $K_i$ are first integrals of $\LV nk$, and are in involution with both the polynomial and
rational first integrals (Section \ref{par:pol_int} and Proposition \ref{prp:all_in_invol}). We need to check the second
item which says that the differentials of these first integrals are independent on a dense open subset of $\bbR^n$,
and similarly for the Hamiltonian vector fields associated to the polynomial first integrals. To do this, we use the fact
that the undeformed first integrals have this property, as they define a non-commutative integrable system of rank
$k+1$ (see \cite[Theorem 1.1]{PPPP}). Since all first integrals are rational functions and since the Poisson structure is
polynomial, it suffices to prove that the differentials (resp.\ vector fields) are independent at some point. The
argument is the same as the one used in Section \ref{par:rat_integrability} to derive the independence of the
first integrals of $\LVB n0$ from the independence of the first integrals of $\LV n0$: since the property is true at some point
$P$ when all parameters are zero, it is still true on a neighborhood of $P$ for small values of the parameters; by
rescaling the variables and parameters, one finds that at $P$ the property is true for all values of the
parameters. This proves (1). We now consider (2), the Liouville integrability. Since the rank of $\Pi_b$ is $n$
when $n$ is even and $n-1$ when $n$ is odd, we need $n/2$ independent first integrals in involution when $n$ is even and
$\frac{n+1}2$ when $n$ is odd. Clearly, the above list in (2) contains $k+s=\frac{n+1}2$ functions, which is the
right number, we know that they are pairwise in involution, and by the above argument they are independent. So they
define a Liouville integrable system.
\end{proof}
Item (1) in the theorem takes a slightly different form when $n=2k+1$. The constructed first integrals are then
polynomial and they define a non-commutative integrable system of rank $k$, which is equivalent to saying that it
is Liouville integrable, which is stated in (2), and was already proven in \cite{RCD}. The reason of this drop
in the rank of the non-commutative integrability when $n=2k+1$ is because, even though we have
$k+1$ polynomial first integrals that are in involution with all first integrals, like the general case of the $\LVB nk$
systems, now one of these $k+1$ polynomial first integrals is a Casimir and in order to establish the condition (2)
of Definition \ref{def:non-com} one has to exclude the Casimir from our set of first integrals.

\subsection{Examples}\label{par:examples}
For explicitness, we give below two examples, $\LVB 41$, which is the smallest new system with $k>0$ and $\LVB
71$, where one can see some non-trivial examples of the first integrals which we consider.

\begin{example}
For $n=4$ and $k=1$ the matrices $A^{(4,1)}$ and $B^{(4,1)}$ are given by
\begin{equation*}
  A^{(4,1)}=
  \begin{pmatrix}
    0&1&1&-1\\
    -1&0&1&1\\
    -1&-1&0&1\\
    1&-1&-1&0
  \end{pmatrix}\;,\quad
  B^{(4,1)}=
  \begin{pmatrix}
    0&0&b_3&-b_2\\
    0&0&0&b_1\\
    -b_3&0&0&0\\
    b_2&-b_1&0&0
  \end{pmatrix}\;.
\end{equation*}%
The corresponding system $\LVB 41$ is given by
\begin{eqnarray*}
  \dot x_1&=&x_1(x_2+x_3-x_4)+b_3-b_2\;,\\
  \dot x_2&=&x_2(-x_1+x_3+x_4)+b_1\;,\\
  \dot x_3&=&x_3(-x_1-x_2+x_4)-b_3\;,\\
  \dot x_4&=&x_4(x_1-x_2-x_3)+b_2-b_1\;.
\end{eqnarray*}
Besides the Hamiltonian $H=x_1+x_2+x_3+x_4$ it has an additional polynomial integral
$K_1^b=(x_1x_4+b_2)(x_2+x_3)+b_3x_4+b_1x_1$ which is easily seen to be a $(4,1)$-admissible polynomial.  The above
two polynomials give the Liouville integrability of the system $\LVB 41$ which coincides in this case with the
non-commutative integrability of rank $k+1=2$ just like in all $\LVB{2k+2}{k}$ systems.
\end{example}

\begin{example}
We now consider the case $n=7$ with $k=1$. The matrix $A:=A^{(7,1)}$ is the skew-symmetric Toeplitz matrix with
first line $(0,1,1,1,1,1,-1)$ and $B:=B^{(7,1)}$ is the skew-symmetric matrix whose only non-zero upper triangular
entries are $b_{1,6}=b_3$, $b_{1,7}=-b_2$ and $b_{2,7}=b_1$.
%
%
The corresponding system $\LVB 71$ is given by the equations
\begin{equation*}
\dot x_i = \sum_{j=1}^7\left(A_{i,j}x_ix_j+b_{i,j}\right),
\quad \text{for} \quad i=1,2,\ldots, 7.
\end{equation*}
Besides the Hamiltonian $H=x_1+x_2+\cdots+x_7$, the system $\LVB 71$ has one more independent polynomial first
integral $K_1^b$, given by
$$
  K_1^b=(x_2+x_3+x_4+x_5+x_6)(x_1x_7+b_2)+b_3x_7+b_1x_6\,,
$$
which is a $(7,1)$-admissible polynomial. It has also three rational first integrals given by
\begin{gather*}
  F_1^b=\frac{(x_1x_7+b_2)x_2x_4x_6+b_3x_2x_4x_7+b_1x_1x_4x_6+b_3b_1x_4}{x_3x_5},\\
  F_2^b=\frac{(x_1x_7+b_2)(x_2+x_3+x_4)x_6+b_3(x_2+x_3+x_4)x_7+b_1(x_1x_6+b_3)}{x_5},
\end{gather*}
and $G_2^b=\imath^* F_2^b$. The rank of the Poisson structure $\Pi^{(7,1)}_b$ is 6 and $F_1^b$ is a Casimir, invariant
under $\imath^*$. It can be seen that the above first integrals are obtained from the undeformed ones (obtained by
setting the parameters equal to zero), by applying on them the operator $e^{D_b}$ which now becomes
$$
e^{D_b}
=\left(I+b_3\frac{\p^2}{\p x_1\p x_6}\right)
\left(I+b_2\frac{\p^2}{\p x_1\p x_7}\right)
\left(I+b_1\frac{\p^2}{\p x_2\p x_7}\right).
$$
%
The system $\LVB 71$ is non-commutative integrable of rank $2$ with first integrals $H, K_1^b, F_1^b, F_2^b, G_2^b$ and is
also Liouville integrable with first integrals $H, K_1^b, F_1^b, F_2^b$ or $H, K_1^b, F_1^b, G_2^b$.
\end{example}


\section{Discretization of $\LVB n0$}\label{sec:discretizations}

In this section we construct a family of discretizations of $\LVB n0$. They are obtained from a discrete zero
curvature condition, which is the compatibility condition of a linear system $L \Psi=\lambda \Psi, \tilde \Psi=N
\Psi,$ where $L$ is the Lax matrix of $\LV n0$, which appears in \eqref{eq:bogo_deformed_lax}. We prove that an
important class of these discretizations, which includes the Kahan (also called Kahan-Hirota-Kimura) discretization
of $\LV n0$ has the following integrability properties: it has the rational first integrals of $\LV n0$ as invariants,
and so it is both Liouville and superintegrable; also it has an invariant measure.

Throughout this section, $(n,k)=(n,0)$ is fixed and so we will drop $(n,0)$ from the notation for the invariants,
the Poisson structure, and so on. Also, since we have in this case only one parameter $b_{1,n}$, we will
denote it by~$\beta$, as we did in Section \ref{sec:rational}.

\subsection{Preliminaries}
We first recall a few basic definitions and properties of discrete maps and their integrability.  By a
\emph{discrete map} of $\bbR^n$ we mean an algebra homomorphism $\Phi:\bbR(x_1, x_2, \ldots, x_n)\rightarrow
\bbR(x_1, x_2, \ldots, x_n)$, where $x_1,\dots,x_n$ are as elsewhere in this paper the natural coordinates on
$\bbR^n$. Such a map is the pullback of a unique rational map $\phi:\bbR^n\to\bbR^n$, i.e., for any rational
function $F$, one has $\Phi(F)=\phi^*(F)=F\circ\phi$. We will also use the convenient abbreviations $\tilde F$ for
$\Phi(F)$. Similarly, for a matrix $P=(p_{i,j})$ whose entries are rational functions of $\bbR^n$, we will write
$\tilde{P}$ for the matrix~$(\tilde p_{i,j})$.

When $\bbR^n$ is equipped with a Poisson structure $\Pi$, then saying that $\Phi$ is a homomorphism of Poisson
algebras is tantamount to saying that $\phi$ is a Poisson map; we will simply say that $\Phi$ \emph{preserves} the
Poisson structure $\Pi$. Also, on $\bbR^n$ we have a natural $n$-form, $\diff x_1\wedge\dots\wedge \diff x_n$ which
allows us to identify rational measures with rational $n$-forms and with rational functions. We will say that
$\Phi$ is \emph{measure preserving}, with preserved measure $F$, if it preserves the $n$-form
$F\mathrm{d}x_1\wedge\mathrm{d}x_2\wedge\ldots\wedge\mathrm{d}x_n$ in the sense that
$$
  F\mathrm{d}x_1\wedge\mathrm{d}x_2\wedge\ldots\wedge\mathrm{d}x_n=
  \tilde F\mathrm{d}\tilde x_1\wedge\mathrm{d}\tilde x_2\wedge\ldots\wedge\mathrm{d}\tilde x_n\,.
$$
A rational function $F$ is called an \emph{invariant} of $\Phi$ if $\tilde F=F$.  We also recall the definition of
an integrable map \cite{Veselov_maps}.
\begin{defn} Suppose that $\Phi$ is a discrete map of $\bbR^n$.
  \begin{enumerate}
    \item[(1)] $\Phi$ is \emph{Liouville integrable} if there exist $n-r$ functionally independent
      invariants of $\Phi$, which are in involution with respect to a Poisson structure $\Pi$, where $r$ is half
      the rank of $\Pi$.
    \item[(2)] $\Phi$ is \emph{superintegrable} if it has $n-1$ functionally independent invariants and is
      measure preserving.
  \end{enumerate}
\end{defn}

\subsection{Discrete maps from a linear problem}
Recall that $\LVB n0$ is by definition a reduction of $\LVB {2m+1}m$, with $m:=n-1$,
and that
\begin{equation}\label{eq:L_matrix}
  L=X+\lambda^{-1}\Delta+\lambda M
\end{equation}
is the square matrix of size $2m+1$, where
\begin{equation}\label{eq:L_matrix_disc}
  X_{i,j}:=x_i\delta_{i,j+m}\;,\quad \Delta_{i,j}:=b_{i+m,j}\delta_{i,j}\;,
  \quad M_{i,j}:=\delta_{i+1,j}\;,
\end{equation}
and the indices of $x, b$ and $\delta$ are considered modulo $2m+1$.
We consider the compatibility conditions
of the linear system
\begin{equation}\label{eq:lin_sys}
  L \Psi=\lambda \Psi, \tilde \Psi=N \Psi\,,
\end{equation}
where $L$ is the Lax matrix of $\LVB {2m+1}{m}$, recalled above, and $\Psi$ is an $n$-dimensional
vector whose entries are rational functions on $\mathbb R^{2m+1}$. Recall that $\tilde{\Psi}$ is the
vector $\Psi$ with $\Phi$ applied to its entries. The $(2m+1)\times (2m+1)$ matrix $N$ is defined as
\begin{equation}\label{eq:N_matrix}
  N=D-\lambda K\,,
\end{equation}
where $K_{i,j}:=\delta_{i,j+m}$ and $D_{i,j}:=d_i\delta_{i,j}$ for some functions $d_i$ that will be determined
from the compatibility condition of \eqref{eq:lin_sys}, which reads $\tilde{L}N=NL$. Since $N$ is invertible, it
means that $\tilde{L}=NLN^{-1}$ and therefore the coefficients of the characteristic polynomial of $L$, which are
rational functions on $\bbR^{2m+1}$, are invariants of $\Phi$.  The above ansatz for $N$ was taken so that
$NLN^{-1}$ equals $L$ at the entries with constant values. Therefore, the compatibility condition $\tilde{L}N=NL$
reduces to a system of equations for the $\tilde{x}_i$ and $d_i$ variables, which we make explicit in the following
proposition:
\begin{prop}
  The compatibility condition $\tilde{L}N=NL$ of the linear system \eqref{eq:lin_sys} is equivalent to the
  following system of equations:
  \begin{equation}\label{eq:sys_eq_map}
    d_{i+1}-d_i+x_{m+1+i}-\tilde{x}_i=0 \ \text{and} \quad
    d_{m+1+i}\tilde{x}_i-d_ix_i+b_{i,m+1+i}-b_{m+i,i}=0\;,
  \end{equation}
  for $i=1,2,\ldots, 2m+1=2n-1$.
\end{prop}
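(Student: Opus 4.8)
The plan is to compute the matrix product $\tilde{L}N - NL$ entry by entry and read off the conditions under which it vanishes. Since $L=X+\lambda^{-1}\Delta+\lambda M$ and $N=D-\lambda K$, I would first expand both products as polynomials in $\lambda$, grouping terms by their power of $\lambda$. The product $\tilde L N$ contributes terms in $\lambda^{-1},\lambda^0,\lambda^1$ (from $\lambda^{-1}\tilde\Delta\cdot D$, the $\lambda^0$ cross terms, and $\lambda M\cdot(-\lambda K)$ giving a $\lambda^0$ piece since $MK$ has a definite structure), and similarly for $NL$. The key structural observation, already signalled by the ansatz remark preceding the proposition, is that $N$ was chosen precisely so that the matching holds automatically at the entries coming from the constant-in-$\lambda$ matrix $M$ (the ``entries with constant values''); so the surviving equations come only from the $X$, $\Delta$, $D$, $K$ pieces.

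Concretely, I would use the explicit entries from \eqref{eq:L_matrix_disc} and \eqref{eq:N_matrix}: $X_{i,j}=x_i\delta_{i,j+m}$, $\Delta_{i,j}=b_{i+m,j}\delta_{i,j}$, $M_{i,j}=\delta_{i+1,j}$, $D_{i,j}=d_i\delta_{i,j}$, $K_{i,j}=\delta_{i,j+m}$. Because each of these is supported on a single (possibly shifted) diagonal, every matrix product here reduces to a single surviving term, and the $(i,j)$ entry of $\tilde L N$ and of $NL$ can be written down directly without any genuine matrix summation. I expect the $\lambda^1$-coefficient equation to give the first relation in \eqref{eq:sys_eq_map}. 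Tracking the $\lambda^0$ coefficient: from $\tilde L N$ one gets contributions $\tilde X D$ and $(\lambda^{-1}\tilde\Delta)(-\lambda K)=-\tilde\Delta K$ together with the $M$-$D$ cross term, and from $NL$ one gets $DX$, $-\lambda K\cdot\lambda^{-1}\Delta$ and the $D$-$M$ cross term; comparing the diagonal-shifted entries should produce the second relation in \eqref{eq:sys_eq_map}. I would verify that the $\tilde\Delta$ and $\Delta$ terms are what produce the parameter differences $b_{i,m+1+i}-b_{m+i,i}$, using the stated modular convention on indices.

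The step I expect to be the genuine bookkeeping obstacle is keeping the index shifts consistent. Several matrices live on off-diagonals ($M$ is the superdiagonal, $K$ and $X$ are on the $m$-th subdiagonal), so each product shifts indices by $m$ or by $1$, and all indices are to be read modulo $2m+1$; getting the correct entry $(i,j)$ to line up — in particular matching $\tilde x_i$ against $d_{m+1+i}$ and $x_i$ against $d_i$ in the second equation, and tracking exactly which $b$'s appear — requires care. The cleanest way to present this is to fix a general entry $(i,i+m)$ (the only off-diagonal position where $X$, $\tilde X$, $K$ all contribute) for the second equation, and the entry $(i,i+1)$ (where $M$ lives) for the first equation, compute $(\tilde L N)_{i,j}$ and $(NL)_{i,j}$ at exactly those positions, and set them equal. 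Once the two families of surviving equations are extracted and all other entries are checked to match identically (this is where the ansatz pays off), the equivalence with \eqref{eq:sys_eq_map} for $i=1,\dots,2m+1$ follows, and the proof is complete.
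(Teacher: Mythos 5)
Your proposal is correct and is essentially the same direct verification the paper gives: the paper computes the two equations coming from the first row of $\tilde L N=NL$ and then invokes the cyclic (shift) structure of the matrices, whereas you compute a general entry $(i,i+1)$ for the $\lambda^1$-coefficient and $(i,i+m)$ for the $\lambda^0$-coefficient, which amounts to the same bookkeeping. Your observation that the $\lambda^2$ and $\lambda^{-1}$ coefficients match automatically (since $MK=KM$ and $\tilde\Delta=\Delta$ is constant) correctly accounts for the paper's remark that the ansatz makes the constant entries match.
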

\begin{proof}

Notice first that the equations we get from the first line of $\tilde{L}N=NL$ are
$$
d_2-d_1+x_{m+2}-\tilde{x}_1=0 \quad\hbox{and}\quad d_{m+2}\tilde{x}_1-d_1x_1+b_{1,m+2}-b_{m+1,1}=0\;,
$$
which is (\ref{eq:sys_eq_map}) for $i=1$. Because of the form of the matrices appearing in
$\tilde{L}N=NL$ the other equations are obtained by shifting all indices by $1$, and the result follows.
\end{proof}


We now reduce these equations to $\LVB n0$, by setting $x_i=\tilde x_i=0$ for $i=n+1, n+2, \ldots, 2m+1$ and
$b_{i,m+i}=0$ for $i=2,3,\ldots,2m+1$, where we recall that $m=n-1$ and that we denote the single parameter
$b_{1,n}$ of $\LV n0$ as $b_{1,n}=\beta$.  The system \eqref{eq:sys_eq_map} is then transformed to the following
one:
\begin{equation}\label{eq:sys_comp}
  \begin{aligned}[c]
    \tilde x_i&=d_{i+1}-d_{i}\;,\\
    \tilde x_n&=x_1+d_{n+1}-d_{n}\;,\\
    x_{i+1}&=d_{n+i}-d_{n+i+1}\;,\\
    d_{n+i+1}\tilde x_{i+1}&=d_{i+1}x_{i+1}\;,\\
    d_{n+1}\tilde x_1&=d_1x_1-\beta\;,\\
    d_{1}\tilde x_n&=d_nx_n+\beta\,,
  \end{aligned}
  \qquad
  \begin{aligned}[c]
    i=1,\ldots, n-1,\\
    \\
    i=1,\ldots, n-1,\\
    i=1,\ldots, n-2,\\
    \\
    \\
  \end{aligned}
\end{equation}
where the first three equation are instances of the first equation in \eqref{eq:sys_eq_map} and the last three
equations of the second one.  Before solving the above system, we recall from Section \ref{par:solutions} the
alternative coordinates $u_1,\dots,u_n$ for $\bbR^n$, in which the system $\LVB n0$ completely separates. They are
defined by $u_i=\sum_{j=1}^ix_j$ for all $i=0,1,2,\ldots,n$.  For $i=n, \, u_n$ is just the Hamiltonian,
$u_n=H=x_1+x_2+\ldots+x_n$.
\begin{prop}\label{prp:comp_sol}
  For any rational function $\cR\in\bbR(x_1,\dots,x_n)$, different from the $n$ functions $u_i-H$, with
  $i=1,\dots,n$, the reduced compatibility equations (\ref{eq:sys_comp}) have a unique solution for $\tilde
  x_1,\dots,\tilde x_n$ and for $d_2,\dots,d_{2n-1}$, with $d_1=\cR$. It is given by
  \begin{gather}
    \begin{split}\label{eq:sol_comp}
      \tilde{x}_i&=d_{i+1}-d_i\;, \qquad\quad\ i=1,2,\ldots, n-1\;, \\
      \tilde{x}_n&=x_1+d_{n+1}-d_n\;,\\
      d_i&=\frac{\cR(\cR+H)-\beta} {\cR+H-u_{i-1}}\;,\ \quad i=2,3,\ldots, n\;,\\
      d_{n+i}&=\cR+H-u_i\;,\qquad\quad i=1,2,\ldots, n-1\;.
    \end{split}
  \end{gather}
\end{prop}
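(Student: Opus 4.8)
The plan is to establish existence and uniqueness separately: uniqueness by eliminating the unknowns one block of equations at a time, and existence by substituting the closed formulas (\ref{eq:sol_comp}) back into the six blocks of (\ref{eq:sys_comp}). The genericity hypothesis on $\cR$ enters only to guarantee that the denominators $\cR+H-u_i$ that appear are nonzero.

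For uniqueness I would begin with the third block, $x_{i+1}=d_{n+i}-d_{n+i+1}$ with $i=1,\dots,n-1$. Since the indices are read modulo $2m+1=2n-1$, the case $i=n-1$ reads $x_n=d_{2n-1}-d_{2n}=d_{2n-1}-\cR$, because $d_{2n}=d_1=\cR$. These $n-1$ relations form a cyclic linear system in $d_{n+1},\dots,d_{2n-1}$ whose last relation already involves the known value $\cR$; solving from the bottom up determines them uniquely as $d_{n+i}=\cR+(x_{i+1}+\cdots+x_n)=\cR+H-u_i$, matching the last line of (\ref{eq:sol_comp}). It is this cyclic closure $d_{2n}=d_1$ that pins down $d_{n+1}$ and removes the apparent freedom in the second half of the $d$'s. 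Next, the fifth block $d_{n+1}\tilde x_1=d_1x_1-\beta$, combined with $\tilde x_1=d_2-\cR$ from the first block, fixes $d_2$ uniquely, since $d_{n+1}=\cR+H-u_1\ne0$. Substituting $\tilde x_j=d_{j+1}-d_j$ into the fourth block then turns it into $d_{n+j}(d_{j+1}-d_j)=d_jx_j$ for $j=2,\dots,n-1$; as each $d_{n+j}=\cR+H-u_j$ is nonzero, this is a recursion determining $d_3,\dots,d_n$ successively from $d_2$, after which the first two blocks fix the $\tilde x_i$. Hence every solution of (\ref{eq:sys_comp}) with $d_1=\cR$ coincides with (\ref{eq:sol_comp}).

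For existence I would substitute (\ref{eq:sol_comp}) into (\ref{eq:sys_comp}). Writing $C:=\cR(\cR+H)-\beta$, so that $d_i=C/(\cR+H-u_{i-1})$ and $d_{n+i}=\cR+H-u_i$, the first two blocks hold by construction; the third reduces to the telescoping identity $(\cR+H-u_i)-(\cR+H-u_{i+1})=x_{i+1}$; and the fourth, after clearing the denominator $\cR+H-u_{j-1}$, reduces to $u_j-u_{j-1}=x_j$. For the fifth and sixth blocks I would simplify using $u_1=x_1$ and $u_{n-1}=H-x_n$, so that $d_2=C/(\cR+H-x_1)$ and $d_n=C/(\cR+x_n)$, and then a short direct computation confirms $d_{n+1}\tilde x_1=\cR x_1-\beta$ and $d_1\tilde x_n=d_nx_n+\beta$.

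The delicate point, which I expect to be the main obstacle, is that after $\tilde x$ is eliminated through the first two blocks the system carries one more scalar equation than it has remaining unknowns $d_2,\dots,d_{2n-1}$: the sixth block plays no role in the uniqueness argument and must therefore be an automatic consequence of the others. Concretely, the fifth block together with the recursion already force a single value of $d_2$, hence of $d_n$, and the content of the sixth block is that this value is compatible with it. This compatibility is exactly the sixth-block verification in the existence step, so carrying out that computation carefully — and keeping track of the modular index that produces the closure $d_{2n}=d_1$ — is what the proof really hinges on.
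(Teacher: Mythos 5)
Your proposal is correct and follows essentially the same route as the paper: uniqueness by elimination (using the cyclic closure $d_{2n}=d_1=\cR$ to telescope the third block into $d_{n+i}=\cR+H-u_i$, then pinning down $d_2,\dots,d_n$ from the fourth and fifth blocks), followed by existence via direct substitution, with the sixth block absorbed into that verification. The only cosmetic difference is that the paper packages the fourth and fifth blocks as the single statement that the product $d_{i+1}d_{n+i}$ is constant and equal to $\cR(\cR+H)-\beta$, whereas you run the equivalent recursion step by step.
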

\begin{proof}
We first show how the third and fourth equations in (\ref{eq:sol_comp}) are derived from (\ref{eq:sys_comp}). The
last equation is obtained from the third equation in (\ref{eq:sys_comp}): for $i=1,\dots,n-1$,
\begin{equation*}
  H-u_i=\sum_{j=i+1}^n x_j=\sum_{j=i+1}^n(d_{n+j-1}-d_{n+j})=d_{n+i}-\cR\;,
\end{equation*}%
where we have used that, by periodicity, $d_{2n}=d_{2m+2}=d_1=\cR $. In order to derive the third equation in
(\ref{eq:sol_comp}), one first uses the first three equations in (\ref{eq:sys_comp}) to substitute $\tilde x_i$
($i=1,\ldots,n$) and $x_i$ ($i=2,3,\ldots, n$) in the fourth and fifth equations in \eqref{eq:sys_comp}, to obtain,
in that order,
\begin{gather}\label{sys00}
  \begin{split}
    d_{i+1}d_{n+i}&=d_{i+2}d_{n+1+i}\;, \quad i=1,\ldots, n-2\;,\\
    \cR (x_1+d_{n+1})&=d_2d_{n+1}+\beta\;.
\end{split}
\end{gather}
The first equation in (\ref{sys00}) says that $d_{i+1}d_{n+i}$ is independent of $i$ for $i=1,\dots,n-2$, while the
second equation says that this constant value is equal to $\cR (x_1+d_{n+1})-\b,$
\begin{equation}\label{eq:int}
  d_{i+1}d_{n+i}=\cR (x_1+d_{n+1})-\b=\cR (\cR +H)-\b\;,
\end{equation}%
for $i=1,\dots,n-2$. By our assumption on $\cR $, the $d_{n+i}=\cR +H-u_i$ with $i=1,\dots,n-1$ are all different from
zero, so that we can divide (\ref{eq:int}) by~$d_{n+i}$. It yields the third equation in (\ref{eq:sol_comp}). This
shows that (\ref{eq:sol_comp}) is the only possible solution for (\ref{eq:sys_comp}) with $d_1=\cR $. That it is
indeed a solution is easily verified by substituting the formulas (\ref{eq:sol_comp}) in (\ref{eq:sys_comp}).
\end{proof}
We now define a discrete map using the solution of \eqref{eq:sys_comp} given by \eqref{eq:sol_comp}.  Let
$\cR$ be a rational function, with $\cR \neq u_{i}-H$ for all $i=1,2,\ldots, n-1$, and let $\Phi_\cR$ be the discrete
map $x_i\mapsto \tilde x_i$, defined by the formulas
\begin{gather}
\begin{split}
\label{eq:phi_expl_x}
\tilde{x}_1&=\frac{x_1\cR-\beta}{\cR +H-x_1}\;,\\
\tilde{x}_i&= x_i\frac{\cR (\cR +H)-\beta}{(\cR +H-u_i)(\cR +H-u_{i-1})}\;,\quad i=2, 3, \ldots, n-1\;,\\
\tilde{x}_n&= \frac{x_n(\cR +H)+\beta}{\cR +x_n}\;.
\end{split}
\end{gather}
Using the first $n$ equations in \eqref{eq:sys_comp}, we get $\tilde{u}_i=d_{i+1}-\cR $
and therefore the map is given in terms of the coordinates $u_i$ by $\tilde{u}_n=u_n$ and
\begin{equation}
\label{eq:phi_expl_u}
\tilde{u}_i=\frac{u_i \cR -\beta}{\cR +H-u_i}, \quad i=1,2, \ldots, n-1\;.
\end{equation}
Since the discrete map $\Phi_\cR$ is by construction isospectral, it has the coefficients of the
characteristic polynomial of the Lax matrix $L$ as invariants. However, as we noted just after equation
\eqref{eq:K_are_defos}, we get in this way only one invariant, namely the Hamiltonian $H$.
We show in the next proposition that  $\Phi_\cR $ also preserves the rational first integrals
of $\LVB n0$. By $\phi_R$ we will denote the rational map underlying $\Phi_R$.

\begin{prop}\label{prp:rat_preser}
Let $P$ be any point of $\bbR^n$ for which $Q:=\phi_\cR(P)$ is defined. Then $Q$ belongs to the integral
curve of the continuous system $\LV n0$ starting at $P$. In particular, the discrete map $\Phi_\cR $ preserves all
the first integrals of $\LVB n0$.
\end{prop}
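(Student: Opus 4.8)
The plan is to show that the discrete map $\Phi_\cR$ moves each point along the continuous flow of the \emph{undeformed} system $\LV n0$, and then invoke the fact (Proposition~\ref{prp:rat_first_integrals}) that the deformed rational functions $F_\ell^b$, $G_\ell^b$ are first integrals of $\LVB n0$ to conclude they are invariants. The key observation is that in the separating coordinates $u_i$ the map takes the remarkably simple form \eqref{eq:phi_expl_u}, $\tilde u_i=(u_i\cR-\beta)/(\cR+H-u_i)$, with $\tilde u_n=u_n=H$ fixed. I would compare this with the explicit solution of the continuous flow, which in the $u_i$ coordinates is governed by the decoupled Riccati equations $\dot u_i=u_i(H-u_i)+\beta$ from \eqref{eq:LV0_u}. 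Crucially, both the continuous flow and the discrete map act on each coordinate $u_i$ (for $i<n$) \emph{separately}, and in exactly the same one-parameter family of M\"obius transformations in $u_i$.

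First I would record that the time-$t$ flow of the Riccati equation $\dot u=u(H-u)+\beta$ is a M\"obius transformation of $u$ whose coefficients depend on $t$, $H$ and $\beta$; this is precisely what Proposition~\ref{prp:solutions} writes out via \eqref{eq:u_solution}. The group of M\"obius transformations fixing the two roots of $u^2-Hu-\beta=0$ (equivalently, commuting with the flow) is one-dimensional, and the flow traces out exactly this one-parameter subgroup. Next I would check that the map $u\mapsto(u\cR-\beta)/(\cR+H-u)$ from \eqref{eq:phi_expl_u} is a M\"obius transformation in $u$ fixing the same two points: substituting a root $u_*$ with $u_*^2-Hu_*-\beta=0$ gives $(u_*\cR-\beta)/(\cR+H-u_*)$, and using $\beta=u_*^2-Hu_*$ one verifies the numerator equals $u_*(\cR+H-u_*)$, so $u_*$ is indeed fixed. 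Hence $\Phi_\cR$, read coordinatewise in the $u_i$, is a M\"obius map with the same two fixed points as the continuous flow, and therefore lies in the same one-parameter subgroup. This means $\Phi_\cR$ coincides with the time-$\tau$ map of $\LVB n0$ for some value $\tau=\tau(P)$ (possibly depending on the point through $\cR$); in particular $Q=\phi_\cR(P)$ lies on the integral curve through $P$.

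To make the first sentence of the proposition precise I would argue as follows. Fix a point $P$ with coordinates $(u_1,\dots,u_{n-1},H)$ and let $u_*^{\pm}$ be the two fixed points determined by $H$ and $\beta$. The M\"obius transformation effected by $\Phi_\cR$ is completely determined by $H$, $\beta$ and its multiplier $\mu$ at a fixed point, and this multiplier is the \emph{same} for every coordinate $u_i$ because the map \eqref{eq:phi_expl_u} has the identical form in each $u_i$ (only $u_i$ itself changes, while $\cR$ and $H$ are common to all coordinates). Since the continuous flow also acts with a common multiplier $e^{t\Delta_0}$ across all coordinates, matching the two multipliers determines a single time $\tau$ with $\tilde u_i=u_i(\gamma_P(\tau))$ for all $i=1,\dots,n-1$ simultaneously; as $u_n=H$ is fixed by both, we get $Q=\gamma_P(\tau)$. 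Thus $Q$ lies on the $\LV n0$ integral curve through $P$, which is the claimed statement.

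The last sentence then follows immediately: the rational first integrals $F_\ell^b,G_\ell^b$ of $\LVB n0$ are constant along the integral curves of $\X_H$ by Proposition~\ref{prp:rat_first_integrals}, and since $Q$ lies on the integral curve through $P$ we get $F_\ell^b(Q)=F_\ell^b(P)$, i.e.\ $\widetilde{F_\ell^b}=F_\ell^b$, so they are invariants of $\Phi_\cR$; likewise for $G_\ell^b$ and for $H$ itself. The main obstacle I anticipate is the bookkeeping needed to confirm that the M\"obius map \eqref{eq:phi_expl_u} really does share both fixed points \emph{and} a single common multiplier with the flow, uniformly in $i$ — i.e.\ verifying that the multiplier extracted from \eqref{eq:phi_expl_u} is independent of the coordinate index so that one time $\tau$ works for all $u_i$ at once. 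Once that uniformity is established, the identification of $\phi_\cR(P)$ with a point on $\gamma_P$ is automatic, and the invariance of the first integrals is a one-line consequence.
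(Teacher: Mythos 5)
Your argument is correct and is essentially the paper's own proof: both pass to the decoupled $u_i$-coordinates, observe that $\Phi_\cR$ acts on every $u_i$ ($i<n$) by one and the same M\"obius transformation determined by $H$, $\beta$ and the value of $\cR$ at $P$, and identify this with a (complex) time-$\tau$ map of the Riccati flow \eqref{eq:LV0_u} using Proposition~\ref{prp:solutions}. The only difference is that the paper simply exhibits the explicit time $t=-\Delta_0^{-1}\ln\bigl((h+2r_0+\Delta_0)/(h+2r_0-\Delta_0)\bigr)$ and checks the formulas agree, whereas you reach the same conclusion by matching the fixed points $u^2-Hu-\beta=0$ and the (coordinate-independent) multiplier; both, like the paper, tacitly restrict to the generic case $\Delta_0\neq0$, which suffices since the invariants are rational.
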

\begin{proof}
Since $\Phi_\cR$ is a discrete map and the first integrals of $\LVB n0$ are rational functions,
it suffices to show that for a generic $P$ of $\bbR^n$ for which $Q:=\phi_\cR(P)$ is defined,
$Q$ belongs to the integral curve of the continuous system $\LVB n0$ starting at $P$.

We use the notation of the proof of Proposition \ref{prp:solutions}: we denote by $\gamma_P$ the integral
curve of (\ref{eq:LV0_u}) starting from $P$, and we write $u_i(t)=u_i(\gamma_P(t))$.
We denote by $h$ the value of the Hamiltonian at $P$ and by $\Delta_0$
a square root of ${h^2+4\beta}$, which may be real or imaginary.
Also, let $r_0$ denote the value of $\cR $ evaluated at $P$ and
$Q=(Q_1,Q_2, \ldots, Q_n)$.

It is clear from the above that we only need to show that for each $P$ such that $Q$ is defined there exists a $t$,
depending only on $P$, such that $Q_i=u_i(t)$ for all $i=1,2,\ldots, n-1$.  It is also clear that we may consider
our system $\LVB n0$ living on $\bbC^n$ and therefore the integral curves are defined on all of $\bbC$ minus a
discrete set (see Section \ref{par:solutions} for details and comments).

We only need to consider the case that $\Delta_0\neq 0$. In this case the solution of $\LVB n0$, for
$$
  t=-\frac{\ln\left(\frac{h+2r_0+\Delta_0}{h+2r_0-\Delta_0}\right)}{\Delta_0}\,,
$$
gives $Q_i=u_i(t)$ for all $i$, as can be seen by comparing the formulas \eqref{eq:phi_expl_u}
and the explicit solution of $\LVB n0$ given in Proposition \ref{prp:solutions}.
\end{proof}
\subsection{Integrable discretization of $\LVB n0$}
For a general rational function $\cR$, the discrete map $\Phi_\cR =\phi_\cR^*$ cannot be expected to have any integrability
properties. We establish in this subsection a few results under the assumption that $\cR$ is a first integral of
$\LV n0$, or under the stronger hypothesis that $\cR$ depends on $H$ only. We first prove that, under these
conditions, $\Phi_\cR$ is birational.
\begin{prop}\label{prp:bir}
  Suppose that $\cR$ is a first integral of $\LVB n0$. Then $\phi_\cR $ is a birational map, so that $\Phi_\cR $
  is an algebra automorphism of $\mathbb R(x_1,\dots,x_n)$.
\end{prop}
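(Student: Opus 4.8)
The plan is to produce an explicit rational inverse for $\phi_\cR$, working throughout in the coordinates $u_1,\dots,u_n$ rather than $x_1,\dots,x_n$. Since the change of coordinates $u_i=x_1+\cdots+x_i$ is a linear isomorphism of $\bbR^n$, birationality in the $u$-coordinates is equivalent to birationality in the $x$-coordinates, and in the $u$-coordinates the map is given by \eqref{eq:phi_expl_u}: one has $\tilde u_n=u_n$ and, for $1\leqs i\leqs n-1$,
\[
  \tilde u_i=\frac{u_i\cR-\beta}{\cR+H-u_i}\;.
\]
The key observation is that both $H$ and $\cR$ are \emph{invariants} of $\Phi_\cR$. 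Indeed $\tilde u_n=u_n=H$, so $\tilde H=H$, while $\tilde\cR=\cR$ by Proposition \ref{prp:rat_preser}, since $\cR$ is assumed to be a first integral of $\LVB n0$. Thus, for fixed values of $\cR$ and $H$, each of the displayed formulas is a M\"obius transformation in the single variable $u_i$, with coefficient matrix $\left(\begin{smallmatrix}\cR&-\beta\\-1&\cR+H\end{smallmatrix}\right)$ of determinant $\cR(\cR+H)-\beta$, which is not identically zero.

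Inverting this M\"obius transformation (equivalently, inverting the matrix) yields $u_i=\bigl(\tilde u_i(\cR+H)+\beta\bigr)/(\tilde u_i+\cR)$. Because $\cR$ and $H$ take the same values at a point $P$ and at its image $Q=\phi_\cR(P)$, this formal inversion becomes a genuine rational self-map $\psi$ of $\bbR^n$, defined in the $u$-coordinates by $\psi^*u_n=u_n$ and
\[
  \psi^*u_i=\frac{u_i(\cR+H)+\beta}{u_i+\cR}\;,\qquad 1\leqs i\leqs n-1\;,
\]
where now $\cR$ and $H$ are regarded as rational functions of $u_1,\dots,u_n$. I would then check that $\psi$ is a left inverse of $\phi_\cR$ on a dense open set: writing $r:=\cR(P)=\cR(Q)$ and $h:=H(P)=H(Q)$, substitution of $\tilde u_i=(u_ir-\beta)/(r+h-u_i)$ into the formula for $\psi^*u_i$ returns $u_i$ for each $i<n$ (this is precisely the inversion of the $2\times2$ matrix above, valid wherever $\cR(\cR+H)-\beta\neq0$), while the $n$-th component is fixed on both sides.

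Having produced a rational map $\psi$ with $\psi\circ\phi_\cR=\Id$ generically, I conclude by pulling back: from $(\psi\circ\phi_\cR)^*=\phi_\cR^*\circ\psi^*=\Id$ one sees that $\Phi_\cR=\phi_\cR^*$ admits $\psi^*$ as a right inverse, hence is surjective; being a homomorphism of fields it is automatically injective, so $\Phi_\cR$ is an algebra automorphism of $\bbR(x_1,\dots,x_n)$ and $\phi_\cR$ is birational. There is no genuine obstacle here, only the bookkeeping that the inverse is rational rather than polynomial, so the identity $\psi\circ\phi_\cR=\Id$ holds only on the dense open set where the denominators $\cR+H-u_i$, $u_i+\cR$ and $\cR(\cR+H)-\beta$ are nonzero (and where $\cR$ avoids the excluded values $u_i-H$ of Proposition \ref{prp:comp_sol}); since $\bbR^n$ is irreducible this generic identity is all that birationality requires, and the opposite composition $\phi_\cR\circ\psi=\Id$ then follows formally.
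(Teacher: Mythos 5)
Your proof is correct, but it takes a genuinely different route from the paper's. The paper never writes down the inverse: it observes that the compatibility system \eqref{eq:sys_comp} is invariant under the exchange $x_i\leftrightarrow\tilde x_{n+1-i}$, $d_i\leftrightarrow d_{2n+1-i}$, and then invokes the uniqueness statement of Proposition \ref{prp:comp_sol} (with $d_1=\tilde\cR=\cR$, which is where Proposition \ref{prp:rat_preser} enters, exactly as in your argument) to conclude that the $x_i$ are in turn uniquely and rationally determined by the $\tilde x_i$. You instead invert the closed-form expression \eqref{eq:phi_expl_u} directly, exploiting that for each $i<n$ the map is a M\"obius transformation in $u_i$ whose coefficients $\cR$ and $H$ are themselves invariants of $\Phi_\cR$; your inverse formula $\psi^*u_i=\bigl(u_i(\cR+H)+\beta\bigr)/(u_i+\cR)$ and the field-theoretic conclusion from $\phi_\cR^*\circ\psi^*=\Id$ are both correct. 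What the paper's symmetry argument buys is structural insight (the inverse is the forward construction read through the reversal $i\mapsto n+1-i$, echoing the involution $\imath$) and independence from the explicit solution formulas; what your computation buys is an explicit inverse and, usefully, it surfaces the one genuine degeneracy: the argument needs $\cR(\cR+H)-\beta\not\equiv0$. This holds automatically when $\beta\neq0$ (since $H^2+4\beta$ is not a square of a rational function), but for $\beta=0$ the first integrals $\cR=0$ or $\cR=-H$ make the M\"obius matrix singular and the map collapses, so it is not birational; this marginal case is silently excluded by the statement and by the paper's proof as well, so it is not a defect of your argument, but you should not assert flatly that the determinant ``is not identically zero'' without this caveat.
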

\begin{proof}
Let $\cR $ be as announced, so that $\tilde \cR =\cR $, in view of Proposition \ref{prp:rat_preser}.
If we exchange $x_i$ and $\tilde x_{n+1-i}$ as well as $d_i$ and $d_{2n+1-i}$ in the equations
\eqref{eq:sys_comp} we get the same set of equations: the first and third equations are permuted, as well as the
last ones, while the other two are unchanged. Since we know from Proposition (\ref{prp:comp_sol}) that given
$d_1:=\cR $ the reduced compatibility equations (\ref{eq:sys_comp}) have a unique solution for $\tilde
x_1,\dots,\tilde x_n$ and for $d_2,\dots,d_{2n-1}$, in terms of $x_1,\dots,x_n$, this means given $d_1:=\tilde \cR
=\cR $, they also have a unique solution for $x_1,\dots,x_n$ and for $d_2,\dots,d_{2n-1}$, in terms of $\tilde
x_1,\dots,\tilde x_n$. Therefore, the map $\phi_\cR $, defined by the solutions of the system (\ref{eq:sys_comp}),
is birational and so $\Phi_\cR $ is an algebra automorphism of $\mathbb R(x_1,\dots,x_n)$.
\end{proof}
We now prove that $\Phi_\cR $ is, under the same assumption on $\cR $, measure preserving.
\begin{prop} \label{prp:prs_form}
Suppose that $\cR $ is a first integral of $\LVB n0$. Then the discrete map $\Phi_\cR $ preserves the
rational $n$-form
$$
  \Omega_b:=\frac{\mathrm d x_1 \wedge \mathrm d x_2 \wedge \dots
\wedge \mathrm d x_n}{x_1x_2\ldots x_n +\beta x_2x_3\ldots
  x_{n-1}}\;.
$$
\end{prop}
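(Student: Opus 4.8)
The plan is to deduce measure preservation from the fact, established in Proposition~\ref{prp:rat_preser}, that $\Phi_\cR$ transports every point $P$ to a point $Q=\phi_\cR(P)$ lying on the integral curve of $\X_H$ through $P$. Since $\phi_\cR^*\Omega_b=\Omega_b$ is a rational identity, it suffices to verify it on a dense open subset of $\bbR^n$, where $\phi_\cR$ is a local diffeomorphism and the flow of $\X_H$ used below is defined along the relevant arc.

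The first step is to show that the continuous vector field $\X_H$ of $\LVB n0$ itself preserves $\Omega_b$, i.e. that $L_{\X_H}\Omega_b=0$. Writing $\Omega_b=D^{-1}\,\diff x_1\wedge\cdots\wedge\diff x_n$ with $D=x_2x_3\cdots x_{n-1}(x_1x_n+\beta)$, this amounts to the scalar identity $\X_H(\log D)=\div\X_H$. Using the explicit form (\ref{eqn:vf_k=0}) of $\X_H$, one computes on the one hand $\dot x_1x_n+x_1\dot x_n=(x_n-x_1)(x_1x_n+\beta)$, so that the factor $x_1x_n+\beta$ contributes exactly $x_n-x_1$ to $\X_H(\log D)$; on the other hand $\dot x_i/x_i=-x_1-\cdots-x_{i-1}+x_{i+1}+\cdots+x_n$ for $1<i<n$. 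Adding these and comparing with $\div\X_H=\p_{x_1}\dot x_1+\sum_{1<i<n}\p_{x_i}\dot x_i+\p_{x_n}\dot x_n$, the interior sums coincide term by term and the two boundary contributions telescope, giving $\div\X_H-\X_H(\log D)=0$. Hence $\Omega_b$ is invariant under the flow $\psi_t$ of $\X_H$.

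The second step turns this into invariance under $\Phi_\cR$. By Proposition~\ref{prp:rat_preser} and the explicit time computed in its proof, $\phi_\cR(P)=\psi_{t(P)}(P)$, where $t(P)$ depends on $P$ only through $H(P)$ and $\cR(P)$; since $H$ and $\cR$ are first integrals of $\X_H$, the function $t$ is itself a first integral, $\X_H(t)=0$. Consider $Y:=t\,\X_H$. Because $t$ is constant along trajectories of $\X_H$, the assignment $s\mapsto\psi_{s\,t(P)}(P)$ is precisely the flow of $Y$, so $\phi_\cR$ is its time-one map. For a top-degree form one has $\div_{\Omega_b}(t\,\X_H)=t\,\div_{\Omega_b}\X_H+\X_H(t)$, which vanishes by the first step together with $\X_H(t)=0$. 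Therefore the flow of $Y$, and in particular $\phi_\cR$, preserves $\Omega_b$, which is the assertion.

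The only genuinely delicate point is the passage in the second step from ``$t$ is a function of the invariants'' to ``$\phi_\cR$ is the time-one flow of $t\,\X_H$ with vanishing $\Omega_b$-divergence''; this is exactly where the hypothesis that $\cR$ is a first integral enters in an essential way, through $\X_H(t)=0$. A direct alternative would be to compute the Jacobian of $\phi_\cR$ in the separating coordinates $u_i$, but there the dependence of $\cR$ on all the $u_j$ produces a rank-two correction to the Jacobian matrix, making the determinant evaluation considerably more cumbersome than the flow argument above.
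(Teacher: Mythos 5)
Your argument is correct, but it follows a genuinely different route from the paper. The paper proves the statement by a direct Jacobian computation in the separating coordinates $u_i$: it writes out the matrix $U=(\tilde u_{i,j})$, reduces to the $(n-1)\times(n-1)$ minor $V$, observes that (up to row rescalings) $V$ is a scalar matrix plus a matrix with all rows equal --- so in fact a rank-\emph{one} correction, not rank two as you suggest in your closing aside, since the explicit $H$-dependence only survives in the deleted last column --- and evaluates $\det V=R^{n-1}/S$ using precisely the hypothesis that $\cR$ is a first integral to kill the nonzero eigenvalue $\sum_j\cR_j\dot u_j=\dot\cR$ of the rank-one part; it then matches the result against the explicit formulas (\ref{eq:phi_expl_x}). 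You instead factor the problem through the continuous flow: first the clean computation $\div\X_H=\X_H(\log D)$ showing $L_{\X_H}\Omega_b=0$ (which I have checked: the interior terms match and the boundary terms cancel against the contribution $x_n-x_1$ of the factor $x_1x_n+\beta$), then the observation that by Proposition \ref{prp:rat_preser} the map $\phi_\cR$ is the time-one map of $t\,\X_H$ with $t$ a first integral, hence $\Omega_b$-divergence-free. Your route is conceptually cleaner, explains \emph{why} the measure is preserved (it is the invariant measure of the continuous system), and uses the hypothesis on $\cR$ in exactly the same place the paper does, namely to make $\X_H(t)=0$; the paper's computation, by contrast, is self-contained and stays entirely within rational identities, never invoking the flow. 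The one point you should tighten is the genericity claim at the end of your second step: the time $t(P)$ from the proof of Proposition \ref{prp:rat_preser} involves a logarithm and may be complex or may exceed the interval on which the real integral curve through $P$ is defined, so the identification of $\phi_\cR$ with a time-one flow map really takes place on the complexification, with the form-invariance propagated by analytic continuation along a path avoiding the discrete singular set, before concluding the rational identity $\phi_\cR^*\Omega_b=\Omega_b$ everywhere. This is the same device the paper itself uses in the proof of Proposition \ref{prp:rat_preser}, so it is a legitimate step, but it deserves an explicit sentence rather than a parenthetical.
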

\begin{proof}
We need to show that
$$
  \frac{\mathrm d x_1 \wedge \mathrm d x_2 \wedge \dots \wedge \mathrm
d x_n}{x_1x_2\ldots x_n+\beta x_2x_3\ldots
  x_{n-1}}= \frac{\mathrm d \tilde{x}_1 \wedge \mathrm d \tilde{x}_2
\wedge \dots \wedge \mathrm d \tilde{x}_n}
  {\tilde{x}_1\tilde{x}_2\ldots \tilde{x}_n
+\beta\tilde{x}_2\tilde{x}_3\ldots \tilde{x}_{n-1}}\;.
$$
Since the coordinate change between the coordinates $u_i$ and $x_i$ have
triangular form, and since the functions
$\tilde u_i$ depend in the same way on the $\tilde x_i$, i.e., $\tilde
u_i=\sum_{j=1}^i\tilde x_j$, we have that
$$
\left|\frac{\p (\tilde{x}_1,\tilde{x}_2,\ldots,\tilde{x}_n)}
{\p (\tilde{u}_1,\tilde{u}_2,\ldots,\tilde{u}_n)}\right|=
\left|\frac{\p (x_1,x_2,\ldots,x_n)}{\p (u_1,u_2,\ldots,u_n)}\right|=1\;,
$$
where the above determinants are the Jacobian determinants of these two
transformations. This implies that we need
to show that
$$
  \frac{\mathrm d u_1 \wedge \mathrm d u_2 \wedge \dots \wedge \mathrm
d u_n}{x_1x_2\ldots x_n +\beta
  x_2x_3\ldots x_{n-1}}= \frac{\mathrm d \tilde{u}_1 \wedge \mathrm d
\tilde{u}_2 \wedge \dots \wedge \mathrm d
  \tilde{u}_n} {\tilde{x}_1\tilde{x}_2\ldots \tilde{x}_n
+\beta\tilde{x}_2\tilde{x}_3\ldots \tilde{x}_{n-1}}\;.
$$
We assume for the moment that $\cR $ is any rational function such that $\Phi_\cR$ is well
defined and we denote by $\cR _j$ the partial derivatives $\frac{\p
\cR }{\p u_j}$ and by $\tilde u_{i,j}$ the partial derivatives $\frac{\p \tilde{u}_i}{\p u_j}$.
The explicit formulas \eqref{eq:phi_expl_u} for $\tilde{u}_i$
and \eqref{eq:LV0_u} for $\dot{u}_i$ give,
for any $i=1,2,\ldots, n-1$ and any $j\notin\set{i,n}$, that
\begin{equation}\label{eq:part_tilu}
  \tilde u_{i,j}=\frac{\cR _j\dot{u}_i}{(\cR+H-u_i)^2}\;, \quad
  \tilde u_{i,i}=\frac{\cR _i\dot{u}_i+\cR H+\cR ^2-\beta}{(\cR +H-u_i)^2}\;,
\end{equation}
and $\tilde u_{n,j}=\delta_{n,j}$, since $\tilde u_n = u_n$. We do not
display the formulas for the derivatives $\tilde u_{i,n}$ ($i=1,\dots,n-1$)
since we do not need them in this proof.
Differentiating \eqref{eq:phi_expl_u}
and rearranging, we get
\begin{gather*}
  \mathrm{d}\tilde{u}_i=\sum_{j=1}^n\tilde u_{i,j}\mathrm du_j\;,\quad i=1,2,
\ldots, n\;.
\end{gather*}
Taking the wedge product of the above $n$ equations we obtain
$$
  \mathrm d \tilde{u}_1 \wedge \mathrm d \tilde{u}_2 \wedge \dots \wedge \mathrm d \tilde{u}_n=  \det(U)
  \ \mathrm d u_1 \wedge \mathrm d u_2 \wedge \dots \wedge \mathrm d u_n\;,
$$
where $U=(\tilde u_{i,j})$. The last line of $U$ is the vector $(0,0,\ldots, 1)$ and therefore expanding the determinant
$\det(U)$ with respect to the last line we get $\det(U)=\det(V)$ where $V$ is the minor of $U$ obtained by removing
its last row and last column.  According to the first formula in \eqref{eq:part_tilu} the determinant of the
$(n-1)\times(n-1)$ matrix $V$ has the following form:
$$
V=
\frac{1}{S}
\det\begin{pmatrix}
\cR _1\dot{u}_1 + R & \cR _2\dot{u}_2 & \cR _3\dot{u}_3 &\cdots & \cR _n \dot{u}_{n-1}\\
\cR _1\dot{u}_1& \cR _2\dot{u}_2+R & \cR _3\dot{u}_3 &\cdots & \cR _n \dot{u}_{n-1}\\
\cR _1\dot{u}_1& \cR _2\dot{u}_2 & \cR _3\dot{u}_3+R &\cdots & \cR _n \dot{u}_{n-1}\\
 &\vdots & & \ddots & \vdots\\
\cR _1\dot{u}_1& \cR _2\dot{u}_2 & \cR _3\dot{u}_3 &\cdots & \cR _n \dot{u}_{n-1}+R\\
\end{pmatrix}\;,
$$
where $S=\prod_{j=1}^{n-1}(\cR +H+u_i)^2$ and $R=\cR H+\cR ^2-\beta$. The above matrix is written as $W+RI_{n-1}$,
where $I_{n-1}$ is the $(n-1)\times (n-1)$ identity matrix and $W$ has $n$ equal lines. This means that $W$ has
only two eigenvalues $\lambda_1$ and $\lambda_2$. The first one is $\lambda_1=\sum_{j=1}^{n-1}\cR _j\dot{u}_j,$
which is of multiplicity $1$ and the other one is $\lambda_2=0$ of multiplicity $n-2$.  In the particular case
where $\cR $ is a first integral of $\LVB n0$, the eigenvalue $\lambda_1$ reduces to zero (since $\dot{u}_n=0$).
This shows that, in that case,
$$
  \det(V)=\frac{R^{n-1}}{S}=\frac{(\cR H+\cR ^2-\beta)^{n-1}}{\prod_{i=1}^{n-1}(\cR +H-u_i)^2}\;.
$$
Therefore, what we need to show is that
$$
  \prod_{i=1}^{n-1}\frac{\cR H+\cR ^2-\beta}{(\cR +H-u_i)^2}=
\frac{\tilde{x}_1\tilde{x}_2\ldots
  \tilde{x}_n +\beta\tilde{x}_2\tilde{x}_3\ldots \tilde{x}_{n-1}}
{x_1x_2\ldots x_n +\beta x_2x_3\ldots x_{n-1}}\;.
$$
A comparison with the explicit formulas \eqref{eq:phi_expl_x} gives that
$$
\frac{x_2x_3\ldots x_{n-1}}{\tilde{x}_2\tilde{x}_3\ldots \tilde{x}_{n-1}}
\frac{(\cR H+\cR ^2-\beta)^{n-1}}{\prod_{i=1}^{n-1}(\cR +H-u_i)^2}=
\frac{\cR H+\cR ^2-\beta}{(\cR +H-u_1)(\cR +H-u_{n-1})}\;.
$$
To complete the proof, it remains to be shown that
$$
\frac{\tilde{x}_1\tilde{x}_n+\beta}{x_1x_n+\beta}=\frac{\cR H+\cR ^2-\beta}{(\cR +H-u_1)(\cR +H-u_{n-1})}\;.
$$
This can be done by substituting the formulas for $\tilde x_1$ and $\tilde x_n$, given in \eqref{eq:phi_expl_x}.
\end{proof}

In order to preserve the Poisson structure, one needs stronger conditions on $\cR$, as given in the following
proposition:
\begin{prop}\label{prp:prs_pois_str}
  Suppose that $\cR$ is a rational function of the Hamiltonian~$H$. Then the map $\Phi_\cR$
  preserves the Poisson structure $\Pi_b$.
\end{prop}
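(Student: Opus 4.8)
The plan is to prove that the rational map $\phi_\cR$ underlying $\Phi_\cR$ is a Poisson map by exhibiting it, on a dense open subset, as the time-one flow of a genuine Hamiltonian vector field; since ``being a Poisson map'' is an algebraic condition on the coordinate brackets, agreement on a dense open set forces it everywhere. Concretely, it suffices to check $\phi_\cR^*\pb{x_i,x_j}=\pb{\phi_\cR^*x_i,\phi_\cR^*x_j}_b$, which in the separating coordinates $u_1,\dots,u_n$ of Section~\ref{par:solutions} becomes a statement about the brackets \eqref{eq:pb_in_u} and the explicit map \eqref{eq:phi_expl_u}.

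The key observation is that the hypothesis $\cR=\cR(H)$ means $\cR$ Poisson-commutes with $H=u_n$. By Proposition~\ref{prp:rat_preser}, for generic $P$ the point $\phi_\cR(P)$ lies on the integral curve $\gamma_P$ of $\X_H$ through $P$, and (in the case $\Delta_0\neq0$) equals $\gamma_P(\tau)$ with
$$\tau=-\frac{1}{\Delta_0}\ln\!\left(\frac{h+2r_0+\Delta_0}{h+2r_0-\Delta_0}\right),\qquad h=H(P),\ r_0=\cR(P),\ \Delta_0^2=h^2+4\beta.$$
When $\cR$ is a function of $H$ alone, $r_0=\cR(h)$ and $\Delta_0$ depend only on $h$, so $\tau=\tau(H(P))$ is constant on the level sets of $H$. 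Choosing a primitive $g$ with $g'=\tau$ and setting $Z:=\X_{g(H)}=\tau(H)\,\X_H$, I note that $Z[H]=\pb{H,g(H)}_b=0$, so $H$---and hence $\tau(H)$---is constant along each trajectory of $Z$. Consequently the $Z$-trajectory through $P$ is the reparametrized curve $s\mapsto\gamma_P(\tau(H(P))\,s)$, whose value at $s=1$ is exactly $\gamma_P(\tau(H(P)))=\phi_\cR(P)$. Thus $\phi_\cR$ coincides with the time-one flow $\Phi^Z_1$ of the Hamiltonian vector field $Z$ on a dense open set.

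Since the time-one flow of any Hamiltonian vector field preserves the Poisson bivector ($\mathcal L_Z\Pi_b=0$), $\Phi^Z_1$ is a Poisson map, and therefore $\phi_\cR$ is a Poisson map on a dense open subset. As both sides of $\phi_\cR^*\pb{x_i,x_j}=\pb{\phi_\cR^*x_i,\phi_\cR^*x_j}_b$ are rational functions, equality on a dense open set gives equality everywhere, which proves that $\Phi_\cR$ preserves $\Pi_b$.

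The main obstacle is the rigorous passage from the integral-curve statement of Proposition~\ref{prp:rat_preser} to the flow identification: the time $\tau$ and its primitive $g$ are transcendental, so the argument is necessarily geometric on a dense open set and only afterwards promoted to a rational identity (the case $\Delta_0=0$ is a proper closed subset and may be ignored). A purely computational alternative avoids $g$ altogether: in the $u$-coordinates one writes $\tilde u_i=f(u_i,H)$ with $f=(u\cR-\beta)/D$ and $D=\cR+H-u$, computes $f_u=R/D^2$ where $R=\cR^2+\cR H-\beta$, and verifies directly, using the Leibniz rule together with $\pb{u_i,H}_b=u_i(H-u_i)+\beta$ and $\pb{\cR,\cdot}_b=\cR'\pb{H,\cdot}_b$, the two identities $\pb{\tilde u_i,\tilde u_j}_b=\tilde u_i(\tilde u_j-\tilde u_i)$ for $i<j<n$ and $\pb{\tilde u_i,H}_b=\tilde u_i(H-\tilde u_i)+\beta$; this route is elementary but requires some algebraic bookkeeping.
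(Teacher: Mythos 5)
Your ``purely computational alternative'' in the last paragraph is, in fact, exactly the paper's proof: the paper notes that $\tilde u_i$ depends only on $u_i$ and $H$, records the two nonzero partials $\tilde u_{i,i}$ and $\tilde u_{i,n}$ (formulas \eqref{eq:partial_tu}), and verifies the two identities $\pb{\tilde{u}_i,\tilde{u}_j}_b=\tilde{u}_i(\tilde{u}_j-\tilde{u}_i)$ and $\pb{\tilde{u}_\ell,H}_b=\tilde{u}_\ell(H-\tilde{u}_\ell)+\beta$ by direct substitution. Your outline of that computation is correct and would carry the proposition on its own once the bookkeeping is done.

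Your primary, flow-theoretic route is genuinely different and conceptually appealing, but as written it has a gap at the step ``$\phi_\cR$ coincides with the time-one flow $\Phi^Z_1$ on a dense open set.'' Two things can fail in the real category. First, $\tau$ need not be real: the ratio inside the logarithm is negative precisely where $\cR^2+\cR H-\beta<0$ (this is the quantity $R$ from the proof of Proposition \ref{prp:prs_form}); already for $\cR$ a nonzero constant $c$ this happens on a nonempty open region of phase space (where $c^2+cH-\beta<0$ and $H^2+4\beta>0$), so there $\phi_\cR(P)$ is reached only at a complex time and no real Hamiltonian flow realizes the map. Second, even where $\tau$ is real, the integral curve $\gamma_P$ is defined only on $\bbR$ minus a discrete set and may blow up at some $t_0\in(0,\tau)$; then $\Phi^Z_1$ is undefined at $P$ even though $\gamma_P(\tau)$ makes sense via the closed formula \eqref{eq:u_solution}, and the standard argument ``$\mathcal L_Z\Pi_b=0$ integrated over $s\in[0,1]$'' has nothing to integrate. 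Both defects are repairable --- the clean fix is to complexify, transport along a path in $\bbC$ from $0$ to $\tau$ avoiding the discrete singular set of $\gamma_P$ (local holomorphic Hamiltonian flows are Poisson), and then observe that for the final step a single nonempty open set where the bracket identities hold suffices, since both sides are rational; density is not needed. But some such repair must be supplied explicitly: without it the geometric argument does not stand on its own, and it is the computation you relegate to an alternative that actually proves the statement.
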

\begin{proof}
We give the proof using the coordinates $u_i$ (see Section \ref{par:solutions}, in particular the formulas
(\ref{eq:pb_in_u}) for the Poisson structure in terms of these coordinates). According to (\ref{eq:pb_in_u}) we
need to show that $\pb{\tilde{u}_i,\tilde{u}_j}_b=\tilde{u}_i(\tilde{u}_j-\tilde{u}_i)$ and that
$\pb{\tilde{u}_\ell,\tilde{u}_n}_b=\tilde{u}_\ell(\tilde{u}_n-\tilde{u}_\ell)+\beta$ for $1\leqs i<j<n$ and $1\leqs\ell<n$.

According to \eqref{eq:phi_expl_u} $\tilde u_i$ depends only on $u_i$ and $H$,
so we give only the non-zero derivatives $\tilde{u}_{i,i}$ and $\tilde{u}_{i,n}$
for $i<n$. They are (for $\tilde{u}_{i,i}$ see the second formula in \eqref{eq:part_tilu}):
\begin{gather}
\label{eq:partial_tu}
\tilde u_{i,i}=\frac{\cR H+\cR^2-\beta}{(\cR+H-u_i)^2}\quad \text{and} \quad
\tilde u_{i,n}=\frac{\cR_Hu_iH-\cR_Hu_i^2+\cR_H\beta-\cR u_i+\beta}{(\cR+H-u_i)^2}\;,
\end{gather}
where $\cR_H=\frac{\mathrm{d}\cR}{\mathrm{d}H}$. We then have, for all $i,j$ and $\ell$ as above,
\begin{eqnarray*}
  \pb{\tilde{u}_i,\tilde{u}_j}_b&=&\tilde u_{i,i}\tilde u_{j,j}\pb{u_i,u_j}_b
  +\tilde u_{i,i}\tilde u_{j,n}\pb{u_i,u_n}_b-
  \tilde u_{i,n}\tilde u_{j,j}\pb{u_j,u_n}_b\\
    &=&\frac{(u_i-u_j)(\beta-\cR u_i)(\cR H+\cR^2-\beta)}{(\cR+H-u_i)^2(\cR+H-u_j)}=
  \tilde{u}_i(\tilde{u}_j-\tilde{u}_i)\;,\\
  \pb{\tilde{u}_\ell,H}_b&=&\frac{(\cR H+\cR^2-\beta)(Hu_\ell-u_\ell^2+\beta)^2}{(\cR+H-u_\ell)^2}=
    \tilde{u}_\ell(H-\tilde{u}_\ell)+\beta\;,
\end{eqnarray*}
which establishes the required equalities, since $\tilde u_n=u_n=H$.
\end{proof}

The above propositions \ref{prp:rat_preser}--\ref{prp:prs_pois_str} lead to the following theorem.
\begin{thm}\label{thm:phi_int}
  Let $\cR $ be a rational function, depending on the Hamiltonian $H$ only. Then the discrete map $\Phi_\cR $ of
  $\LVB n0$ has the following properties:
\begin{enumerate}
  \item[(1)] It is birational;
  \item[(2)] It preserves the Poisson structure $\Pi_b$;
  \item[(3)] It is measure preserving: it preserves the volume form $\Omega_b$;
  \item[(4)] It is Liouville integrable with $H$ and the rational functions $F_\ell^b$ as invariants;
  \item[(5)] It is superintegrable with  $H$ and the rational functions $F_\ell^b$ and $G_\ell^b$ as invariants.
\end{enumerate}
\end{thm}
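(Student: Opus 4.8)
The plan is to assemble Theorem \ref{thm:phi_int} from the propositions established earlier in this subsection, the crucial initial observation being that a rational function $\cR$ of the Hamiltonian $H$ alone is automatically a first integral of $\LVB n0$, since $H$ is itself a first integral (being the Hamiltonian). This single remark places us under the hypotheses of every preceding result, so that the proof becomes a matter of bookkeeping rather than new computation.

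With this observation, items (1)--(3) are immediate. Item (1), that $\Phi_\cR$ is birational, follows from Proposition \ref{prp:bir}, whose hypothesis ``$\cR$ is a first integral'' is now met. Item (3), preservation of the volume form $\Omega_b$, follows from Proposition \ref{prp:prs_form}, under the same hypothesis. Item (2), preservation of the Poisson structure $\Pi_b$, is exactly Proposition \ref{prp:prs_pois_str}, which requires the stronger assumption that $\cR$ depend on $H$ only---and this is precisely what we have assumed here.

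For items (4) and (5) I would invoke Proposition \ref{prp:rat_preser}, which guarantees that \emph{every} first integral of $\LVB n0$ is an invariant of $\Phi_\cR$; in particular $H$ and all the rational functions $F_\ell^b$ and $G_\ell^b$ are invariants. Their functional independence and involutivity are then inherited directly from the continuous system: by Theorem \ref{thm:rat_int}, the functions $F_\ell^b$ (with $\ell=1,\dots,\fr{n+1}2$, a list that already includes $H$) are independent and pairwise in involution with respect to $\Pi_b$, while together with the $G_\ell^b$ they provide $n-1$ independent first integrals. It remains only to match these against the definitions of the preliminaries. For Liouville integrability, the rank of $\Pi_b$ is $2\fr n2$ by Proposition \ref{prp:poisson_deformation}, so $r=\fr n2$ and one needs $n-r=\fr{n+1}2$ independent invariants in involution; this is exactly the number of $F_\ell^b$, and since $\Phi_\cR$ preserves both these invariants and the Poisson structure, item (4) follows. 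For superintegrability one needs $n-1$ independent invariants together with measure preservation: the invariants are the $F_\ell^b$ and $G_\ell^b$, and measure preservation is item (3), yielding item (5).

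I do not expect any genuine technical obstacle, since the hard analytical work was already carried out in Propositions \ref{prp:rat_preser}--\ref{prp:prs_pois_str}. The only point requiring a little care is the numerology---checking that the counts of independent invariants match the definitions of Liouville integrability and superintegrability for discrete maps---together with the remark that functional independence of the \emph{deformed} integrals transfers from the continuous setting because, the integrals being rational, independence can be verified at a single point, exactly as argued in Section \ref{par:rat_integrability}.
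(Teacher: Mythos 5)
Your proposal is correct and follows exactly the route the paper takes: the paper offers no separate proof of Theorem \ref{thm:phi_int} beyond the remark that Propositions \ref{prp:rat_preser}--\ref{prp:prs_pois_str} imply it, and your key observation---that a rational function of $H$ alone is automatically a first integral of $\LVB n0$, so the hypotheses of all the preceding propositions are satisfied---together with the counting of invariants carried over from Theorem \ref{thm:rat_int}, is precisely the intended argument.
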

Under the weaker hypothesis that $\cR$ depends only on the invariants of $\LVB n0$, items (1), (3) and (5) still
hold, but (2) and (4) may not hold.

\subsection{Kahan discretization of $\LVB n0$}
In this subsection we consider the Kahan discretization of the systems $\LVB n0$. We show that the Kahan map is of
the form $\Phi_\cR $, for a specific choice of the rational function $\cR $, depending on the Hamiltonian $H$ only,
and so all integrability properties that we have seen in Theorem \ref{thm:phi_int} hold for the Kahan map as well.

We first define the Kahan map for $\LVB n0$. Since the Kahan discretization commutes with any linear change of
variables, we can do the Kahan discretization in the $u_i$ coordinates, instead of the $x_i$ coordinates, i.e.,
apply it on the vector field (\ref{eq:LV0_u}). Following the recipe \cite{Rein_kah}, we obtain for the Kahan
discretization with step size $2\epsilon$ the following system of equations:
\begin{gather}\label{eq:kahan_impl_u}
  \bar{u}_i-u_i=\e u_i\left(H-\bar{u}_i\right)+\e\bar{u}_i\left(H-u_i\right)+2\beta, \quad i=1,2,\ldots, n-1,
\end{gather}
where we have used that $H=u_n$. Since $H$ is a linear first integral of $\LVB n0$, it is an invariant for the
Kahan map. The system \eqref{eq:kahan_impl_u} is diagonal with solution
\begin{equation}
\label{eq:kahan_expl_u}
\bar{u}_i=\frac{(1+\epsilon H)u_i+2\e\beta}{1-\e H+2\e u_i}\;, \quad i=1,2, \ldots, n-1\;,
\end{equation}
and $\bar{u}_n=u_n$. This defines the Kahan map. Comparing the formulas \eqref{eq:kahan_expl_u} and
\eqref{eq:phi_expl_u} it is clear that the Kahan map is of the form $\Phi_\cR$, with
\begin{equation}\label{eq:R_for_Kahan}
  \cR =-\frac{1+\e H}{2\e}\;.
\end{equation}%
Notice that $\cR$ depends on $H$ only. Therefore, we get by Theorem \ref{thm:phi_int} the following results on the
Kahan discretization of $\LVB n0$, which generalize the results on the integrability of the Kahan
discretization of $\LV n0$, which were first established in \cite{KKQTV}:
\begin{thm}\label{thm:kahan_int}
The Kahan map of $\LVB n0$ has the following properties:
\begin{enumerate}
  \item[(1)] It is birational;
  \item[(2)] It preserves the Poisson structure $\Pi_b$;
  \item[(3)] It is measure preserving: it preserves the volume form $\Omega_b$;
  \item[(4)] It is Liouville integrable with $H$ and the rational functions $F_\ell^b$ as invariants;
  \item[(5)] It is superintegrable with  $H$ and the rational functions $F_\ell^b$ and $G_\ell^b$ as invariants.
\end{enumerate}
\end{thm}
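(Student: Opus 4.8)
The plan is to deduce the theorem directly from Theorem~\ref{thm:phi_int} by recognizing the Kahan map as a member of the family $\Phi_\cR$ with $\cR$ depending on the Hamiltonian $H$ only. All of the analytic content---preservation of the Poisson structure $\Pi_b$, preservation of the volume form $\Omega_b$, and the involutivity and independence of the invariants $F_\ell^b$ and $G_\ell^b$---has already been established in Theorem~\ref{thm:phi_int} under the hypothesis that $\cR$ is a rational function of $H$ alone. Hence the only thing that remains is to identify the Kahan map with an appropriate $\Phi_\cR$.

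First I would carry out this identification in the $u_i$ coordinates, where both maps take their simplest form. The Kahan map is given explicitly by (\ref{eq:kahan_expl_u}), while $\Phi_\cR$ is given by (\ref{eq:phi_expl_u}); since the Kahan discretization commutes with linear changes of coordinates, it suffices to compare these two formulas. Substituting the candidate $\cR=-\frac{1+\e H}{2\e}$ of (\ref{eq:R_for_Kahan}) into (\ref{eq:phi_expl_u}) and clearing the common factor $-\tfrac{1}{2\e}$ from numerator and denominator returns exactly (\ref{eq:kahan_expl_u}) for $i=1,\dots,n-1$; the remaining component $\bar u_n=u_n=H$ matches $\tilde u_n=u_n$ trivially. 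This shows that the Kahan map equals $\Phi_\cR$ for this choice of $\cR$.

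The second, and genuinely decisive, observation is that this $\cR$ is a rational (in fact affine) function of $H$ only, so that generically $\cR\neq u_i-H$ and $\Phi_\cR$ is well defined. Because $\cR$ depends on $H$ alone, it satisfies the \emph{stronger} hypothesis of Theorem~\ref{thm:phi_int} (and of Proposition~\ref{prp:prs_pois_str}), not merely the weaker one of being a first integral of $\LVB n0$. I would stress this point, since it is exactly what guarantees items (2) and (4)---preservation of $\Pi_b$ and Liouville integrability---and not only items (1), (3), (5), which hold already under the weaker assumption. With the identification in hand, all five properties of the Kahan map follow at once by applying Theorem~\ref{thm:phi_int}.

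The main obstacle here is essentially absent: the difficulty has been front-loaded into Theorem~\ref{thm:phi_int}, whose proof combines Propositions~\ref{prp:bir}, \ref{prp:prs_form} and \ref{prp:prs_pois_str}. The only subtlety to watch is bookkeeping in the algebraic comparison---making sure the overall sign and the factor $2\e$ are distributed correctly so that (\ref{eq:phi_expl_u}) really reproduces (\ref{eq:kahan_expl_u}) rather than its reciprocal or a sign-flipped variant---together with confirming that $\cR$ lands in the admissible class ``function of $H$ only'' rather than the broader class ``first integral of $\LVB n0$,'' for which (2) and (4) could fail.
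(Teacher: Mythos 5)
Your proposal is correct and is essentially identical to the paper's own argument: the paper also writes the Kahan discretization in the $u_i$ coordinates, matches \eqref{eq:kahan_expl_u} against \eqref{eq:phi_expl_u} to identify $\cR=-\frac{1+\e H}{2\e}$, observes that this $\cR$ depends on $H$ only, and then invokes Theorem~\ref{thm:phi_int}. Your emphasis on why the stronger hypothesis (``function of $H$ only'' rather than merely ``first integral'') is needed for items (2) and (4) is exactly the right point to make.
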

As a byproduct of our analysis, we find that the Kahan map of $\LVB n0$ arises as the compatibility conditions of a
linear system. It would be interesting to see if there are other examples where the Kahan map is of this form, as
it links the Kahan map to isospectrality, so it may have non-trivial applications to the study of the integrability
of the Kahan map of other integrable systems.

\end{document}